\documentclass[11pt, oneside]{article}   	
\usepackage{geometry}                		
\usepackage{amsmath,amsthm, amssymb}
\usepackage{thmtools}
\usepackage{graphicx}
\usepackage{enumerate}
\usepackage{enumitem}
\usepackage{float}
\usepackage{amsfonts}
\usepackage{latexsym,lineno}
\usepackage{listings}
\usepackage{hyperref}
\usepackage{tikz}
\usepackage{tikz-cd}
\usepackage{tikzit}
\usepackage[normalem]{ulem}
\usepackage{cite}
\usepackage{cleveref}
\usetikzlibrary{positioning}
\tikzstyle{Vertex}=[fill={rgb,255: red,142; green,142; blue,142}, draw=black, shape=rectangle]
\tikzstyle{white vertex}=[fill=white, draw=black, shape=circle]
\tikzstyle{grey vertex}=[fill={rgb,255: red,142; green,142; blue,142}, draw=black, shape=circle]
\tikzstyle{black vertex}=[fill=black, draw=black, shape=circle]
\tikzstyle{small black}=[fill=black, draw=black, shape=circle, scale=.5pt, tikzit draw=white]
\tikzstyle{black empty}=[fill=white, draw=black, shape=circle]
\tikzstyle{small black empty}=[fill=white, draw=black, shape=circle, scale=.5pt, tikzit draw={rgb,255: red,156; green,156; blue,156}, tikzit fill=white]

\tikzstyle{double blue edge}=[->, draw={rgb,255: red,6; green,118; blue,255}, latex-latex]
\tikzstyle{automorphism}=[->, -latex, draw=red]
\tikzstyle{new edge style 0}=[->, draw={rgb,255: red,6; green,118; blue,255}, -latex]
\tikzstyle{Edge}=[-]
\tikzstyle{Arc}=[->, -latex]
\tikzstyle{blue_edge}=[
  -, line width=1.2pt,
  draw={rgb,255: red,6; green,118; blue,255},
  postaction={decorate},
  decoration={
    markings,
    mark=at position .5 with {\node[fill=none, text={rgb,255: red,6; green,118; blue,255}, inner sep=1pt] {\tiny $\bullet$};}
  }
]
\tikzstyle{edge red}=[-, line width=1.2pt, fill=none, draw=red]
\tikzstyle{green edge}=[-, line width=1.2pt, draw={rgb,255: red,0; green,168; blue,0}, postaction={decorate},decoration={
    markings,
    mark=at position .42 with {\node[fill=none, text={rgb,255: red,0; green,168; blue,0}, inner sep=1pt] {\tiny$\bullet $};},
    mark=at position .58 with {\node[fill=none, text={rgb,255: red,0; green,168; blue,0}, inner sep=1pt] {\tiny$\bullet $};},
  }]
\tikzstyle{big arc}=[->, line width=1.2pt, -latex]
\tikzstyle{shape_border}=[-, fill={rgb,255: red,211; green,211; blue,211}, line width=0.5pt]
\tikzstyle{dashed_edge}=[-, dashed, thick]

\usetikzlibrary{decorations.markings}
\tikzset{negated/.style={
        decoration={markings, mark= at position 0.5 with {\node[transform shape] (tempnode) {$\backslash$};}}
                ,postaction={decorate}}}

\newcommand{\bb}{\mathbb}

\newcommand{\G}{\Gamma}
\newcommand{\Ra}{\Rightarrow}
\newcommand{\La}{\Leftarrow}
\newcommand{\ra}{\rightarrow}
\newcommand{\la}{\leftarrow}

\newcommand{\lra}{\leftrightarrow}
\newcommand{\da}{\mathclose{\downarrow}}

\newcommand{\set}[1]{\left\{#1\right\}}
\renewcommand{\cal}[1]{\mathcal{#1}}

\newcommand{\eps}{\epsilon}

\newcommand{\adj}{\sim}
\newcommand{\stack}[2]{\stackrel{#1}{#2}}
\newcommand{\comment}[1]{}
\newcommand{\arc}[1]{ \xrightarrow{#1}}
\newcommand{\locsim}[1]{ \stackrel{#1}{\equiv}}
\newcommand{\rem}{\mathop{Rem}}
\newcommand{\remgraph}{\mathop{RemGraph}}
\newcommand{\tree}{\mathrm{tree}} %\mathrm{tree}

\newcommand{\tra}[1][]{\tikz[<-]{
\draw[-{Triangle[angle=60:4.5pt]}, line width=1.2pt] (0,0) -- (1em,0);
\node[above] at (0.6em,0) {\scriptsize #1};
}}
\newcommand{\tla}[1][]{\tikz[<-]{
\draw[{Triangle[angle=60:4.5pt]}-, line width=1.2pt] (0,0) -- (1em,0);
\node[above] at (0.6em,0) {\scriptsize #1};
}}
\newcommand{\longtra}[1][]{\tikz[<-]{
\draw[-{Triangle[angle=60:4.5pt]}, line width=1.2pt] (0,0) -- (1.5em,0);
\node[above] at (0.6em,0) {\scriptsize #1};
}}
\newcommand{\longtla}[1][]{\tikz[<-]{
\draw[{Triangle[angle=60:4.5pt]}-, line width=1.2pt] (0,0) -- (1.5em,0);
\node[above] at (0.6em,0) {\scriptsize #1};
}}

\newtheorem{theorem}{Theorem}

\newtheorem{lemma}[theorem]{Lemma}
\newtheorem{corollary}[theorem]{Corollary}

\newtheorem{proposition}[theorem]{Proposition}

\newtheorem{definition}[theorem]{Definition}

\title{Graph-based automata}
\author{Cyril Pujol }
\date{\today}

\begin{document}
	
	\maketitle
	
	\begin{abstract}
	
		We study graph-based automata: nondeterministic finite automata obtained from edge-colored or oriented graphs by taking every vertex as both initial and accepting, and every edge as a pair of opposite transitions. The language of these automata corresponds to the set of edge-colored or oriented paths mapping to their corresponding graphs. We develop an analogous notion for trees and characterise the languages recognised by these models.
		For tree languages we prove the existence of a unique size-minimal graph and, more generally, a homomorphism-minimal graph for both word and tree languages using duality methods.
		
		In order to further motivate these models, we showcase a few results at the intersection between graph theory and automata theory:
		We relate graph-based automata to reversible automata, give a decomposition of graph-based languages into reversible languages, and introduce the remanent language of an undirected graph as the intersection of all its orientations. This remanent language captures structural information on the graph such as chromatic number.
	\end{abstract}
	
	\section{Introduction}
		A graph is a system of states together with a system of symmetric relations between them. An automaton is a transition system between states. Therefore, they are conceptually very similar structures. However, automata theory and graph theory have led these objects to evolve in very different directions. The heart of automata theory has created many strong equivalences between languages, monoids, regular expressions and logic, which allow regular languages to be extremely well understood and easy to manipulate. On the other hand, a big part of graph theory consists of considering \textsf{NP}-hard problems that need to be restricted to subclasses to be manageable. For instance, one may restrict to graphs of bounded treewidth or to perfect graphs when computing the chromatic number. It is rather uncommon to obtain characterisations that are as well behaved as regular languages. In this work, we try to take advantage of the tools of automata theory to get structural information about a graph.
		
		A graph morphism is a function between the vertices of two graphs which preserves adjacency. It is a central notion in graph theory, as it encapsulates many parameters, such as the chromatic number and the clique number. Given a fixed graph $G$, by counting the number of homomorphisms from any graph $H$ to $G$, we obtain an infinite vector known as the homomorphism profile of $G$. In \cite{L67} Lovasz proved that two graphs have the same homomorphism profile if and only if they are isomorphic.
		
		By looking only at the existence of a homomorphism instead of the exact count, we obtain the set of graphs $H$ that map into $G$. Two graphs $G$,$G'$ lead to the same set if and only if they are homomorphically equivalent, meaning that there exists a morphism from $G$ to $G'$ and a morphism from $G'$ to $G$. In each homomorphism-equivalence class of graphs, there exists a unique minimal graph (in terms of the number of vertices and edges), called the core, to which any other one maps surjectively. 
		
		By viewing the set of graphs that map into $G$ as a language associated to $G$, it is natural to draw a parallel to the existence of a minimal automaton in the context of deterministic finite automata. This similarity is accentuated by the fact that an automaton accepting a word $w$ corresponds to the existence of a function from the word $w$, seen as a path, to the states of the automaton, in a way that is coherent with the transition labels. Therefore, it is essentially a morphism from the word to the automaton.
		\vspace{10pt}
		
		We introduce the notion of graph-based automata, which are nondeterministic finite automata whose language is the set of paths mapping to an edge-colored or oriented graph. In terms of automata, this is equivalent to asking that every state be initial and terminal, and that every transition be bidirectional (in the edge-colored case), or part of a pair of opposite transitions (in the oriented case). This notion should not be confused with graph automata appearing in the literature \cite{BK08}, where automata are used to process graph-structured inputs. 
		
		We also introduce the notion of graph-based tree automata, the analogous model where paths are replaced with trees. 
		Those notions have been used before in graph theory, in particular, in works related to homomorphism, either explicitly \cite{EPTT17} %Regular families of forests, antichains and duality pairs of relational structures
		or implicitly \cite{NP25}, through the study of ``bounded path''. % On core of categorical product of (di)graphs
		But they were used only as tools and not objects of interest. 
		
		From the point of view of automata theory, to the best of our knowledge, the class of graph-based automata has not been studied as a subject in its own right. However, its defining features are very natural: asking that every state be both initial and accepting is standard, and so are bidirectional transitions. For this reason, we do not claim novelty for the model itself, but rather for its explicit formulation and study. 
		One of the closest related notion appears to be that of reversible automata and reversible languages: automata where the transition function is injective. A link between the two notions is established in \Cref{prop:reversible-proper}.

\vspace{5pt}

		In this work, we characterise the languages associated with graph-based (tree) automata and study, for a given language, the possible minimal graph-based (tree) automata associated with it. For graph-based tree automata, we prove the existence of a unique minimal automaton with respect to size. This is particularly surprising since the related automata are highly nondeterministic. There also exists a minimal automaton with respect to the homomorphism order, this is a consequence of a very general duality result by Erdös, Pálvölgyib, Tardif and Tardos \cite{EPTT17}. %Regular families of forests, antichains and duality pairs of relational structures
We give a different proof of that result, adapted to our context.
		For graph-based automata, there is no minimal graph with respect to size, but a unique homomorphism-minimal graph does exist, as a direct consequence of a well-known result by Christian Carrez \cite{ADN92}.
%A note about minimal non-deterministic automata 

	Finally, we give some applications at the intersection of automata and graph theory: we give a way to disambiguate a graph-based language by lifting it into a reversible language, and introduce the notion of the remanent language of an undirected graph, which is the language of oriented paths that can map to any orientation of the graph. This language encapsulates considerable structural information about the graph, such as the chromatic number.

	\section{Graph-based automata: Preliminaries}
	\subsection{Definition and notations}
	A graph $G$ is a pair $(V,E)$ of vertices and edges. A graph is said to be \emph{oriented} if each edge has a beginning and an end, it is said to be \emph{edge-colored} if each edge is labelled by a set of \emph{colors}. 
	Except if explicitly mentioned, we allow loops and multi-edges, and graphs are finite.
	  
	In this paper, we will mostly consider graphs that are oriented or edge-colored (or both). To have a unified vocabulary, we may refer to them as \emph{decorated graphs}, or simply as graphs in a context where the additional edge information is not the main focus. Conversely, we say that a graph is \emph{plain} to emphasise the fact that it is neither oriented nor edge-colored.
	
	In a plain graph, we use $u \adj v$ to write that $u$ and $v$ are adjacent. For a decorated graph, we write $u \arc c v$ if $u$ is connected to $v$ via the color $c$. If the graph is not oriented, then $u \arc c v$ is equivalent to $v \arc c u$, and if the graph is uncoloured, we may interpret the graph as having $c = 0$. We write $N(u)$ for the neighbours of $u$. In a decorated graph, we write $N^+_c(u)$ for the out-neighbours of $u$ via color $c$, and $N^-_c$ for the in-neighbours, then $N(u) = \bigcup_c N_c^-(u) \cup N_c^+(u)$.
	
	A \emph{graph morphism} or \emph{homomorphism} is a function between the vertex sets of two graphs that preserves adjacency. In the context of decorated graphs, a homomorphism must preserve colours and orientation: $G \xrightarrow \phi H$ is a homomorphism if and only if $\forall u,v \in V(G), u \arc c v \Ra \phi(u) \arc c \phi(v)$. We write $G \ra H$ to indicate that there exists a homomorphism from $G$ to $H$. 
	In the context of graph homomorphisms, a \emph{duality pair} (defined by Nešetřil and Pultr in \cite{NP78})  is a pair of graphs, or more generally a pair of sets of graphs $\mathcal{A},\mathcal{B}$, such that for any graph $X$, $$ \exists A \in \mathcal{A}, X \ra A \iff  \forall B \in \mathcal{B}, B \not \ra X$$
	\vspace{10pt}
	
	Consider a decorated graph $G = (V(G),E(G))$ with $E(G) \subset V(G) \times V(G) \times C$ ($C$ is the set of colors). We construct the automaton based on $G$ as the following nondeterministic finite automaton (NFA; see \Cref{fig:ex_graph_automata}):
	 
	$A(G) = (Q,\Sigma,\delta,I,F)$ with : 
	\begin{itemize}[itemsep=0pt]
		\item $Q = I = F = V(G)$
		\item $\Sigma = \set{\tra[c] | c \in C}\cup\set{\tla[c] | c \in C}$. In particular, if $G$ is only oriented, we may use $\Sigma = \set{\tra, \tla}$ and if $G$ is only edge-colored, $\tra[c]$ and $\tla[c]$ are equivalent and we may use $\Sigma = C$
		\item $\delta(u,\tra[c]) = \set{ v | u \arc c v \text{ in G}}$, \ \ $\delta(v,\tla[c]) = \set{ u | u \arc c v \text{ in G}}$
	\end{itemize}

\begin{figure}[ht]
		\centering
			\begin{tikzpicture}
	\begin{pgfonlayer}{nodelayer}
		\node [style=black empty] (0) at (-1.5, 4) {};
		\node [style=black empty] (1) at (1.5, 4) {};
		\node [style=black empty] (2) at (0, 1.5) {};
		\node [style=black empty] (3) at (-1.5, -1) {};
		\node [style=black empty] (4) at (0, -3.5) {};
		\node [style=black empty] (5) at (1.5, -1) {};
	\end{pgfonlayer}
	\begin{pgfonlayer}{edgelayer}
		\draw [style=edge red] (0) to (1);
		\draw [style={blue_edge}] (0) to (2);
		\draw [style={blue_edge}] (2) to (1);
		\draw [style={blue_edge}, in=180, out=90, loop] (0) to ();
		\draw [style=edge red, in=90, out=0, loop] (1) to ();
		\draw [style=new edge style 0, bend left=15, postaction={decorate}, decoration={  markings, mark=at position .5 with {\node[fill={rgb,255: red,6; green,118; blue,255}, inner sep=1pt] {};}}] (4) to (3);
		\draw [style=new edge style 0, bend left=15, postaction={decorate}, decoration={  markings, mark=at position .5 with {\node[fill={rgb,255: red,6; green,118; blue,255}, inner sep=1pt] {};}}] (3) to (4);
		\draw [style=new edge style 0, bend left=15, postaction={decorate}, decoration={  markings, mark=at position .5 with {\node[fill={rgb,255: red,6; green,118; blue,255}, inner sep=1pt] {};}}] (4) to (5);
		\draw [style=new edge style 0, bend left=15, postaction={decorate}, decoration={  markings, mark=at position .5 with {\node[fill={rgb,255: red,6; green,118; blue,255}, inner sep=1pt] {};}}] (5) to (4);
		\draw [style=new edge style 0, in=-180, out=90, loop, postaction={decorate}, decoration={  markings, mark=at position .5 with {\node[fill={rgb,255: red,6; green,118; blue,255}, inner sep=1pt] {};}}] (3) to ();
		\draw [style=automorphism, bend right=15] (3) to (5);
		\draw [style=automorphism, bend left=345] (5) to (3);
		\draw [style=automorphism, in=90, out=0, loop] (5) to ();
	\end{pgfonlayer}
\end{tikzpicture}
			\hspace{15pt}
			\begin{tikzpicture}
	\begin{pgfonlayer}{nodelayer}
		\node [style=black empty] (0) at (-2, 4) {};
		\node [style=black empty] (1) at (1, 4) {};
		\node [style=black empty] (2) at (-0.5, 1.5) {};
		\node [style=black empty] (3) at (-2, -1) {};
		\node [style=black empty] (4) at (1, -1) {};
		\node [style=black empty] (5) at (-0.5, -3.5) {};
		\node [style=none] (6) at (1, -2.5) {$\longtra$};
		\node [style=none] (7) at (-2, -2.5) {$\longtra$};
		\node [style=none] (8) at (-0.5, -0.25) {$\longtra$};
		\node [style=none] (9) at (-0.5, -1.5) {$\longtla$};
		\node [style=none] (10) at (-1, -2) {$\longtla$};
		\node [style=none] (11) at (0, -2) {$\longtla$};
	\end{pgfonlayer}
	\begin{pgfonlayer}{edgelayer}
		\draw [style=Arc] (0) to (2);
		\draw [style=Arc] (2) to (1);
		\draw [style=Arc] (0) to (1);
		\draw [style=Arc, bend right=15] (3) to (5);
		\draw [style=Arc, bend right=15] (5) to (4);
		\draw [style=Arc, bend left=15] (3) to (4);
		\draw [style=Arc, bend right=15] (5) to (3);
		\draw [style=Arc, bend right=15] (4) to (5);
		\draw [style=Arc, bend left=15] (4) to (3);
	\end{pgfonlayer}
\end{tikzpicture}
		\caption{A $2$-edge-colored graph (left) and an oriented graph (right) above their respective automata. Indicators of initial and final states are omitted. Blue edges are marked with a dot for added readability}
		\label{fig:ex_graph_automata}
\end{figure}

We write $L(G)$ for the language of the automaton based on $G$.
Note that words over the alphabet $\set{\tra[c] | c \in C}\cup\set{\tla[c] | c \in C}$ can be interpreted as oriented and/or edge-labelled paths. Conversely, any path corresponds to two words over this alphabet, depending on which endpoint one starts with.

\begin{figure}[t]
		\centering
			\begin{tikzpicture}
	\begin{pgfonlayer}{nodelayer}
		\node [style=black empty] (0) at (-2, 2) {};
		\node [style=black empty] (1) at (0, 2) {};
		\node [style=black empty] (2) at (-2, 0) {};
		\node [style=black empty] (3) at (0, 0) {};
		\node [style=black empty] (4) at (0, 2) {};
		\node [style=black empty] (5) at (2, 2) {};
		\node [style=black empty] (6) at (2, 0) {};
		\node [style=black empty] (7) at (2, -2) {};
		\node [style=black empty] (8) at (0, -2) {};
		\node [style=black empty] (9) at (-2, -2) {};
		\node [style=black empty] (10) at (4, 2) {};
		\node [style=black empty] (11) at (4, 0) {};
		\node [style=black empty] (12) at (4, -2) {};
		\node [style=black empty] (13) at (6, 2) {};
		\node [style=black empty] (14) at (6, 0) {};
		\node [style=black empty] (15) at (6, -2) {};
		\node [style=small black empty] (16) at (8, 1) {};
		\node [style=small black empty] (17) at (9, -1) {};
		\node [style=small black empty] (18) at (10, 1) {};
		\node [style=small black empty] (19) at (11, -1) {};
		\node [style=small black empty] (20) at (12, 1) {};
		\node [style=small black empty] (21) at (13, -1) {};
		\node [style=small black empty] (22) at (14, 1) {};
		\node [style=small black empty] (23) at (15, -1) {};
		\node [style=none] (24) at (-2.75, 2) {\tiny 3};
		\node [style=none] (25) at (-2.75, 0) {\tiny 2};
		\node [style=none] (26) at (-2.75, -2) {\tiny 1};
		\node [style=none] (27) at (-2, -2.75) {\tiny A};
		\node [style=none] (28) at (0, -2.75) {\tiny B};
		\node [style=none] (29) at (2, -2.75) {\tiny C};
		\node [style=none] (30) at (4, -2.75) {\tiny D};
		\node [style=none] (31) at (6, -2.75) {\tiny E};
	\end{pgfonlayer}
	\begin{pgfonlayer}{edgelayer}
		\draw [style={blue_edge}] (0) to (2);
		\draw [style=edge red] (9) to (8);
		\draw [style=edge red] (8) to (3);
		\draw [style=edge red] (3) to (6);
		\draw [style={blue_edge}] (2) to (9);
		\draw [style=green edge] (8) to (7);
		\draw [style={blue_edge}] (6) to (5);
		\draw [style={blue_edge}] (5) to (4);
		\draw [style=green edge] (2) to (3);
		\draw [style=green edge] (3) to (4);
		\draw [style=green edge] (4) to (0);
		\draw [style={blue_edge}] (6) to (7);
		\draw [style=green edge] (7) to (12);
		\draw [style=green edge] (14) to (15);
		\draw [style=green edge] (13) to (14);
		\draw [style={blue_edge}] (6) to (11);
		\draw [style={blue_edge}] (12) to (15);
		\draw [style={blue_edge}] (10) to (13);
		\draw [style=edge red] (5) to (10);
		\draw [style=edge red] (10) to (11);
		\draw [style=edge red] (11) to (12);
		\draw [style=edge red] (11) to (14);
		\draw [style=green edge] (16) to (17);
		\draw [style=green edge] (23) to (22);
		\draw [style=green edge] (20) to (21);
		\draw [style={blue_edge}] (17) to (18);
		\draw [style={blue_edge}] (19) to (20);
		\draw [style=edge red] (18) to (19);
		\draw [style=edge red] (21) to (22);
	\end{pgfonlayer}
\end{tikzpicture}
		\caption{A 3-edge colored graph and a path mapping to it, finding the correct mapping is left to the reader as a puzzle\protect\footnotemark. }
		\label{fig:puzzle}
\end{figure}

\begin{definition}
	If $a\in \set{\tra[c] | c \in C}\cup\set{\tla[c] | c \in C}$, let $a^{-1}$ be the arrow in opposite direction to $a$. We extend the notion for words with $\epsilon^{-1} = \epsilon$ (the empty word) and $(aw)^{-1} = w^{-1}a^{-1}$.
\end{definition}
For undirected graphs, the inverse of a word is equal to its mirror.

\subsection{Basic properties}

We start with a few basic properties of graph-based automata.
\begin{proposition}\label{prop:basic-prop}
	If $G$, $H$ are decorated graphs, then the following are true:
	\begin{enumerate}
		\item $L(G)$ is stable by inverse and factors
		\item $L(G)$ is, up to the path-word correspondence, the set of decorated paths that admit a homomorphism to $G$: $L(G) = \set{w \in \Sigma^* \mid \text{path}(w) \ra G} $
		\item $L(G) \cup L(H) = L(G \cup H)$
		\item $L(G) \cap L(H) = L(G \times H)$ where $\times$ is the tensor (also called categorical) product 
		\item If $H$ is a cover of $G$, then $L(G)=L(H)$ (see \Cref{def:double-cover})
	\end{enumerate}
\end{proposition}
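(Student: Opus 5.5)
I would prove item 2 first, since the others follow from it. A word $w = a_1 \cdots a_n$ is accepted by $A(G)$ if and only if there is a run $q_0, q_1, \dots, q_n$ of states with $q_i \in \delta(q_{i-1}, a_i)$; every state is initial and accepting, so no further condition is needed. By the definition of $\delta$, the condition $q_i \in \delta(q_{i-1}, \tra[c])$ means exactly $q_{i-1} \arc c q_i$ in $G$, and $q_i \in \delta(q_{i-1},\tla[c])$ means $q_i \arc c q_{i-1}$. Let $\text{path}(w)$ be the path on vertices $x_0,\dots,x_n$ whose $i$-th edge has the orientation and color of $a_i$. Then a run is precisely a map $x_i \mapsto q_i$ that preserves every edge, i.e.\ a homomorphism $\text{path}(w)\ra G$. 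The empty word corresponds to the single-vertex path, which maps to $G$ whenever $V(G)\neq\emptyset$.

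\emph{Item 1.} Reading $w^{-1}$ gives the same path traversed from the other endpoint. So the reversed run $q_n,\dots,q_0$ accepts $w^{-1}$: each letter is inverted and the endpoints are swapped, which leaves the relation $u\arc c v$ unchanged. For factors, a factor $a_i\cdots a_j$ of $w$ is accepted by the sub-run $q_{i-1},\dots,q_j$, because every state is initial and final. Equivalently, a subpath of a path mapping to $G$ still maps to $G$.

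\emph{Item 3.} A path is connected, so any homomorphism from it into the disjoint union $G\cup H$ lands in a single component side, giving $\text{path}(w)\ra G$ or $\text{path}(w)\ra H$. Conversely, the inclusions $G\ra G\cup H$ and $H\ra G\cup H$ compose with any homomorphism into $G$ or $H$. Alternatively, the automaton of $G\cup H$ is the disjoint union of the two NFAs, whose language is the union. The only care needed is the empty word, which is harmless as long as the graphs are nonempty.

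\emph{Item 4.} I would use the universal property of the categorical product: $X\ra G\times H$ if and only if $X\ra G$ and $X\ra H$. Composing with the projections gives one direction; pairing $x\mapsto(\phi(x),\psi(x))$ gives the other. In the decorated setting, $(u,u')\arc c (v,v')$ holds iff $u\arc c v$ and $u'\arc c v'$, so colors and orientations are respected. In automaton terms, this is exactly the standard product construction for intersection of NFAs.

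\emph{Item 5.} This is the only step that depends on the later \Cref{def:double-cover}. I expect a cover to be a surjective homomorphism $p:H\ra G$ that is locally bijective, i.e.\ for each $\tilde u$ with $p(\tilde u)=u$ and each color and direction, $p$ restricts to a bijection from the corresponding neighbours of $\tilde u$ onto those of $u$. The inclusion $L(H)\subseteq L(G)$ follows by composing with $p$. For $L(G)\subseteq L(H)$, I would lift a run $q_0,\dots,q_n$ in $G$ step by step. Choose any $\tilde q_0\in p^{-1}(q_0)$ (possible by surjectivity), and given $\tilde q_{i-1}$, use local surjectivity on neighbourhoods to pick a neighbour $\tilde q_i$ via the letter $a_i$ with $p(\tilde q_i)=q_i$. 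This path-lifting argument is the main point requiring care: one must check that the definition of cover gives surjectivity on each colored, oriented neighbourhood, including loops and multi-edges.
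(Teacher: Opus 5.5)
Your proposal is correct and follows the paper's proof essentially item by item: accepting runs as homomorphisms from the path, reversed runs and sub-runs for item 1, connectedness for the (disjoint) union, the universal property of the categorical product, and for covers, your step-by-step lifting of runs, which is the paper's bisimulation $\{(v,p(v)) \mid v \in V(H)\}$ unpacked, with surjectivity supplying a lift of the initial state. The only cosmetic difference is that you prove item 2 directly from the run/homomorphism correspondence with a fixed starting endpoint, whereas the paper invokes stability under inverse to account for both words describing a path.
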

\footnotetext{Solution to \Cref{fig:puzzle}: \tiny{E2;E3;D3;C3;B3;B2;B1;C1}}

\begin{definition}\label{def:double-cover}
	$H$ is a cover of a graph $G$ if there is a covering map from $H$ to $G$.
	 A covering map $f$ is a surjection and a local isomorphism: the neighbourhood of a vertex $v$ in $H$ is mapped bijectively onto the neighbourhood of $f(v)$ in $G$.
\end{definition}

\begin{proof}[proof of the proposition]
	Let $G$, $H$ be decorated graphs, based on which we construct the automata $A_G$ and $A_H$.
	\begin{enumerate}
		\item If $w \in L(G)$, then $w$ is accepted by a path $q_0 \stack {a_1} \ra  q_1  \stack {a_2} \ra \dots \stack {a_k} \ra q_k$, by construction of $A_G$, $q_{i-1}  \stack {a_i} \ra q_{i} \iff q_{i} \stack {a_i^{-1}} \ra q_{i-1}$, therefore, $q_k \stack {a_k^{-1}} \ra \dots \stack {a_1^{-1}} \ra q_1$ is a valid path in $A_G$ which accepts $w^{-1}$. 
		
		That $L(G)$ is stable by factors is a consequence of every state of $A_G$ being initial and terminal.
		
		\item Consider a decorated path $P = p_0\dots p_k$ that admits a mapping $\phi$ to $G$, this mapping induces a walk $u_0\dots u_k$ in $G$ with the orientations and colors in $G$ matching that of $P$. In $A_G$, there must exist a run $u_0 \stack {a_1} \ra  u_1  \stack {a_2} \ra \dots \stack {a_k} \ra u_k$, with 
		$a_i = \begin{cases}
		\tra[c] \text{ if } p_{i-1}  \stack {c} \ra p_i \\
		 \tla[c] \text{ if } p_{i-1}  \stack {c} \la p_i
		 \end{cases}$
		 which accepts a word describing $P$.
		 Since $L(G)$ is stable by inverse, both words describing $P$ are in $L(G)$. 
			 With the same construction in reverse, it is immediate that $L(G)$ corresponds exactly to the paths mapping to $G$.
		 
		 \item Recall that for a connected graph $P$, $P \ra G \cup H \iff (P \ra G \text{ or } P \ra H )$, then, using the item $2$, we get the desired result.
		 \item Similarly, for a graph $P$, $P \ra G \times H \iff (P \ra G \text{ and } P \ra H )$, then, using the item $2$, we also get the desired result.
		 
		 \item Suppose $H$ is a cover of $G$ and let $p$ be the covering map from $H$ to $G$. Then $\set{(v,p(v) | v \in V(H)}$ is a bisimulation. Therefore, $G$ and $H$ have the same language.
	\end{enumerate}\end{proof}
	
	Since $A_G$ and $G$ share the same vertex set and accept the same paths/words, we will freely switch between graph-theoretic and automata-theoretic terminology. Thus, a path mapping to a graph may also be viewed as an accepting run over the graph. Similarly, we may view word languages as sets of paths. 
	\vspace{10pt}
	
	We say that an oriented path or cycle is \emph{directed} if all edges are oriented in the same direction (meaning clockwise or anticlockwise for a cycle).
	\begin{proposition}\label{prop:directed-cycle-necessary}
		If $G$ is an uncoloured oriented graph, then $L(G) = \Sigma^*$ if and only if $G$ contains a directed cycle.
		
		If $G$ is a 2-edge-colored undirected graph, then $L(G) = \Sigma^*$ if and only if $G$ contains a color-alternating cycle.
	\end{proposition}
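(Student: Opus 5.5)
Both equivalences follow the same scheme, and I would prove them in parallel. By Proposition~\ref{prop:basic-prop}(2), $L(G)=\Sigma^*$ means that every decorated path admits a homomorphism to $G$. For the ``if'' direction, I exhibit an explicit homomorphism from an arbitrary path into the cycle. For the ``only if'' direction, I pick a single long word whose only possible runs in a finite graph force a repeated vertex, and read the cycle off that repetition.

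\emph{Oriented case, ``if''.} Suppose $G$ contains a directed cycle $C = v_0 \ra v_1 \ra \dots \ra v_{k-1} \ra v_0$, with indices taken mod $k$. Given an oriented path $p_0\dots p_n$, define the height $h(p_0)=0$ and $h(p_i)=h(p_{i-1}\pm 1)$, with $+1$ for a forward arc and $-1$ for a backward arc. The map $p_i \mapsto v_{h(p_i) \bmod k}$ sends each forward arc $p_{i-1}\ra p_i$ to an arc $v_j \ra v_{j+1}$ and each backward arc to $v_{j} \la v_{j+1}$, so it is a homomorphism. A loop counts as a directed cycle with $k=1$.

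\emph{Oriented case, ``only if''.} Let $n=|V(G)|$ and take the word $(\tra)^{n}$, the directed path on $n+1$ vertices. Its image is a directed walk $u_0\ra u_1\ra\dots\ra u_n$ in $G$. By pigeonhole some $u_i=u_j$ with $i<j$. Taking such a pair with $j-i$ minimal, the segment $u_i \ra \dots \ra u_j$ is a directed closed walk with no repeated vertices except its endpoints, hence a directed cycle (a loop if $j=i+1$).

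\emph{2-edge-colored case.} With colors $a,b$, I view $G$ as an undirected graph and want a color-alternating cycle. I take this to mean a closed walk $v_0 v_1\dots v_{2k}=v_0$, of even length, whose edge colors alternate $a,b,a,b,\dots$ all the way around, including across the closing vertex $v_0$. I would check that this matches the paper's intended notion, allowing repeated vertices if needed; a single loop does not qualify, but a pair of parallel edges of different colors does.

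For ``if'', any path with colors $c_1c_2\dots c_m$ maps as follows. Walk along the cycle: from the current vertex $v_j$, if the next letter is the color of edge $v_jv_{j+1}$, go forward; otherwise it is the color of $v_{j-1}v_j$, by alternation, so go backward. This works precisely because at every vertex of an alternating cycle both colors are available. That is the graph-theoretic analogue of the height argument.

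For ``only if'', take the word $(ab)^{n}$, or something slightly longer, and consider its run $u_0u_1\dots u_{2n}$. Among the even-indexed vertices $u_0,u_2,\dots,u_{2n}$ two coincide, say $u_{2i}=u_{2j}$, and the segment between them is a closed walk of even length alternating $a,b,\dots,a,b$. Its last edge has color $b$ and its first has color $a$, so the alternation is preserved at the closing point.

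The main obstacle I expect is the gap between ``alternating closed walk'' and ``alternating cycle'' if the paper insists on a simple cycle. An alternating closed walk can pass through a vertex twice without containing a simple alternating cycle, so minimality of $j-i$ alone does not obviously give a simple cycle. If a simple cycle is required, I would try to shortcut at a repeated vertex, splitting the walk at a repeated vertex $w$ into two closed subwalks at $w$ and keeping one whose colors still alternate at $w$. This is the delicate point. Otherwise I would adopt the closed-walk reading, under which the argument above is complete. The oriented case has no such issue, since the minimal repetition gives a simple directed cycle.
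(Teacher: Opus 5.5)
Your plan follows the paper's own proof. For ``if'' you read an arbitrary path along the cycle. For ``only if'' you push one long word (the all-forward word of length $n$, resp.\ $(ab)^n$) into $G$ and cut out a minimal repetition. The oriented case is complete, apart from a harmless slip: a backward arc $p_{i-1}\la p_i$ with $p_{i-1}\mapsto v_{j+1}$ lands on $v_{j+1}\la v_j$, not on $v_j\la v_{j+1}$; the map itself is right. In the 2-coloured case the paper only says the argument is ``the same, replacing directed by color-alternating''. It therefore silently uses exactly the step you flagged: that a minimal repetition yields an alternating \emph{cycle}.

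That step is worse than delicate. Your proposed shortcut repair cannot work, and if ``cycle'' means simple cycle, the ``only if'' direction is false. Take vertices $w,x,y,z,t$ with $wx,wy,zt$ of colour $a$ and $xy,wz,wt$ of colour $b$.
\begin{itemize}
\item Every vertex meets both colours, so any word over $\{a,b\}$ can be read greedily from any vertex, and $L(G)=\Sigma^*$.
\item Yet the only cycles are the triangles $wxy$ and $wzt$, which are odd and so cannot alternate.
\item The closed walk $w\,x\,y\,w\,z\,t\,w$ reads $ababab$ and alternates all the way round, including at the closing vertex.
\item Cutting it at $w$ gives $wxyw$, whose two edges at $w$ are both $a$, and $wztw$, whose two edges at $w$ are both $b$. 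Neither piece alternates at the cut, and $w$ is the only repeated vertex, so there is nothing else to cut.
\end{itemize}

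So drop the repair and commit to the closed-walk reading. Equivalently, some alternating even cycle maps homomorphically into $G$; or again, $G$ has a nonempty subgraph in which every vertex meets both colours. Under that reading your argument is complete. This includes the wrap-around check at the closing vertex and the forward/backward rule, provided you track the position $j$ on the walk rather than the vertex, since vertices may repeat. It is also the only reading under which the proposition, and the paper's proof of it, is correct.
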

	\begin{proof}
		Consider $G$ an uncoloured oriented graph, if $G$ contains a directed cycle $C$, every vertex of $C$ has in-degree and out-degree $1$ in $C$, so every letter ($\tra$ or $\tla$) can be accepted from every vertex and $L(G) = \Sigma^*$.
		
		Conversely, $G$ must accept directed paths $P$ of any length, since $G$ is finite, there exists two vertices $p_i,p_j \in V(P)$ that are mapped to the same vertex in $G$. And therefore, $P[p_i,\dots,p_j]$ maps to a directed closed walk of $G$. By taking the minimal one, we get a directed cycle.
		
		For $2$-edge-colored undirected graphs, the argument is the same, replacing \emph{directed}, by \emph{color-alternating}.
	\end{proof}
\begin{corollary}
	If $G$ is an uncoloured oriented graph or a 2-edge-colored undirected graph, $L(G) = \Sigma^*$ implies that there is a vertex $u$ such that $G$ accepts $\Sigma^*$ starting at $u$
\end{corollary}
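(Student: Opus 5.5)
The plan is to apply \Cref{prop:directed-cycle-necessary} and take $u$ to be any vertex of the cycle it provides. Assume $G$ is an uncoloured oriented graph with $L(G)=\Sigma^*$. By the proposition, $G$ contains a directed cycle $C=c_0\to c_1\to\dots\to c_{k-1}\to c_0$, with indices read modulo $k$. (A loop is the case $k=1$, and a pair of opposite arcs the case $k=2$; both are allowed since the paper permits loops and multi-edges.) Fix $u=c_0$.

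The key step is the following observation. In the automaton $A(G)$, from any vertex $c_i$ of $C$ the letter $\tra$ can be read, moving to $c_{i+1}$, and the letter $\tla$ can be read, moving to $c_{i-1}$. Both successors are again vertices of $C$. So the set $V(C)$ is closed under reading any letter, and from each of its vertices every letter has a transition.

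By induction on the length of $w\in\Sigma^*$, there is therefore a run of $A(G)$ on $w$ that starts at $c_0$ and stays inside $V(C)$. Since every state is accepting, $w$ is accepted starting at $u$. In path language, $\text{path}(w)$ maps to $G$ with its first vertex sent to $u$: the map sends each vertex to $c_j$, where $j$ is the number of forward arcs minus the number of backward arcs read so far, taken modulo $k$.

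For a 2-edge-colored undirected graph the argument is identical. Take the color-alternating cycle $C$ given by \Cref{prop:directed-cycle-necessary}, whose length $k$ is even (or possibly a degenerate cycle such as a pair of parallel edges of the two colors). Each vertex of $C$ is incident in $C$ to exactly one edge of each color. Reading a letter $c$ moves along the unique incident edge of color $c$, and the new vertex is still on $C$. Again $V(C)$ is closed under all letters, and any $u\in V(C)$ works.

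The only point that needs care is these degenerate cycles (loops, digons, parallel edges). I expect no real obstacle there, because the closure property of $V(C)$ is all that is used. The substance of the corollary is that the witness $u$ does not depend on $w$, which fails for general graphs, where different words might need different starting vertices. Here that uniformity comes directly from the structure guaranteed by \Cref{prop:directed-cycle-necessary}.
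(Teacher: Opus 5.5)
Your proposal is correct and is essentially the paper's argument: the corollary is read off from \Cref{prop:directed-cycle-necessary}. Its proof observes that every vertex of the directed (resp.\ color-alternating) cycle can read every letter while staying on the cycle, so any vertex of that cycle serves as $u$.

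One remark on the 2-colored case. There one can in general only extract a closed alternating \emph{walk}, not a simple cycle: two triangles sharing a vertex $v$, colored $a,b,a$ and $b,a,b$ going around from $v$, have no alternating simple cycle, yet accept $\{a,b\}^*$. Your closure argument only needs each vertex of $C$ to have \emph{at least} one incident $C$-edge of each color, so it goes through unchanged.
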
	

	Note that this result does not generalize to 3-edge-colored graphs, as shown in \Cref{fig:rainbow_triangle}.
	
\begin{figure}[ht]
	\centering
		\begin{tikzpicture}
	\begin{pgfonlayer}{nodelayer}
		\node [style=black empty] (12) at (-2, 3.75) {};
		\node [style=black empty] (13) at (2, 3.75) {};
		\node [style=black empty] (14) at (0, 0) {};
	\end{pgfonlayer}
	\begin{pgfonlayer}{edgelayer}
		\draw [style={blue_edge}] (12) to (14);
		\draw [style={blue_edge}, in=135, out=45, loop] (12) to ();
		\draw [style={blue_edge}, in=-90, out=0, loop] (14) to ();
		\draw [style=edge red] (14) to (13);
		\draw [style=edge red, in=135, out=45, loop] (13) to ();
		\draw [style=edge red, in=-180, out=-90, loop] (14) to ();
		\draw [style=green edge] (12) to (13);
		\draw [style=green edge, in=-60, out=45, loop] (13) to ();
		\draw [style=green edge, in=135, out=-135, loop] (12) to ();
	\end{pgfonlayer}
\end{tikzpicture}
		\caption{A graph whose language is $\Sigma^*$ }
		\label{fig:rainbow_triangle}
\end{figure}
	
\begin{proposition}
	For any two graph-based languages $L_1,L_2$ and any connected decorated graph $G$, $L_1 \cup L_2 = L(G) \Ra L_1 = L(G)$ or $L_2 = L(G)$.
\end{proposition}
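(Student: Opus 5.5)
Since $G$ is connected, the natural strategy is to find a single word in $L(G)$ that "sees" the whole graph, so that no proper sub-language can contain it. Write $L_1 = L(G_1)$ and $L_2 = L(G_2)$ for decorated graphs $G_1, G_2$. By item 3 of \Cref{prop:basic-prop}, $L_1 \cup L_2 = L(G_1 \cup G_2)$, and in particular $L_1, L_2 \subseteq L(G)$. It therefore suffices to exhibit a word $w \in L(G)$ such that any graph-based language containing $w$ contains all of $L(G)$. Whichever of $L_1, L_2$ contains $w$ then equals $L(G)$.

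The natural candidate is the word $w$ read along a closed walk $W$ in $G$ that traverses every edge of $G$ in both directions; such a walk exists because $G$ is connected and finite (take a spanning tree walk and insert back-and-forth detours over each non-tree edge and loop). In fact I would use the concatenation $w^N$ for a large $N$, still a walk in $G$ and hence in $L(G)$. Suppose $w^N \in L(H)$ for some graph $H$. By item 2, the path $P_N = \mathrm{path}(w^N)$ maps homomorphically to $H$. I then want to deduce that every word of $L(G)$ is in $L(H)$. The cleanest route is to show that $G \ra H$, because then any path mapping to $G$ maps to $H$ by composition.

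To get $G \ra H$, I would use a pigeonhole argument as in the proof of \Cref{prop:directed-cycle-necessary}. Take $N > |V(H)|$. The images in $H$ of the positions $0, |w|, 2|w|, \dots, N|w|$ of $P_N$ must have two coincident values, at $i|w|$ and $j|w|$ with $i<j$. Then the block $w^{j-i}$ is read along a closed walk in $H$ starting and ending at one vertex $h$. Define $\phi$ on $V(G)$ as follows. Fix the base vertex $v_0$ of $W$, and send each vertex $v$ visited at position $t$ of the first copy of $w$ inside that block to the image of position $i|w| + t$.

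The main obstacle is well-definedness: a vertex $v$ visited at several positions of $W$ may receive different images in $H$, and the copies of $w$ after the first also give images. I would fix this by replacing $H$-vertices with tuples. Since the images after each full copy of $w$ form a map on the positions of $W$, consider the sequence of "position-assignments" $\psi_k : \{0,\dots,|w|\} \to V(H)$ for $k = 0, \dots, N-1$. Pigeonhole alone does not give consistency on repeated vertices. Instead, I would choose $W$ more cleverly so that consistency is forced. The better approach is to work in the universal cover: paths mapping to $H$ from a long walk correspond to maps from a finite portion of the universal cover (a tree) of $G$. If this gets messy, I would fall back on a direct argument: pick a word $u \in L(G)\setminus L_1$ and $u' \in L(G)\setminus L_2$, walks in $G$ from $x$ to $y$ and from $x'$ to $y'$. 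Connect them by a walk in $G$ from $y$ to $x'$ reading $z$. Then $uzu' \in L(G)$, and by stability under factors (item 1) it lies in neither $L_1$ nor $L_2$, which is a contradiction. This fallback is in fact complete and simple, so it is the argument I would actually write, with the homomorphism approach discarded.
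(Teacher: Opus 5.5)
Your final argument is exactly the paper's proof, and it is correct: take $u\in L(G)\setminus L_1$ and $u'\in L(G)\setminus L_2$, join them by a connecting walk $z$ so that $uzu'\in L(G)$, then use factor-closure of $L_1$ and $L_2$ (you even connect the end of the first run to the start of the second, which is the right choice). You were also right to discard the single-word/homomorphism route, since no single word can force all of $L(G)$: for a directed triangle $L(G)=\Sigma^*$, while the oriented path of any fixed word accepts only paths of bounded height.
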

\begin{proof}
	Suppose $L_1 \cup L_2 = L(G)$ but $L_1,L_2 \subsetneq L(G)$. Let $w_1 \in L(G)\setminus L_1$ and $w_2 \in L(G)\setminus L_2$. 
	Let $u_1 \xrightarrow{w_1} v_1$ and $u_2 \xrightarrow{w_2} v_2$ be two runs in $G$ accepting $w_1$ and $w_2$ respectively. Since $G$ is connected, there exists a path from $v_1$ to $v_2$, call the corresponding word $w_3$. Then $w_1w_3w_2 \in L(G)$. 
	Since $L_1$ and $L_2$ are stable by factors, $w_1w_3w_2 \notin L_1$ and  $w_1w_3w_2 \notin L_2$ thus $w_1w_3w_2 \in L(G) \setminus(L_1 \cup L_2)$, so $L_1 \cup L_2 \neq L(G)$.
\end{proof}

From the preceding propositions, the reader might be tempted to think that the class of graph-based languages is trivial or not as rich as regular languages. We therefore show that, up to a simple encoding, any regular language can be expressed as a graph-based language. This construction is also useful for translating noteworthy automata into graphs.

\begin{proposition}
	Any regular language $L$ over an alphabet $\Sigma$ can be obtained as $$\tau(L(G) \cap \$\!\set{a,b}^*\!\#). $$
	Where $G$ is an edge colored graph with colors $\Omega = \set{\$,a,b,\#}$ and $\tau : \Omega^* \ra \Sigma^*$ is a transducer independent of $G$.
\end{proposition}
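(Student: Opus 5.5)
The plan is to simulate a finite automaton for $L$ inside an undirected edge-coloured graph. Each letter of $\Sigma$ becomes a gadget path coloured with $a$ and $b$. The letters $\$$ and $\#$ mark the start and the end of a run, and the transducer $\tau$ decodes the $\{a,b\}$-blocks back into letters of $\Sigma$.

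\textbf{Construction.} Fix a DFA $\mathcal{D}=(Q,\Sigma,\delta,q_0,F)$ recognising $L$ and enumerate $\Sigma=\{\sigma_1,\dots,\sigma_m\}$. Give $\sigma_i$ the binary code $c_i = a\,(ba)^{i}$, a strictly alternating word of odd length that begins and ends with $a$. The graph $G$ has:
\begin{itemize}
\item one vertex per state of $Q$;
\item an extra vertex $s$ joined to $q_0$ by a $\$$-edge;
\item an extra vertex $t$ joined to every $q\in F$ by a $\#$-edge;
\item for every transition $\delta(q,\sigma_i)=q'$, a fresh path from $q$ to $q'$ of length $2i+1$, whose edges are coloured by the successive letters of $c_i$.
\end{itemize}
The transducer $\tau$ is fixed once $\Sigma$ and the codes are fixed, so it does not depend on $G$. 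It deletes $\$$ and $\#$, cuts the $\{a,b\}$-part at every occurrence of $aa$, and rewrites each resulting block $a(ba)^i$ as $\sigma_i$.

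\textbf{Easy inclusion, $L\subseteq\tau(\cdots)$.} Let $\sigma_{i_1}\cdots\sigma_{i_n}\in L$. The accepting run of $\mathcal{D}$ gives a walk in $G$: $s$, then $q_0$, then the successive gadgets, then $t$. Its word is $\$c_{i_1}\cdots c_{i_n}\#$, which lies in $L(G)\cap\$\{a,b\}^*\#$, and $\tau$ maps it to the original word.

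\textbf{Structure of an accepted walk.} Take a word $\$w\#$ in $L(G)$ with $w\in\{a,b\}^*$. By \Cref{prop:basic-prop} it is read by a walk in $G$.
\begin{itemize}
\item Since $\$$ is the first letter and $s$ has a single edge, the walk must start at $s$ and step to $q_0$; it cannot use that $\$$-edge in the other direction. Similarly, the final $\#$ forces the walk to end by stepping from an accepting state to $t$.
\item Every internal gadget vertex has exactly one $a$-edge and one $b$-edge. Hence, at such a vertex, the next letter determines the next vertex. A genuine traversal of a gadget reads a strictly alternating word.
\item Turning back inside a gadget therefore forces a repeated letter, $aa$ or $bb$.
\item Every edge at a state vertex is coloured $a$, and every code starts and ends with $a$. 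So passing through a state vertex from one gadget to the next reads $aa$.
\end{itemize}

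\textbf{Main obstacle: spurious walks.} An undirected graph accepts walks that backtrack inside a gadget, and these must not produce words outside $L$. Two cases arise.
\begin{itemize}
\item A $bb$ never occurs in a genuine traversal. My first attempt is to let $\tau$ reject any input containing $bb$, so that turns at $b$-edges are simply filtered out.
\item A turn at an $a$-edge is harder, because it reads $aa$, exactly like passing through a state vertex.
\end{itemize}
For the second case, I would refine the codes so that state-incident edges carry a different local pattern from internal $a$-edges. For instance, prefix and suffix each code with an extra $a$, so that real boundaries read $aaa\!\!\;a$ and internal turns read only $aa$. I would then prove that every walk from $q_0$ to a state decomposes uniquely into complete gadget traversals plus excursions that $\tau$ recognises and rejects.

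The parity and length bookkeeping showing that such excursions can always be detected by a finite-state $\tau$ is where I expect most of the work. With that in hand, every word kept by $\tau$ comes from a genuine run of $\mathcal{D}$, which gives $\tau(L(G)\cap\$\{a,b\}^*\#)\subseteq L$.
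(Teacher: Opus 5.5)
\textbf{There is a genuine gap.} You leave the inclusion $\tau(L(G)\cap\$\{a,b\}^*\#)\subseteq L$ unproved, and in fact it is false for your graph. Your codes $c_i=a(ba)^i$ are palindromes, and so are the refined codes $aa(ba)^ia$. Since $G$ is undirected, the gadget of $\delta(q,\sigma_i)=q'$ can be walked from $q'$ back to $q$, and this walk reads exactly $c_i$. So $G$ also simulates every transition of $\mathcal{D}$ in reverse. No choice of $\tau$ can detect this, because the word read is letter for letter the one a genuine run produces.

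Concretely, take $L=\{\sigma_1\}$ with $\delta(q_0,\sigma_1)=q_1$ and $F=\{q_1\}$. Consider the walk that starts at $s$, crosses the unique gadget from $q_0$ to $q_1$, goes back to $q_0$, crosses again to $q_1$, and ends at $t$. It reads $\$\,abaabaaba\,\#$. This word contains no $bb$, and your $\tau$ decodes it as $\sigma_1^3\notin L$. The same walk with the refined codes reads the intended encoding of $\sigma_1^3$.

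Your observation that every gadget edge at a state vertex is an $a$-edge is exactly what erases the orientation of transitions. The refinement also breaks the determinism you rely on: in $aa(ba)^ia$ the vertices adjacent to $q$ and $q'$ carry two $a$-edges. Bouncing there produces $a^m$ for every $m$, so internal turns no longer read only $aa$.

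\textbf{The missing idea is to make the two ends of a gadget distinguishable.} The paper takes $\phi(\sigma_i)=(ab)^{i-1}abbb(ab)^{k-i}$ with $k=|\Sigma|$.
\begin{itemize}
\item Every code begins with $a$ and ends with $b$. So at a state vertex the outgoing gadget edges are $a$-edges and the incoming ones are $b$-edges, and reading $a$ from a state forces the walk forwards into an outgoing gadget.
\item All codes have length $2k+2$, so $\tau$ simply decodes fixed-length blocks.
\item Inside a gadget, only the two middle vertices of the $bbb$ marker carry two edges of the same colour. A concatenation of codes has repeated letters only inside markers.
\item Comparing the marker position $i'$ in the word with the marker position $i$ in the gadget gives a short case analysis. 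A walk from a state that reads $\phi(\sigma_{i'})$ and can then read the next letter ($a$ or $\#$) must have traversed forwards the gadget of the $\sigma_{i'}$-transition leaving that state, ending at its target.
\end{itemize}
Your pendant vertices $s,t$, in place of the paper's $\$$- and $\#$-loops at initial and final states, are a harmless difference.
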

\begin{proof}
	Let $k = |\Sigma|$, for $\alpha$ the $i$-th letter of $\Sigma$, let $\phi(\alpha) = (ab)^{i-1}abbb(ab)^{n-i} $. 
	Then, let $\tau$ be a transducer that writes $\alpha$ when it reads $\phi(\alpha)$ and writes $\epsilon$ after reading $\$$ or $\#$. 	
	Now, we define $G$ from an automaton $A=(Q,\Omega,\delta,I,F)$ such that $L(A)=L$:\\
	We define $V(G) = Q \sqcup \textit{Aux}$, with \textit{Aux} a set of auxiliary vertices.\\
	For $q\in I$, we add a loop $q \arc\$ q $,\\
	for $q\in F$, we add a loop $q \arc \# q $,\\
	for $\delta(q,c) = q'$, we add a path $q \smash{\underbrace{\arc a x_1\arc b x_2\dots \arc b }_{\phi(c)}} q' $, with the $x_i$s being auxiliary vertices used for that transition only. 
	
	Now, it remains to verify that $\tau(L(G) \cap \$\!\set{a,b}^*\!\#)$ corresponds indeed to $L$.
	The fact that $L\subset \tau(L(G) \cap \$\!\set{a,b}^*\!\#)$ is relatively immediate, because, the definition of $G$ is such that the run in $A$ will naturally translate to a run in $G$ between $\$$ and $\#$, where each transition $c$ is replaced with $\phi(c)$.
	
	For the other direction, we need to prove that any word in $L(G)$ between $\$$ and $\#$ corresponds to $\phi(w)$ for some $w \in L$. In fact, we prove that, starting from some state $q\in Q \subset V(G)$, reading $\phi(c)$ can end at $q'$ if and only if $\delta(q,c)=q'$. 
	Call $q,x_1,\dots,q'$ a possible walk when reading $\phi(c)$.
	Since $\phi(c)$ starts with $a$, $x_1$ must be at the beginning of a path corresponding to some $\delta(q,c')$ (the path cannot be followed "in reverse"). In this path, most vertices have in-degree and out-degree 1 in $a$ and $b$, and therefore the automata acts deterministically there: the only possibility is to go forward in the path. 
	Then, on the part $abbba$, we must read exactly $abbba$ (not $aba$), thus for the run to be acceptable, we must have $c'=c$ and in that case $q' = \delta(q,c)$.  
\end{proof}

We also give another construction that doesn't need a transducer.

\begin{proposition}
	For $L$ a regular language over the alphabet $\Sigma$ and $k>0$ there exists a $\Sigma\cup \set{\$,\#}$-edge-colored graph $G$ such that $\set{w \mid \$w\# \in L(G) \text{ and } |w| = k} = \set{w \in L \mid |w|=k}$. 
\end{proposition}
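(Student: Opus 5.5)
Since the statement only concerns words of one fixed length $k$, I would \emph{unroll} an automaton for $L$ into $k+1$ layers. That makes the graph acyclic in the $\Sigma$-colors, so an undirected edge can never be traversed backwards in a way that produces a spurious word. Concretely, take an NFA (or DFA) $A=(Q,\Sigma,\delta,I,F)$ with $L(A)=L$. Let $G$ be the undirected edge-colored graph with
\[
V(G)=Q\times\set{0,\dots,k}.
\]
For every transition $q'\in\delta(q,c)$ and every $0\le i<k$, add an edge $(q,i)\arc c (q',i+1)$. For every $q\in I$, add a loop $(q,0)\arc \$ (q,0)$. For every $q\in F$, add a loop $(q,k)\arc \# (q,k)$.

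The inclusion $\set{w\in L \mid |w|=k}\subseteq\set{w \mid \$w\#\in L(G),\ |w|=k}$ is immediate. An accepting run $q_0\xrightarrow{w_1}q_1\cdots\xrightarrow{w_k}q_k$ of $A$ gives the walk
\[
(q_0,0),(q_0,0),(q_1,1),\dots,(q_k,k),(q_k,k)
\]
in $G$. It starts with the $\$$-loop at $(q_0,0)$ and ends with the $\#$-loop at $(q_k,k)$, and it reads $\$w\#$.

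For the converse, take a walk in $G$ reading $\$w\#$ with $|w|=k$. The $\$$ edge is a loop at some $(q_0,0)$ with $q_0\in I$, because $\$$-edges exist only in layer $0$. So the $\Sigma$-part of the walk starts in layer $0$. Every $\Sigma$-colored edge joins layer $i$ to layer $i+1$, so each letter changes the layer index by exactly $\pm1$. The last letter $\#$ must be read on a loop at some $(q_k,k)$ with $q_k\in F$, so after the $k$ letters of $w$ the walk is in layer $k$. A sequence of $k$ steps of $\pm1$ going from $0$ to $k$ must consist only of $+1$ steps. Hence every edge is traversed in the forward direction, $(q_{i-1},i-1)\arc{w_i}(q_i,i)$, which means $q_i\in\delta(q_{i-1},w_i)$. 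This is an accepting run of $A$ on $w$, so $w\in L$.

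The main obstacle, and the only real content, is exactly this backward-traversal issue: the graph is undirected, so a walk could a priori use transitions in reverse. The layering by $\set{0,\dots,k}$, together with the length constraint $|w|=k$ and the placement of the $\$$ and $\#$ loops at layers $0$ and $k$, is what rules this out. Without fixing $k$ this argument fails, which explains why the statement is restricted to a fixed length.
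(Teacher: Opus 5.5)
Your proof is correct and uses the paper's construction: the layered graph on $Q\times\set{0,\dots,k}$ is exactly the product $A\times P_k$, with $\$$-loops on the initial states and $\#$-loops on the final states. The paper leaves the verification implicit, and your layer-counting argument (each $\Sigma$-letter moves the layer by $\pm1$, so $k$ steps from layer $0$ to layer $k$ must all go forward) supplies that missing justification.
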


\begin{proof}
	Let $A$ be an automaton recognising $L$. Take $G_1 = A \times P_k$, where $P_k$ is the automaton corresponding to a path of length $k$, with its initial and final states at the two endpoints, and accepting $\Sigma$ between two adjacent states.
	
	Then create $G$ from $G_1$ by adding a $\$$-loop to the initial states and a $\#$-loop to the final states.
\end{proof}

\subsection{Generalisation to trees}

In this section, we introduce the notion of \emph{graph-based tree-language}. This is particularly useful for graph theory, because the trees of a graph are, arguably, more important and meaningful than its paths.

The standard notions of tree automata (top-down automata, bottom-up automata, hedge automata) are not well suited for describing decorated trees of arbitrary arity with unordered children. Therefore, we do not describe an explicit tree automaton, instead, we directly discuss regular languages of decorated trees. We will, however, do so in the bounded-degree case. We follow the definitions and basic properties of \cite{EPTT17}.  %%Regular families of forests, antichains and duality pairs of relational structures

Fix a set $C$ of colors, we call $\mathbb F$ the set of oriented $C$-edge-colored trees.
For $T$ and $S$ in $\bb F$, let $t \in V(T), s \in V(S)$. Let $T\stackrel {t,s} \Join S$ be the tree obtained by joining $S$ and $T$ and identifying $s$ and $t$, the new vertex will typically be called $st$. If $s=t$, we write $S\stackrel t \Join T$.

If $T$ is a tree and $t$ is a neighbour of $s$, let $T_{s \angle t}$ be the subtree containing $s$ (as its root), $t$, and all descendants of $t$. (See \Cref{fig:tree_operation_example}.)

\begin{figure}[ht]
	\centering
	\scalebox{.8}{\begin{tikzpicture}
	\begin{pgfonlayer}{nodelayer}
		\node [style=black empty] (0) at (-5, 4) {$t$};
		\node [style=black empty] (1) at (-3.25, 4) {$r$};
		\node [style=black empty] (2) at (-1.25, 5) {};
		\node [style=black empty] (3) at (-1.25, 3) {};
		\node [style=black empty] (4) at (1, 3) {};
		\node [style=black empty] (5) at (-7, 4) {};
		\node [style=black empty] (6) at (8, 4) {};
		\node [style=black empty] (7) at (6, 4) {};
		\node [style=black empty] (8) at (4, 4) {};
		\node [style=black empty] (9) at (2, 4) {$s$};
		\node [style=none] (10) at (-4, 1) {$T$};
		\node [style=none] (11) at (5, 1) {S};
		\node [style=black empty] (13) at (15, 3) {$r$};
		\node [style=black empty] (14) at (17, 4) {};
		\node [style=black empty] (15) at (17, 2) {};
		\node [style=black empty] (16) at (19.25, 2) {};
		\node [style=black empty] (17) at (11, 4) {};
		\node [style=black empty] (18) at (19, 5) {};
		\node [style=black empty] (19) at (17, 5) {};
		\node [style=black empty] (20) at (15, 5) {};
		\node [style=black empty] (21) at (13, 4) {$st$};
		\node [style=none] (22) at (14, 1) {$S\stackrel {s,t} \Join T$};
		\node [style=none] (23) at (24, 1) {$T_{t \angle r}$};
		\node [style=black empty] (24) at (22.25, 4) {$t$};
		\node [style=black empty] (25) at (24, 4) {$r$};
		\node [style=black empty] (26) at (26, 5) {};
		\node [style=black empty] (27) at (26, 3) {};
		\node [style=black empty] (28) at (28.25, 3) {};
	\end{pgfonlayer}
	\begin{pgfonlayer}{edgelayer}
		\draw [style={blue_edge}] (0) to (1);
		\draw [style=edge red] (1) to (2);
		\draw [style={blue_edge}] (1) to (3);
		\draw [style=edge red] (3) to (4);
		\draw [style={blue_edge}] (0) to (5);
		\draw [style=edge red, in=135, out=45, loop] (5) to ();
		\draw [style=edge red] (7) to (6);
		\draw [style=edge red] (7) to (8);
		\draw [style={blue_edge}, in=135, out=45, loop] (7) to ();
		\draw [style={blue_edge}] (8) to (9);
		\draw [style=edge red, in=135, out=45, loop] (6) to ();
		\draw [style=edge red] (13) to (14);
		\draw [style={blue_edge}] (13) to (15);
		\draw [style=edge red] (15) to (16);
		\draw [style=edge red, in=135, out=45, loop] (17) to ();
		\draw [style=edge red] (19) to (18);
		\draw [style=edge red] (19) to (20);
		\draw [style={blue_edge}, in=135, out=45, loop] (19) to ();
		\draw [style={blue_edge}] (20) to (21);
		\draw [style=edge red, in=135, out=45, loop] (18) to ();
		\draw [style={blue_edge}] (21) to (13);
		\draw [style={blue_edge}] (21) to (17);
		\draw [style={blue_edge}] (24) to (25);
		\draw [style=edge red] (25) to (26);
		\draw [style={blue_edge}] (25) to (27);
		\draw [style=edge red] (27) to (28);
	\end{pgfonlayer}
\end{tikzpicture}}
	\caption{Illustration of the operations on trees}
	\label{fig:tree_operation_example}
\end{figure}

Given $L$ a family of trees, let $L / (T,t) = \set{S \mid \exists s \in V(S), T\stackrel {t,s} \Join S \in L}$. For $L$ a family of trees (possibly $L = \bb F$), we write $L^\star$ for the corresponding rooted trees, that is : $L^\star = \set{(T,t)| t \in V(T), T \in L}$.
We write $(T,t) \sim_L (S,s)$ if $L /  (T,t) = L / (S,s)$. We say that a set of decorated trees $L$ is \emph{regular} if the number of equivalence classes of $\sim_L$ is finite.  

The following property is folklore, it can be found in a similar language in \cite[Corollary 3.4]{EPTT17}
\begin{proposition}\label{prop:tree-mapping-regular}
	Given a decorated graph $G$, the set of decorated trees mapping to $G$ is regular.
\end{proposition}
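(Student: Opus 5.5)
We argue that the equivalence class of a rooted tree $(T,t)$ under $\sim_L$, with $L$ the set of decorated trees mapping to $G$, is determined by a finite amount of data. For a rooted tree $(T,t)$ we define its \emph{type}
$$\rho(T,t) = \set{v \in V(G) \mid \exists \phi : T \ra G \text{ homomorphism with } \phi(t)=v} \subseteq V(G).$$
There are at most $2^{|V(G)|}$ possible types. It therefore suffices to show that $\rho(T,t)=\rho(T',t')$ implies $L/(T,t) = L/(T',t')$, since this bounds the number of $\sim_L$-classes by $2^{|V(G)|}$.

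The key step is a gluing lemma. For rooted trees $(T,t)$ and $(S,s)$, the glued tree $T\stackrel{t,s}\Join S$ maps to $G$ if and only if $\rho(T,t)\cap\rho(S,s)\neq\emptyset$.

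\emph{Forward direction.} A homomorphism $\phi$ of the glued tree restricts to homomorphisms of $T$ and of $S$. Both restrictions send the identified vertex $st$ to the same vertex $\phi(st)$, which therefore lies in both types.

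\emph{Backward direction.} Suppose $\phi_T : T \ra G$ and $\phi_S : S\ra G$ satisfy $\phi_T(t)=\phi_S(s)=v$. Define $\phi$ on the glued tree as $\phi_T$ on $V(T)$ and $\phi_S$ on $V(S)$. This is well defined because the two trees share only the vertex $st$, which both maps send to $v$. Every edge of the glued tree lies entirely in the copy of $T$ or entirely in the copy of $S$, so its colour and orientation are preserved. Hence $\phi$ is a homomorphism. This is the only place where the tree structure matters: the gluing is along a single vertex, so no compatibility constraints arise between the two parts.

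Applying the lemma, $S \in L/(T,t)$ if and only if there is some $s\in V(S)$ with $\rho(T,t)\cap\rho(S,s)\neq\emptyset$. This condition depends on $(T,t)$ only through $\rho(T,t)$. So equal types give equal quotients, and $\sim_L$ has finitely many classes, which proves regularity.

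No step is a real obstacle; the only point needing care is the well-definedness of the glued map together with the fact that no edge crosses between the two parts. One could additionally note that $\rho$ is computable recursively: for a root $t$ with children subtrees $T_{t\angle r}$, the type $\rho(T,t)$ is the intersection of the types of those subtrees. This is not needed for the statement, but it shows that the classes correspond to an actual bottom-up automaton with state set $2^{V(G)}$.
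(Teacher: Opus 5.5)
Your proof is correct and follows the paper's argument. Your type $\rho(T,t)$ is exactly the paper's $\Lambda_{(T,t) \ra G}$, and the gluing criterion $\rho(T,t)\cap\rho(S,s)\neq\emptyset$ bounds the number of $\sim_L$-classes by $2^{|V(G)|}$, just as in the paper; you also spell out why the glued map is a well-defined homomorphism, which the paper leaves implicit.
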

We write this set $L^\tree(G)$.
\begin{proof}
Let $\Lambda_{(T,t) \ra G}$ be the different possible images of $t$ via a morphism from $T$ to $G$.
Then $T \in L^\tree(G) \iff \Lambda_{(T,t) \ra G} \neq \emptyset$ for $t \in V(T)$.
Furthermore, $T\stackrel {t,s} \Join S \in L^\tree(G) \iff \Lambda_{(T,t) \ra G} \cap \Lambda_{(S,s) \ra G} \neq \emptyset$. Thus the equivalence class of $\sim_{L^\tree(G)}$ for $(T,t)$ is only a function of $\Lambda_{(T,t) \ra G}$ and the number of equivalence classes is at most $2^{|G|}$ and $L^\tree(G)$ is regular.
\end{proof}

\begin{figure}[ht]
	\centering
	\begin{tikzpicture}
	\begin{pgfonlayer}{nodelayer}
		\node [style=black empty] (0) at (-10, 4) {};
		\node [style=black empty] (1) at (-7, 4) {};
		\node [style=black empty] (2) at (-8.5, 1.5) {};
		\node [style=small black empty] (3) at (-4.75, 3) {};
		\node [style=small black empty] (4) at (-2.75, 3) {};
		\node [style=small black empty] (5) at (-0.75, 3) {};
		\node [style=small black empty] (6) at (1.25, 2.25) {};
		\node [style=small black empty] (7) at (3.25, 1.75) {};
		\node [style=small black empty] (8) at (5.25, 2.25) {};
		\node [style=small black empty] (9) at (-2.75, 3) {};
		\node [style=small black empty] (10) at (1.25, 3.75) {};
		\node [style=small black empty] (11) at (3.25, 3.75) {};
		\node [style=small black empty] (12) at (5.25, 3.75) {};
		\node [style=small black empty] (13) at (7.25, 3.75) {};
		\node [style=small black empty] (14) at (1.25, 3.75) {};
		\node [style=small black empty] (15) at (9.25, 3.75) {};
		\node [style=small black empty] (16) at (11.25, 3.75) {};
		\node [style=small black empty] (17) at (5.25, 1.25) {};
	\end{pgfonlayer}
	\begin{pgfonlayer}{edgelayer}
		\draw [style=edge red] (0) to (1);
		\draw [style={blue_edge}] (0) to (2);
		\draw [style={blue_edge}] (2) to (1);
		\draw [style={blue_edge}, in=180, out=90, loop] (0) to ();
		\draw [style=edge red, in=90, out=0, loop] (1) to ();
		\draw [style={blue_edge}] (3) to (9);
		\draw [style={blue_edge}] (5) to (6);
		\draw [style={blue_edge}] (7) to (8);
		\draw [style=edge red] (9) to (5);
		\draw [style=edge red] (6) to (7);
		\draw [style={blue_edge}] (5) to (14);
		\draw [style={blue_edge}] (14) to (11);
		\draw [style=edge red] (11) to (12);
		\draw [style={blue_edge}] (12) to (13);
		\draw [style={blue_edge}] (15) to (16);
		\draw [style=edge red] (13) to (15);
		\draw [style=edge red] (7) to (17);
	\end{pgfonlayer}
\end{tikzpicture}
	\caption{A decorated graph and a tree from its language }
	\label{fig:ex_graph_tree_automata}
\end{figure}

\Cref{prop:basic-prop} generalises to tree.
\begin{proposition}\label{prop:basic-prop-tree}
	If $G$, $H$ are decorated graphs, then the following are true:
	\begin{enumerate}
		\item $L^\tree(G)$ is stable by subtree
		\item $L^\tree(G) \cup L^\tree(H) = L^\tree(G \cup H)$
		\item $L^\tree(G) \cap L^\tree(H) = L^\tree(G \times H)$ 		\item If $H$ is a cover of $G$, then $L^\tree(G)=L^\tree(H)$
	\end{enumerate}
\end{proposition}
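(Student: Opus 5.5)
The plan mirrors the proof of \Cref{prop:basic-prop}, with words replaced by homomorphisms from trees. Throughout, $L^\tree(G)=\set{T\in\bb F \mid T\ra G}$ by definition, so each item reduces to a statement about homomorphisms from a connected structure.

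For item 1, I take $T\in L^\tree(G)$ with a homomorphism $\phi:T\ra G$ and a subtree $S\subseteq T$. The restriction $\phi|_{V(S)}$ is still a homomorphism, because every edge of $S$ is an edge of $T$ with the same color and orientation. Hence $S\in L^\tree(G)$.

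For item 2, the key fact is that a tree $T$ is connected, and the image of a connected graph under a homomorphism into a disjoint union $G\cup H$ lies entirely in one component of the union. Indeed, adjacent vertices of $T$ are sent to adjacent vertices, and there are no edges between $G$ and $H$. So $T\ra G\cup H$ holds if and only if $T\ra G$ or $T\ra H$; the converse direction just composes with the inclusions. For item 3, I use the universal property of the tensor product. A map $T\ra G\times H$ is exactly a pair of maps $T\ra G$ and $T\ra H$, via $v\mapsto(\phi(v),\psi(v))$ and the projections. This works because $(g,h)\arc c (g',h')$ in $G\times H$ if and only if $g\arc c g'$ and $h\arc c h'$. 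This argument does not even need connectedness.

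Item 4 is the only step that is not formal, since the bisimulation argument of \Cref{prop:basic-prop} was phrased for words. Let $p:H\ra G$ be the covering map. If $T\ra H$, composing with $p$, which is a homomorphism, gives $T\ra G$. For the converse I lift a homomorphism $\phi:T\ra G$ to $H$:
\begin{itemize}
\item root $T$ at some vertex $t_0$ and choose $\tilde\phi(t_0)\in p^{-1}(\phi(t_0))$, which exists because $p$ is surjective;
\item proceed outward in BFS order. When $t$ is a child of an already lifted vertex $s$, the decorated edge from $\phi(s)$ to $\phi(t)$ has a unique preimage at $\tilde\phi(s)$, because $p$ is a local isomorphism; I set $\tilde\phi(t)$ to be its other endpoint.
\end{itemize}
Since $T$ has no cycles, each vertex is reached exactly once and no consistency conflict can arise. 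This acyclicity is precisely why the argument works for trees.

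The care point is making "the neighbourhood is mapped bijectively" precise for decorated edges with colors and orientation, including loops and multi-edges. I will read the definition as a bijection on incident colored, oriented edges, which gives the unique lift above. Then $p\circ\tilde\phi=\phi$ and $\tilde\phi$ is a homomorphism $T\ra H$, which proves $L^\tree(G)=L^\tree(H)$.
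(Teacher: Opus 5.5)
Your items 1--3 are the paper's own arguments: restriction of a homomorphism, connectedness of trees for the disjoint union, and the universal property of the tensor product. Your explicit lift in item 4 is the tree version of the covering/bisimulation argument the paper carries over from \Cref{prop:basic-prop}. The one step that needs repair is your reinterpretation of ``cover''.

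The paper defines a cover by a bijection of vertex neighbourhoods $N(v)\to N(f(v))$ that is a local isomorphism. You replace this with a bijection on incident decorated arcs, and then lift ``the'' decorated edge from $\phi(s)$ to $\phi(t)$. But in this paper an edge may be labelled by a set of colours, and trees may contain bidirected edges. So a tree edge $st$ can consist of several arcs, and under your reading their lifts at $\tilde\phi(s)$ may end at different vertices. Item 4 then actually fails.

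For instance, let $G$ have two vertices $a,b$ joined by a red edge and a blue edge. Let $H$ be the $4$-cycle $a_1b_1a_2b_2$ with $a_1b_1, a_2b_2$ red and $b_1a_2, b_2a_1$ blue, and map $a_i\mapsto a$, $b_i\mapsto b$. This map is surjective and a bijection on incident coloured edges at every vertex. Yet the one-edge tree whose edge is coloured $\set{\text{red},\text{blue}}$ maps to $G$ but not to $H$. The directed $4$-cycle $a_1\to b_1\to a_2\to b_2\to a_1$ over the arcs $a\to b$, $b\to a$ does the same for a bidirected tree edge.

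With the paper's definition this cannot happen, and your step becomes the following. Let $\tilde\phi(t)$ be the unique neighbour of $\tilde\phi(s)$ lying over $\phi(t)$. Since $p$ is a local isomorphism, every arc of $st$, whatever its colour and orientation, is present between $\tilde\phi(s)$ and $\tilde\phi(t)$. The rest of your BFS argument is unchanged. If every tree edge carries a single arc, your original version was already correct.
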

\begin{proof}
	All of those are proved in the same way as in the proof of \Cref{prop:basic-prop}, as consequences of \Cref{prop:tree-mapping-regular}
\end{proof}

\begin{proposition}\label{prop:regular-graph-necessary}
	If $G$ is a decorated graph, then $L^\tree(G) = \bb F$ if and only if $G$ contains a subgraph where every vertex has in and out degree at least 1 in every color.
\end{proposition}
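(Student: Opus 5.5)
For the easy direction, suppose $G$ contains a subgraph $H$ in which every vertex has, in every color $c$, at least one out-neighbour and at least one in-neighbour. I will show by induction on $|V(T)|$ that every tree $T \in \bb F$ maps to $H$, and more strongly that for every vertex $t \in V(T)$ and every $h \in V(H)$ there is a homomorphism $T \ra H$ sending $t$ to $h$. For a single vertex this is immediate. Otherwise, let $\ell$ be a leaf of $T$, attached to $p$ by an edge of color $c$ in some direction. Apply induction to $T - \ell$ to get a map sending $p$ to some vertex $h_p$, and send $\ell$ to an out- or in-neighbour of $h_p$ in color $c$, whichever the edge requires; such a neighbour exists by hypothesis. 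To root the map at a prescribed $t$, choose a leaf $\ell \neq t$, which exists when $|V(T)|\ge 2$, and apply the rooted induction hypothesis to $t$ in $T-\ell$. Hence $L^\tree(H) = \bb F$, and since $H \subseteq G$ we get $L^\tree(G) = \bb F$.

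For the converse, I will use a pruning argument. Suppose $L^\tree(G) = \bb F$. Repeatedly delete from $G$ any vertex $v$ that, for some color $c$, has no out-neighbour in color $c$ or no in-neighbour in color $c$ among the remaining vertices. The process stops at a subgraph $G_\infty$ satisfying the degree condition, and I must show $G_\infty$ is nonempty. To each deleted vertex $v$ I attach a witness rooted tree $(T_v, r_v)$ with the property that no homomorphism $T_v \ra G$ sends $r_v$ to $v$. The construction is by induction on deletion order. Suppose $v$ is deleted at step $i$ because every out-neighbour of $v$ in color $c$ (say) was deleted earlier, possibly because there are none. Then $T_v$ consists of a root $r_v$ with, for each earlier-deleted out-neighbour $u \in N^+_c(v)$, a new edge $r_v \arc c x_u$ and the tree $T_u$ glued at $x_u = r_u$. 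If $N^+_c(v) = \emptyset$, take the single edge $r_v \arc c x$.

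Now suppose some map $\phi$ sends $r_v$ to $v$. Then the child $x_u$ must go to some out-neighbour $w$ of $v$ in color $c$, and $w$ is one of the earlier-deleted $u'$. I need $T_v$ to block every such choice for every child simultaneously. The fix is to glue, at each child, the join of all the witnesses, that is $\stackrel{r}\Join_{u'} T_{u'}$, identifying all their roots. A rooted join $\Join$ maps with the root at $w$ only if each factor does, and the factor $T_w$ does not, so this blocks every choice. A single such child suffices.

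Finally, if every vertex of $G$ were deleted, glue all the witnesses $T_v$ at their roots to form a tree $T$. Any homomorphism $T \ra G$ sends the common root to some vertex $v$, contradicting the property of $T_v$. Thus $T \notin L^\tree(G)$, which contradicts $L^\tree(G) = \bb F$. So $G_\infty$ is a nonempty subgraph with the required degree condition. The main obstacle is this witness construction, in particular getting the gluing right so that the witness blocks all possible images of the child at once; the joining trick above handles it, and the argument is essentially arc-consistency.
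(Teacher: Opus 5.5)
Your proof is correct. The easy direction matches the paper's: extend the map vertex by vertex, using only edges of $H$. Your converse, however, takes a genuinely different route.

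The paper never prunes $G$. It maps one fixed tree $T_k$ into $G$, where $T_k$ has depth $k\ge |V(G)|$ and every internal vertex has an in- and an out-neighbour of every colour. It then cuts each branch at the first vertex whose image repeats that of an ancestor. The image of the trimmed tree is already the required subgraph, because internal vertices keep all their neighbours and every leaf shares its image with an internal vertex.

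You instead run arc consistency and build obstructions along the deletion order. The essential point, which you reached after discarding the one-child-per-neighbour attempt, is that $T_v$ is the single edge $r_v \arc c x$ with $\Join_{u\in N^+_c(v)} T_u$ glued at $x$. This blocks every possible image of $x$ at once. In a write-up, state only this corrected version.

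The two routes buy different things:
\begin{itemize}
\item The paper's argument is shorter and gives a uniform certificate: $L^{\tree}(G)=\bb F$ as soon as the single tree $T_{|V(G)|}$ maps to $G$, a test depending only on $|V(G)|$ and the colour set.
\item Yours is algorithmic. The pruning computes, in polynomial time, the largest subgraph with the degree property, since any such subgraph survives it. When that subgraph is empty, it outputs an explicit tree outside $L^{\tree}(G)$ tailored to $G$, in the same obstruction-building spirit as the duality arguments of Section~\ref{sec:morphism-duality}.
\end{itemize}
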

\begin{proof}
	If $H$ is a subgraph of $G$ with in and out degree at least 1 in every color, then a mapping of a decorated tree into $G$ can be found by mapping from the top down, choosing only from edges inside $H$.
	
	Conversely, if $L^\tree(G) = \bb F$. Let $T_k$ be the decorated tree of depth $k$ where every internal node has in and out degree at least 1 in every color. We call $r$ the central node of $T_k$, which we view as a root. $T_k$ must map to $G$ via a morphism $\phi$. If $k$ is large enough, then, in every path from the root to a leaf, there must be two vertices mapping to the same vertex of $G$. Consider $T_k'$ the subtree obtained by trimming $T_k$ is order that every leaf $v$ of $T_k'$ admits an ancestor $u$ such that $\phi(u) = \phi(v)$.
	
	Then, let $H = \phi(T)$, $H$ must have in and out degree at least 1 in every color as this is the case for every internal node of $T_k'$ and that the leaves have their image equal to that of a non-leaf.
\end{proof}

\Cref{prop:basic-prop-tree}, contrasting with \Cref{prop:directed-cycle-necessary} which only applies if $|\Sigma|=2$, illustrates how graph-based tree-automata can have tamer properties than graph-based automata, this will also be seen in \Cref{thm:minimal-tree-automata}.

In a context when we have trees with bounded degree, we can use more standard and constructive notions for regular sets, we will use the following :

\begin{definition}[Bottom-up decorated tree automaton]
Let $Q$ be a finite set of states and $F\subseteq Q$ a set of accepting states. A
\emph{bottom-up tree automata} is a tuple $A = (Q,\Gamma,F)$ where $\Gamma$ is a finite set of transitions of the form
\[
S(q_1,\dots,q_n)\to p,
\]
where $p,q_1,\dots,q_n\in Q$, and $S(q_1,\dots,q_n)$ is an decorated rooted star
whose root is labelled by $p$ and whose leaves are labelled by $q_1,\dots,q_n$.

A \emph{run} of $\Gamma$ on an decorated rooted tree $T$ is a state assignment to
the vertices such that, for every vertex, the corresponding edge-colored star belongs
to $\Gamma$. Leaves are handled by transitions of the form
\[
S()\to p.
\]
We say that $T$ is \emph{accepted} if it admits a run in which the root is assigned a
state in $F$. The language of $A$, denoted by $L(A)$, is the set of accepted
trees.
\end{definition}
In this paper, we will consider trees with unordered children, this is why we used a star in the transition rather than a tuple of decorated edges.

Also, the transitions may equivalently be viewed as grammar rules, with
$S(q_1,\dots,q_n)\to p$ describing the local replacement of a vertex in state $p$
by an edge-colored star whose leaves are in states $q_1,\dots,q_n$.

\begin{lemma}
	\label{lem:min-outside-L}
	Given a graph-based tree-language $L$, the set of minimal trees (for inclusion) not in $L$, is a regular set of trees with bounded degree.
\end{lemma}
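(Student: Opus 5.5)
The plan is to write $L = L^\tree(G)$ for a fixed finite decorated graph $G$ and let $M$ be the set of trees $T \notin L$ all of whose proper subtrees lie in $L$. Since $L$ is stable by subtree (\Cref{prop:basic-prop-tree}), "minimal for inclusion" only needs to be checked against proper subtrees, and it is enough to test subtrees obtained by deleting a single leaf. I would work with the sets $\Lambda_{(T,t)\ra G}$ from the proof of \Cref{prop:tree-mapping-regular}: for a rooted tree, the set of possible images of the root. These sets compose through the operation $\Join$, and $T\in L$ if and only if this set is nonempty.

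\textbf{Bounded degree.} Let $T\in M$ and let $t$ be a vertex of degree $d$, with branches $B_1,\dots,B_d$, i.e.\ the subtrees $T_{t\angle t_i}$ for the neighbours $t_i$ of $t$. Then
\[
\Lambda_{(T,t)\ra G}=\bigcap_i \Lambda_{(B_i,t)\ra G}=\emptyset .
\]
This is an intersection of subsets of $V(G)$, so some subfamily of at most $|V(G)|$ branches already has empty intersection: at each step, add a branch that removes at least one still-surviving vertex. If $d>|V(G)|$, deleting the other branches (and all their vertices) gives a proper subtree not in $L$, which contradicts minimality. Hence every vertex has degree at most $|V(G)|$, or $|V(G)|+1$ if we allow for a degenerate empty start. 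This step is straightforward.

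\textbf{Regularity.} I would build a bottom-up decorated tree automaton whose state at a vertex $v$, for a rooted tree, records two things:
\begin{enumerate}
\item the set $\Lambda$ of possible images of $v$ for the subtree below $v$;
\item for each leaf deletion in that subtree, the resulting $\Lambda'$ (the new root set), recorded as the set $\mathcal{D}\subseteq 2^{V(G)}$ of all sets achievable this way, including the effect of deleting $v$'s whole star when $v$ itself becomes a leaf.
\end{enumerate}
Both components are computed locally. For a vertex with children in states $(\Lambda_i,\mathcal{D}_i)$ along edges of given colour and orientation, we have $\Lambda=\bigcap_i \mathrm{pre}_{e_i}(\Lambda_i)$, where $\mathrm{pre}_e$ is the set of vertices of $G$ with an $e$-edge into the set. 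The set $\mathcal{D}$ is obtained by replacing one $\Lambda_i$ with a member of $\mathcal{D}_i$. If the vertex is a leaf child, deleting it amounts to dropping that factor, i.e.\ replacing it by $V(G)$. The state space is finite, at most $2^{|G|}\cdot 2^{2^{|G|}}$, and the transitions are finite because degrees are bounded by the previous step.

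Rooting each tree at an arbitrary vertex, I would accept when $\Lambda=\emptyset$ and every member of $\mathcal{D}$ is nonempty. A small care point is that deleting the root when it is a leaf must also be covered. I would handle this with an extra flag recording whether the root has degree one, together with the set obtained after removing it, computed by rerooting at its unique child. Alternatively, one could argue directly in the $\sim$-equivalence framework: whether a leaf-deleted tree lies in $L$ depends only on $\Lambda$ at its neighbour. I would also check that this class is expressible in the definition of regularity given, using finitely many classes of the $\sim_M$ relation, by showing that $(T,t)\sim_M(S,s)$ whenever their states agree.

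The main obstacle is this bookkeeping of leaf deletions at arbitrary positions, including the root and the interaction of deletion with a vertex becoming a leaf. I expect it to be resolved by the finite "one-deletion" summary $\mathcal{D}$ above, since minimality only requires checking single-leaf deletions.
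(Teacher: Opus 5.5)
Your proposal is correct and is essentially the paper's proof. You use the same invariants: $\Lambda_{(T,t)\ra G}$, together with the family of root-sets obtained after deleting one leaf other than the root (the paper's $\Lambda_{(T-1,t)\ra G}$). These give finitely many $\sim_M$-classes, and the same intersection-of-branches argument bounds the degree by $|V(G)|$, so your $+1$ hedge is unnecessary.

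Keep your extra flag for a root of degree one: it is needed, not cosmetic. The paper's two-component characterisation tacitly assumes $st$ is never a leaf, and this fails when joining with a single vertex. For example, let $G$ be a single vertex carrying an $a$-loop. The rooted trees $(t\overset{a}{-}u\overset{b}{-}v,\,t)$ and $(t'\overset{b}{-}v',\,t')$ share both invariants, yet only the second tree lies in $M$.
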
 
\begin{proof}
	Let $G$ be a graph such that $L = L^\tree(G)$. Let $M$ be the set of minimal trees outside $L$.
	For a (non-empty) rooted tree $(T,t)$, we consider $M/(T,t)$. Let $(S,s)$ be another non-empty rooted tree. 
	Let $\Lambda_{(T,t) \ra G}$ be the possible images of $t$ via a morphism from $T$ to $G$, and $\Lambda_{(T-1,t) \ra G} = \set{\Lambda_{(T',t) \ra G} \mid T' = T \setminus x, x \neq t}$ (actually, it will be enough to only consider the $x$ to be leaves). Then $T\stackrel {t,s} \Join S \in M$ if and only if the following are true : 
	\begin{enumerate}[itemsep=0pt] 
	\item There exists no morphisms from $T\stackrel {t,s} \Join S$ to $G$: $\Lambda_{(T,t) \ra G} \cap \Lambda_{(S,s) \ra G} = \emptyset$ . 
	\item  There exists a morphism if we remove one vertex (not $st$) from $T\stackrel {t,s} \Join S$:\\ 
	$$\forall \lambda \in \Lambda_{(T-1,t) \ra G}, \ \ \lambda \cap \Lambda_{(S,s) \ra G} \neq \emptyset$$
	 and similarly $\forall \lambda \in \Lambda_{(S-1,s) \ra G}, \ \ \lambda \cap \Lambda_{(T,t) \ra G} \neq \emptyset$.
	\item There exists a morphism if we remove $st$ from $T\stackrel {t,s} \Join S$: This is actually a consequence of 2. because $st$ is not a leaf and thus a cut vertex of $T\stackrel {t,s} \Join S$.
	%\item $\Lambda_{(T\setminus t,t) \ra G} \cap \Lambda_{(S\setminus t,s) \ra G} \neq \emptyset$. This condition follows from 2. 
	\end{enumerate} 
	Thus $M/(T,t)$ is characterised uniquely by $\Lambda_{(T,t) \ra G}$ and $\Lambda_{(T-1,t) \ra G}$.
	Since there are at most $2^{|G|}$ possibilities for $\Lambda_{T \ra G}$ and $2^{2^{|G|}}$ for $\Lambda_{(T-1,t) \ra G}$, $M$ is regular.
	
	Furthermore, if $T \in M$ has a vertex $s$ of degree $d$, say $N(s)=\set{t_1 \dots t_d}$, then, for $T$ to be minimal, we must have $\bigcap_{j \neq i} \Lambda_{(T_{s \angle t_i}\, ,\, s) \ra G} \neq \emptyset$ but $\bigcap_{j} \Lambda_{(T_{s \angle t_i}\, ,\, s) \ra G} = \emptyset$. Basic set theory implies that this can only happen if the ground set (here $G$) has size at least $d$. Therefore the trees of $M$ have maximum degree at most $n$.
	
	Note that using the possibilities for $\Lambda_{(T,t) \ra G}$ and $\Lambda_{(T-1,t) \ra G}$ as states, we can construct an explicit automata for $M$.
\end{proof}

\section{Characterisation of graph-based automata}
	
	\subsection{Path languages}
	
	\begin{definition}
		Fix a regular language $L$.
		For $w \in L$, let us define $$[w]\da  = \set{u \mathrel| \forall x, xw \in L \Ra xu \in L }$$ 
		
		For $q$ a state of an automaton recognizing $L$, let $L_q$ be the language of words starting from $q$. Then let $[L_q]\da = \bigcup_{w \in L_q} [w]\da$. (We might write $L_q^A$ and $[L_q^A]\da$ if $A$ is not clear from the context.)
	\end{definition}
	By Myhill-Nerode, since $L$ is regular, we get that there is a finite set of possible values for the $[L_q]\da$ (even if $q$ is a state from an infinite automata).
	
	\begin{lemma}[adding epsilon transitions]\label{lem:epsilon}
		Let $A$ be a (possibly infinite) automata and $r$, $s$ be two states verifying $[L_r]\da = [L_s]\da$. Then, if $A'$ is obtained from $A$ by adding the $\epsilon$-transition $r \stack{\eps}{\ra} s$, we have $L(A) = L(A')$.
		Furthermore, $[L_q]\da$ is unchanged from $A$ to $A'$.
	\end{lemma}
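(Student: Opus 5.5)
The approach is to prove the second claim first, since it implies the first. Write $L'_q$ for the language from $q$ in $A'$.

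\textbf{Preorder facts.} I will use three easy facts about the relation $u \in [w]\da$, each immediate from the definition.
\begin{itemize}
\item It is reflexive: $w \in [w]\da$.
\item It is transitive: if $u \in [v]\da$ and $v \in [w]\da$, then $u \in [w]\da$, because every left context $x$ of $w$ in $L$ is a left context of $v$, hence of $u$.
\item It is compatible with left concatenation: if $v \in [w]\da$, then $yv \in [yw]\da$ for every word $y$. Indeed, if $xyw \in L$ then, since $xy$ is a left context of $w$, we get $xyv \in L$.
\end{itemize}
Throughout, $L$ denotes the fixed language of $A$. So the definition of $[\cdot]\da$ does not depend on which automaton we look at, and only the sets $L_q$ change.

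\textbf{Key claim.} For every state $q$ and every $u \in L'_q$, there is $w \in L_q$ with $u \in [w]\da$. I would prove this by induction on the number $k$ of uses of the new transition $r \stack{\eps}{\ra} s$ in some accepting run of $A'$ on $u$ from $q$. Runs are finite, so this is well defined even when $A$ is infinite.

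For $k=0$ the run is a run of $A$, so $u \in L_q$ and reflexivity suffices.

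For $k>0$, split the run at the \emph{last} use of the $\eps$-transition. This writes $u = yv$, where:
\begin{itemize}
\item $y$ is read along a run of $A'$ from $q$ to $r$ that uses the $\eps$-transition $k-1$ times;
\item $v$ is accepted from $s$ by a run of $A$ itself, so $v \in L_s$.
\end{itemize}
By reflexivity, $v \in [L_s]\da = [L_r]\da$, so $v \in [w_0]\da$ for some $w_0 \in L_r$, computed in $A$. Then $yw_0 \in L'_q$ via a run with only $k-1$ uses of the $\eps$-transition: follow $y$ to $r$, then the $A$-run accepting $w_0$ from $r$. The induction hypothesis gives $w \in L_q$ with $yw_0 \in [w]\da$. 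Compatibility gives $yv \in [yw_0]\da$, and transitivity then gives $u = yv \in [w]\da$.

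\textbf{Conclusions.} The key claim yields $[L'_q]\da \subseteq [L_q]\da$. The reverse inclusion holds because $L_q \subseteq L'_q$. Hence $[L_q]\da$ is unchanged from $A$ to $A'$, which is the second statement.

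For $L(A') = L(A)$, let $u \in L(A')$ be accepted from an initial state $q$. The key claim gives $w \in L_q \subseteq L$ with $u \in [w]\da$. Taking the empty left context $x = \eps$ in the definition of $[w]\da$, we get $u \in L$. The other inclusion is trivial.

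\textbf{Main obstacle.} The only delicate point is organising the induction so that the hypothesis applies. Splitting at the last $\eps$-use leaves a pure $A$-suffix, where the hypothesis $[L_r]\da = [L_s]\da$ can be applied. It also leaves a prefix with fewer $\eps$-uses that can be re-completed by $w_0$. For this to work, the invariant must be stated for all states $q$, not just for initial states.
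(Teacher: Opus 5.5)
Your proof is correct and follows the paper's argument. Both proceed by induction on the number of uses of the new transition $r \stack{\eps}{\ra} s$, split the run at its last use, and replace the pure $A$-suffix $v \in L_s$ by some $w_0 \in L_r$ with $v \in [w_0]\da$; this strictly lowers the count. The only difference is presentation: you state the invariant ``every $u \in L'_q$ lies in $[w]\da$ for some $w \in L_q$'' explicitly for every state $q$, using left-compatibility and transitivity, which is exactly what the paper's brief ``same reasoning'' step for $L_r^{A'} \subset [L_r]\da^A$ relies on and which justifies the claim for all $q$, not only $r$.
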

	\begin{proof}
		Since $A$ is a sub-automata of $A'$, $L(A) \subset L (A')$, so we just need to prove $L(A') \subset L(A)$.
		We do so by induction on the (minimum) number of times the new $\epsilon$-transition is used to accept a word.
		
		Let $w \in L(A')$.
		If $w$ is accepted without using the $\epsilon$-transition, then $w\in L(A)$.
		Otherwise suppose $w = uv$ is accepted by the path $q_0 \stack u {\ra}_{A'} r \stack \epsilon \ra s \stack v {\ra}_A q_1$. With $q_0 \stack u {\ra}_{A'} r$ using a minimal number of $\epsilon$-transitions.
		Observe that, in $A$,  $v \in L_s \subset [L_s]\da \subset [L_r]\da$, therefore, there exists $v' \in L_r$ such that $\forall x, xv' \in L(A) \Ra xv \in L(A)$. In particular $uv' \in L(A) \Ra uv \in L(A)$.
		Since $v'\in L_r$, we have $r \stack {v'} {\ra}_A q_1'$, thus $q_0 \stack u {\ra}_{A'} r \stack {v'} {\ra}_A q_1'$. But observe that this new path uses one less $\epsilon$-transition, so by induction $uv' \in L(A)$ which implies that $uv=w \in L(A)$.
		
		Furthermore, with the same reasoning, starting with $q_0 = r$, we get that $L_r^{A'} \subset [L_r]\da^A$. %Indeed, for a word $w=uv \in L_r^{A'}$ : $r \stack u {\ra}_{A'} r \stack \epsilon \ra s \stack v {\ra}_A q_1$, we will find $u'v' \in L_r^A$ with $u',v' \in L_r^A$ and $u \in [L_{u'}]\da, v \in [L_{v'}]\da$. Thus $w=uv \in [L_r]\da$ and $L_r^{A'} \subset [L_r^A]\da$.
		Then, by monotonicity of $[\bullet]\da$, $L_r^A \subset L_r^{A'}\subset [L_r^A]\da \Ra [L_r^A]\da \subset [L_r^{A'}]\da \subset [L_r^A]\da $.
		
	\end{proof}

	\begin{lemma}[quotient of infinite automata]\label{lem:infinite_automata}
		If $A_\infty$ is a possibly infinite automata accepting a regular language, then there exists $A$ a finite automata with the same language obtained by identifying states from $A_\infty$.
	\end{lemma}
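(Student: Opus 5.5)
The plan is to use \Cref{lem:epsilon} repeatedly to merge states of $A_\infty$ that share the same value of $[L_q]\da$, and then use the finiteness of the set of such values.

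By the remark following the definition (Myhill--Nerode applied to the regular language $L = L(A_\infty)$), the map $q \mapsto [L_q]\da$ takes only finitely many values $D_1,\dots,D_m$ on the states of $A_\infty$. Partition the states into classes $C_i = \set{q \mid [L_q]\da = D_i}$. For every ordered pair $r,s$ in the same class, add the $\epsilon$-transition $r \stack{\eps}{\ra} s$. Call the resulting automaton $A'$.

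The first step is to show that $L(A') = L(A_\infty)$. Adding one $\epsilon$-transition preserves the language by \Cref{lem:epsilon}. It also preserves every value $[L_q]\da$: for $q = r$ this is the last assertion of the lemma, and for every other state $q$ we have $L_q^A \subset L_q^{A'} \subset [L_q^A]\da$, by the same induction and the same monotonicity argument. So the hypothesis of the lemma stays valid after each addition, and we may add all the transitions. There may be infinitely many of them. Here I would argue by compactness: any accepting run of a word in $A'$ uses only finitely many of the new $\epsilon$-transitions, so it already lives in an automaton obtained from $A_\infty$ by finitely many additions, where the language is unchanged. The induction inside \Cref{lem:epsilon} only needs a minimal number of uses of the added transitions along one run, so it applies directly in this setting.

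Once every class $C_i$ is strongly connected by $\epsilon$-transitions, identifying all states of $C_i$ into a single state $c_i$ gives the same language. Concretely, take the initial (resp.\ accepting) states of the quotient to be the images of initial (resp.\ accepting) states, and take the transitions $c_i \stack{a}{\ra} c_j$ to be the images of the transitions of $A_\infty$. A run in the quotient lifts to a run in $A'$: jump between representatives of the same class with the new $\epsilon$-transitions. Conversely, every run of $A_\infty$ projects to a run of the quotient. Hence the quotient $A$ has at most $m$ states, is obtained from $A_\infty$ by identifying states, and satisfies $L(A) = L(A') = L(A_\infty)$.

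The main obstacle is the bookkeeping needed to add infinitely many $\epsilon$-transitions while keeping the invariant $[L_q]\da$ fixed for all states, not just for $r$. This is why I would state the finite-addition case as an induction on the number of additions and pass to the limit via finiteness of runs.
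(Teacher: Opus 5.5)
Your proposal is correct and follows the paper's own argument: identify states with equal $[L_q]\da$ by adding $\epsilon$-transitions between them via \Cref{lem:epsilon}, use the fact that these additions leave the values $[L_q]\da$ unchanged, and get finiteness from the Myhill--Nerode remark. Your extra bookkeeping fills in details that the paper's three-line proof leaves implicit: the invariant is checked for every state rather than only for $r$, infinitely many additions are handled through finiteness of runs, and the lifting of quotient runs is made explicit.
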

	
	\begin{proof} Fix a language $L$ and let $A$ be obtained from $A_\infty$ by identifying all states with the same value for $[L_q]\da$.
		Using Lemma \ref{lem:epsilon}, we know we can add (bi-directional) epsilon transitions between all of those states, which is equivalent to merging them. Note that this is well defined because adding the $\eps$-transitions does not change the values of $[L_q]\da$.

	\end{proof}
	
	\begin{theorem}[characterisation of graph-based languages]
	\label{thm:language-characterisation}
		$L$ is the language of a graph-based automaton if and only if: 
		\begin{itemize}
			\item $L$ is a regular language
			\item $L$ is stable by factor : $ xuy \in L \Ra u \in L$
			\item $L$ verifies the "back and forth" property : $ xuy \in L \Ra xu u^{-1} u y \in L$
		\end{itemize}
	\end{theorem}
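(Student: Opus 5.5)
The proof has two directions. For necessity we use the automaton $A(G)$, and for sufficiency we build an infinite graph whose language is exactly $L$ and then quotient it to a finite graph with \Cref{lem:infinite_automata}.

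\emph{Necessity.} Suppose $L=L(G)$. Then $L$ is regular because $A(G)$ is a finite NFA, and $L$ is stable by factors by \Cref{prop:basic-prop}. For the back-and-forth property, take an accepting run for $xuy$. It passes through states $p \xrightarrow{u} q$. Since every transition of $A(G)$ comes with its opposite, $q \xrightarrow{u^{-1}} p$ is also a run. Splicing $p\xrightarrow{u}q\xrightarrow{u^{-1}}p\xrightarrow{u}q$ into the original run accepts $xuu^{-1}uy$.

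\emph{Sufficiency.} Assume $L$ satisfies the three conditions. First, $L$ is stable by inverse: if $u\in L$, apply back-and-forth with $x=y=\epsilon$ to get $uu^{-1}u\in L$, and take the factor $u^{-1}$. Next, let $G_\infty=\bigsqcup_{w\in L}P_w$, where $P_w$ is the decorated path described by $w$. Then $L\subseteq L(G_\infty)$.

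For the converse inclusion we must show that every walk in some $P_w$ reads a word of $L$. A walk in $P_w$ need not be monotone, so this is where back-and-forth does the real work. We prove by induction on the length of the walk $W$ that its word $\omega$ is a factor of a word of $L$, via an invariant of the form
\[
w_{[0,s]}\,\omega\,w_{[p,n]}\in L ,
\]
where $W$ runs from position $s$ to position $p$, possibly padded by a suitable reversal $zz^{-1}$ at the end. A forward step extends $\omega$ by the next letter of $w$ and keeps the invariant trivially. A backward step (a turn) must insert a pattern $a\,a^{-1}a$. This is exactly an instance of $xuu^{-1}uy\in L$ with $u$ the last segment traversed, and factor-stability then trims the result.

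Getting this invariant right is the step I expect to be the main obstacle. The difficulty is that at a turn the last letters of $\omega$ need not coincide with the letters of $w$ just before position $p$, when the walk arrived from the right. My first attempt is to carry in the invariant the last monotone segment $u$ of $W$, i.e.\ to require $x\,\omega\,y\in L$ with $\omega=\omega' u$ and $u$ the maximal final monotone run. A reversal then applies back-and-forth to $u$ itself, or to its final letter.

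\emph{Finite quotient.} Once $L(G_\infty)=L$ is established, $A(G_\infty)$ is an infinite automaton with every state initial and final and with symmetric transitions, and it recognises the regular language $L$. By \Cref{lem:infinite_automata}, identifying states with equal $[L_q]\da$ yields a finite automaton with language $L$. By the remark after the definition of $[L_q]\da$ there are finitely many such classes. The identification is a quotient of vertices, so every state stays initial and final, and every pair of opposite transitions maps to a pair of opposite transitions. Hence the quotient is $A(G)$ for the quotient graph $G$, and $L=L(G)$.
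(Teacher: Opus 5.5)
There is a genuine gap: the inclusion $L(G_\infty)\subseteq L$ is left unproved, and it is the whole content of sufficiency. This is the inclusion you flag yourself, namely that every walk in $P_w$ reads a factor of a word of $L$. Your necessity direction, the derivation of inverse-stability, and the finite quotient via \Cref{lem:infinite_automata} all match the paper.

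Your step-by-step invariant does not go through as described. Under $w_{[0,s]}\,\omega\,w_{[p,n]}\in L$, a rightward step is free. A leftward step taken right after a rightward step is handled by back-and-forth on the last letter $w_p$. But a leftward step that \emph{continues} a leftward run is not a turn; you identify backward steps with turns. This step requires inserting $w_p^{-1}w_p$ at a junction flanked by $w_{p+1}^{-1}$ on the left and $w_{p+1}$ on the right. Back-and-forth on the final run $u$ inserts $u^{-1}u$, a detour back to the start of the run. Back-and-forth on its final letter inserts $w_{p+1}w_{p+1}^{-1}$. Neither traverses the edge between positions $p-1$ and $p$.

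For instance, take $w=abc$ with undirected colours and the walk $3\to2\to1$. The previous invariant word is $abc\cdot c\cdot c$ and the required one is $abc\cdot cb\cdot bc$. It can be obtained, but only by applying back-and-forth with $u=bc$ inside the prefix and then taking a factor, not at the junction. Your sketch gives no rule for where to apply the property in general.

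The paper avoids a local invariant altogether. Extend the walk to a walk from position $0$ to position $n$ reading $v=xuy$, and induct on $|v|$; if $|v|=|w|$ then $v=w$. Cut the walk into maximal monotone segments $S_0,\dots,S_k$. The outer segments go right, and since the walk cannot leave the path, $|S_1|\le|S_0|$ and $|S_{k-1}|\le|S_k|$. So a segment $S_i$ of minimal length can be chosen with $0<i<k$. Its two neighbours are at least as long and run in the opposite direction, so $s_{i-1}s_is_{i+1}$ contains $s_i^{-1}s_is_i^{-1}$ as a factor. Replacing that factor by $s_i^{-1}$ gives a strictly shorter walk with the same endpoints, whose word lies in $L$ by induction. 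One application of back-and-forth with $u=s_i^{-1}$ recovers $v$, and factor-stability then gives $u\in L$.
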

	
	\begin{proof}
		\boxed \Ra\ If $A$ is a graph-based automata accepting $L$, suppose $q_0 \stackrel{x}\ra q_1 \stackrel{u}\ra q_2 \stackrel{y}\ra q_3$, then $ q_1 \stackrel{u}\ra q_2$ and $q_0 \stackrel{x}\ra q_1 \stackrel{u}\ra q_2  \stackrel{u^{-1}}\ra  q_1 \stackrel{u}\ra q_2 \stackrel{y}\ra q_3$ are valid paths and therefore the corresponding words are in $L$.
		
		\boxed \La\ Fix a regular language $L$ satisfying the properties of the theorem. For $w\in L$, let $A_w$ be the graph-based automaton in the form of a path whose labels correspond to the letters of $w$. In other words, $A_w = q_0 \stackrel{w_0}{\lra} q_1 \stackrel{w_1}{\lra} \dots \stackrel{w_{|w|}}{\lra} q_{|w|}$.
		Observe that $w \in L(A_w)$. 
		First, we prove that every word in $L(A)$ can be obtained using the subword and back and forth operation (possibly multiple times) starting with $w$.
		Let $u \in L(A_w)$, and let $q_{i_0} \stackrel{u}\ra q_{i_m}$ be the path accepting $u$ with $m = |u|$. This path can be extended to $q_0 \stackrel{x}\ra q_{i_0} \stackrel{u}\ra q_{i_m} \stackrel{y}\ra q_{|w|} $. Let $v=xuy$. Observe that $u$ is a factor of $v$. We now show that $v$ can be obtained from $w$ by iteration of the \emph{back and forth} property. We do it by induction on the size of $v$.
		\begin{itemize}
			\item If $|v| \leq |w|$ : $v$ is accepted by a path starting at $q_0$ and ending at $q_w$. Since these two states are at distance $|w|$ in $A_w$, we conclude $v=w$.
			\item If $|v| > |w|$ : We call a path on $A_w$ a \emph{segment} if all transitions go in the same direction, meaning going all toward $q_{|w|}$ (right) or all away from it (left). 
			Note that a right-going segment from $q_i$ to $q_j$ accepts $w[i..j]$ while a left-going one accepts $w[i..j]^{-1}$.
			
			Decompose the path accepting $v$ into its (maximal) segments : $S_0,\dots,S_k$, accepting the words $s_0,\dots, s_k$ note that $S_1$ and $S_k$ must be going right and two consecutive segments must go in opposite directions (otherwise they could be merged in a bigger one). Since the path cannot go left of $q_0$ nor right of $q_{|w|}$, we have that $|S_0| \geq |S_1|$ and $|S_{k-1}| \leq |S_k|$.
			Let $S_i$ be a segment of minimal length, by the previous observation, we can choose $0<i<k$. Observe that $s_{i-1} s_i s_{i+1}$ contains $s_i^{-1}s_i{s_i}^{-1}$ as a factor. Let $v'$ be obtained from $v$ by replacing that occurrence of ${s_i}^{-1}s_i{s_i}^{-1}$ by ${s_i}^{-1}$. $v'$ is still in $A_w$ so by induction $v'$ is obtained from $w$ by iteration of the back and forth operation, and $v$ is obtained from $v'$ by a back and forth operation, which yields the desired result.

		\end{itemize}
		
		Therefore $L(A_w) \subset L$
		
		Now let $A_\infty = \bigcup_{w \in L} A_w$. We must have that $L(A_\infty) = L$. Note that $A_\infty$ is a graph-based automata and is, in general, infinite.
		
		Using Lemma \ref{lem:infinite_automata}, one deduces the existence of a finite automaton $A$ such that $L(A) = L(A_\infty) = L$, and $A$ is obtained from $A_\infty$ by identifying states. Therefore, $A$ is also a graph-based automaton.
	\end{proof}
	
	The above proof gives a way to construct a canonical graph-based automaton $A$ from a graph-based language. But, as usual for nondeterministic automata, this graph is not necessarily minimal, and there is no unique minimal graph-based automaton in general. See \Cref{fig:counter-example-minimality} for an example of two minimal graphs accepting the same language. 
	
	\begin{figure}[ht]
		\centering
		\begin{tikzpicture}
	\begin{pgfonlayer}{nodelayer}
		\node [style=black empty] (0) at (-5, 0) {};
		\node [style=black empty] (1) at (-2.5, 2.5) {};
		\node [style=black empty] (2) at (-2.5, -2.5) {};
		\node [style=black empty] (5) at (0, 0) {};
		\node [style=none] (6) at (-4.5, 1.75) {a,b};
		\node [style=none] (7) at (-1, 1.5) {c};
		\node [style=none] (8) at (-1, -1.5) {d};
		\node [style=none] (9) at (-4.25, -1.5) {b};
		\node [style=none] (10) at (2, 0) {\#,c,d};
		\node [style=none] (11) at (-7, 0) {\$,a,b};
		\node [style=none] (26) at (8.75, 1.5) {a};
		\node [style=none] (28) at (12.5, -2) {c,d};
		\node [style=black empty] (32) at (8, 0) {};
		\node [style=black empty] (33) at (10.5, 2.5) {};
		\node [style=black empty] (34) at (10.5, -2.5) {};
		\node [style=black empty] (35) at (13, 0) {};
		\node [style=none] (37) at (12.25, 1.5) {c};
		\node [style=none] (39) at (8.75, -1.5) {b};
		\node [style=none] (40) at (15, 0) {\#,c,d};
		\node [style=none] (41) at (6, 0) {\$,a,b};
		\node [style=none] (42) at (-2.5, 3.75) {a,b,c,d};
		\node [style=none] (43) at (-2.5, -3.75) {a,b,c,d};
		\node [style=none] (44) at (10.5, 3.75) {a,b,c,d};
		\node [style=none] (45) at (10.5, -3.75) {a,b,c,d};
		\node [shape=rectangle, fill=white, fill opacity=0.8] (46a) at (-2.5, 0) {a,b,c,d};
		\node [style=none] (46) at (-2.5, 0) {a,b,c,d};
		\node [shape=rectangle, fill=white, fill opacity=0.8] (47a) at (10.5, 0) {a,b,c,d};
		\node [style=none] (47) at (10.5, 0) {a,b,c,d};
	\end{pgfonlayer}
	\begin{pgfonlayer}{edgelayer}
		\draw [style=Edge, in=135, out=-135, loop] (0) to ();
		\draw [style=Edge, in=-45, out=45, loop] (5) to ();
		\draw [style=Edge, thick, draw={rgb,255: red,0; green,168; blue,0}, bend right=15] (0) to (1);
		\draw [style=Edge, thick, draw={rgb,255: red,0; green,168; blue,0}] (0) to (2);
		\draw [style=Edge, thick, draw={rgb,255: red,180; green,0; blue,20}] (2) to (5);
		\draw [style=Edge, thick, orange] (1) to (5);
		\draw [style=Edge, thick, blue, bend left=15] (0) to (1);
		\draw [style=Edge, thick, blue, in=135, out=-180, loop] (1) to ();
		\draw [style=Edge, thick, draw={rgb,255: red,180; green,0; blue,20}, in=0, out=45, loop] (1) to ();
		\draw [style=Edge, thick, orange, in=45, out=90, loop] (1) to ();
		\draw [style=Edge, thick, draw={rgb,255: red,0; green,168; blue,0}, in=90, out=135, loop] (1) to ();
		\draw [style=Edge, thick, blue, in=-135, out=180, loop] (2) to ();
		\draw [style=Edge, thick, draw={rgb,255: red,180; green,0; blue,20}, in=0, out=-45, loop] (2) to ();
		\draw [style=Edge, thick, orange, in=-90, out=-45, loop] (2) to ();
		\draw [style=Edge, thick, draw={rgb,255: red,0; green,168; blue,0}, in=-135, out=-90, loop] (2) to ();
		\draw [style=Edge, thick, blue, bend right] (1) to (2);
		\draw [style=Edge, thick, draw={rgb,255: red,180; green,0; blue,20}, bend right] (2) to (1);
		\draw [style=Edge, thick, orange, bend left=15, looseness=0.75] (1) to (2);
		\draw [style=Edge, thick, draw={rgb,255: red,0; green,168; blue,0}, bend left=15, looseness=0.75] (2) to (1);
		\draw [style=Edge, in=135, out=-135, loop] (32) to ();
		\draw [style=Edge, in=-45, out=45, loop] (35) to ();
		\draw [style=Edge, thick, draw={rgb,255: red,0; green,168; blue,0}] (32) to (34);
		\draw [style=Edge, thick, draw={rgb,255: red,180; green,0; blue,20}, bend right=15] (34) to (35);
		\draw [style=Edge, thick, orange] (33) to (35);
		\draw [style=Edge, thick, draw=blue] (32) to (33);
		\draw [style=Edge, thick, blue, in=135, out=-180, loop] (33) to ();
		\draw [style=Edge, thick, draw={rgb,255: red,180; green,0; blue,20}, in=0, out=45, loop] (33) to ();
		\draw [style=Edge, thick, orange, in=45, out=90, loop] (33) to ();
		\draw [style=Edge, thick, draw={rgb,255: red,0; green,168; blue,0}, in=90, out=135, loop] (33) to ();
		\draw [style=Edge, thick, blue, in=-135, out=180, loop] (34) to ();
		\draw [style=Edge, thick, draw={rgb,255: red,180; green,0; blue,20}, in=0, out=-45, loop] (34) to ();
		\draw [style=Edge, thick, orange, in=-90, out=-45, loop] (34) to ();
		\draw [style=Edge, thick, draw={rgb,255: red,0; green,168; blue,0}, in=-135, out=-90, loop] (34) to ();
		\draw [style=Edge, thick, blue, bend right] (33) to (34);
		\draw [style=Edge, thick, draw={rgb,255: red,180; green,0; blue,20}, bend right] (34) to (33);
		\draw [style=Edge, thick, orange, bend left=15, looseness=0.75] (33) to (34);
		\draw [style=Edge, thick, draw={rgb,255: red,0; green,168; blue,0}, bend left=15, looseness=0.75] (34) to (33);
		\draw [style=Edge, thick, orange, bend left=15] (34) to (35);
		\draw [thick, orange, in=45, out=135, loop] (5) to ();
		\draw [thick, draw={rgb,255: red,180; green,0; blue,20}, in=-135, out=-45, loop] (0) to ();
		\draw [thick, blue, in=60, out=135, loop] (0) to ();
		\draw [thick, draw={rgb,255: red,0; green,168; blue,0}, in=-135, out=-45, loop] (5) to ();
		\draw [thick, orange, in=45, out=135, loop] (35) to ();
		\draw [thick, draw={rgb,255: red,180; green,0; blue,20}, in=-120, out=-45, loop] (35) to ();
		\draw [thick, blue, in=45, out=135, loop] (32) to ();
		\draw [thick, draw={rgb,255: red,0; green,168; blue,0}, in=-135, out=-45, loop] (32) to ();
	\end{pgfonlayer}
\end{tikzpicture}
		\caption{Two minimal graphs for the same language, colours indicate the labels $a,b,c,d$.}
		\label{fig:counter-example-minimality}
	\end{figure}
	
	The two graphs accept the same language $L$, which can be defined through forbidden factors:\\
	$L = \Sigma^* \setminus \left(\bigcup_{w \in F\cup F^{-1}} \Sigma^*w\Sigma^*\right)$ with $\Sigma = \set{\$,\#,a,b,c,d}, F = \set{ \$\#, \$c,\$d, a\#,b\#, \$ad\#}$.
	They are minimal in terms of size because a graph needs at least 3 different states: the states accepting $\$$ and $\#$ are non-adjacent. However, three states are not sufficient to accept $\$ac\#$ and $\$bd\#$ but not $\$ad\#$.

	\subsection{Tree languages}
	Now we give a characterisation similar to \Cref{lem:epsilon} for tree automata.
	
	\begin{lemma}\label{lem:infinite_automata_tree}
		If $G_\infty$ is an infinite graph accepting a regular tree language, then there exists $G$ a finite graph with the same tree-language obtained by identifying vertices from $G_\infty$.
	\end{lemma}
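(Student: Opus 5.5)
The plan is to mimic the word case (Lemma~\ref{lem:epsilon} and Lemma~\ref{lem:infinite_automata}), with the downward closure $[L_q]\da$ replaced by a rooted-tree analogue. Let $L = L^\tree(G_\infty)$, assumed regular, so $\sim_L$ has finitely many classes. For a vertex $v$ of $G_\infty$, let $L_v$ be the set of rooted trees $(T,t)$ admitting a morphism to $G_\infty$ sending $t$ to $v$. Define
\[
[L_v]\da = \set{(S,s) \mid \exists (T,t)\in L_v,\ L/(T,t) \subseteq L/(S,s)}.
\]
This set depends only on which $\sim_L$-classes (equivalently, which quotients $L/(T,t)$) occur in $L_v$. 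Hence it takes finitely many values as $v$ ranges over $V(G_\infty)$. Let $G$ be the quotient of $G_\infty$ identifying vertices with equal $[L_v]\da$. Since $G$ is a homomorphic image, $L \subseteq L^\tree(G)$ is immediate, and $G$ is finite.

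For the reverse inclusion, I will argue as in Lemma~\ref{lem:epsilon}: identify one pair $r,s$ with $[L_r]\da=[L_s]\da$ at a time, and show the tree language is unchanged. Merging the whole class at once is equivalent to a possibly infinite sequence of such identifications. Because any single tree uses only finitely many vertices, a tree accepted by $G$ is already accepted after finitely many identifications, so it suffices to handle one merge and to check that the values $[L_q]\da$ are preserved. For one merge, let $T$ map to $G'=G_\infty/(r=s)$ via $\phi$. The preimage of the merged vertex is a set $X$ of tree vertices, and I induct on $|X|$ (or on the number of vertices of $X$ that genuinely need the identification).

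The key step is this. Pick $x\in X$, and split $T$ at $x$ into branches $T_{x\angle y}$ for the neighbours $y$ of $x$. Group them into the part $P$ whose $\phi$-image near $x$ uses edges of $r$ and the part $Q$ using edges of $s$, so that $T = P \stackrel{x}{\Join} Q$. Then $(Q,x)\in L_s\subseteq[L_s]\da=[L_r]\da$ in the appropriate sense. So there is $(Q',q')\in L_r$ with $L/(Q',q')\subseteq L/(Q,x)$. Replace $Q$ by $Q'$ glued at $x$: the new tree maps with $x\mapsto r$ using one fewer essential use of the merge. By induction it lies in $L$, so $P\in L/(Q',q')\subseteq L/(Q,x)$, giving $T\in L$. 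I must also check that the same argument, rooted at $r$, shows $L_r^{G'}\subseteq[L_r]\da$, so that $[L_r]\da$ is unchanged and subsequent merges remain valid, exactly as at the end of Lemma~\ref{lem:epsilon}.

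I expect the main obstacle to be making the induction measure precise. In a tree, a single occurrence $x\mapsto r/s$ may combine branches from both sides, and the replacement $Q'$ may itself contain further occurrences of the merged vertex. For words, "first use of the $\epsilon$-transition" gave a clean order. Here I would instead induct on the total number of tree vertices whose incident edges are not all realised in $G_\infty$ at a single preimage. I would first try to ensure $Q'$ is mapped entirely in $G_\infty$, which removes that issue. Another point to verify is that the quotient $L/(T,t)$ is taken with respect to $L$ itself, which is legitimate because the target language is fixed throughout.
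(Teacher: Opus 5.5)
Your architecture matches the paper's. You use the same quotient (your $[L_v]\da$ is the paper's $\bigcup_{T\in L_v}[T,v]\da$), the same free inclusion $L\subseteq L^\tree(G)$, and an exchange that replaces a branch accepted at $s$ by one accepted at $r$ with smaller quotient.

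The gap is in the exchange step. For an arbitrary $x\in X$, the $s$-side part $Q$ can itself contain tree vertices that genuinely need the identification. For example, take a path $a\,x_1\,b\,x_2\,c$ where each $x_i$ uses an $r$-edge to its left and an $s$-edge to its right; splitting at $x_1$ puts $x_2$ inside $Q$. Then $(Q,x)$ is accepted from the merged vertex of $G'$, but in general not from $s$ in $G_\infty$. Since $[L_r]\da=[L_s]\da$ is a statement about $G_\infty$, it gives you no $(Q',q')$. Your proposed repair does not address this: $Q'$ lies in $G_\infty$ automatically, since it is taken from $L_r$; it is $Q$ that has to.

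The missing idea is an extremal choice of $x$. Root $T$ anywhere and take a genuinely merged vertex $x$ of maximal depth. Its parent edge lies on one side, say the $r$-side, so the $s$-side consists of child branches only. These contain no genuinely merged vertex, and therefore map into $G_\infty$ with $x\mapsto s$. The exchange then removes exactly one genuinely merged vertex, and induction on their number closes the argument. This is exactly the paper's proof: it adds $\epsilon$-edges between identified vertices, lifts $T$ to a tree containing $\epsilon$-edges, and chooses an $\epsilon$-edge $su$ whose side $S$ at $s$ uses no $\epsilon$-edge.

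Once you make this choice, the one-pair-at-a-time scaffolding becomes unnecessary. Every exchanged branch lives in $G_\infty$, where all identified pairs have equal $[\cdot]\da$ by definition. So all identifications can be handled at once, by inducting on the number of genuine uses in the given tree. You then never need the invariance of $[L_q]\da$ under a merge, which your route still owes and which would itself need the same extremal choice, made away from the root.

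Finally, read quotients as rooted, $\set{(Y,y) \mid Y\stackrel{y,t}{\Join} T\in L}$, as in the paper's $[T,t]\da$. With the existential-root $L/(T,t)$ of the preliminaries, $P\in L/(Q,x)$ only says that $Q$ can be glued to \emph{some} vertex of $P$, not necessarily to $x$, so it would not give $P\stackrel{x}{\Join}Q\in L$.
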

	\begin{proof}
		Fix a regular tree language $L$, for $(T,t) \in L^\star$, let $$[T,t]\da = \set{(S,s) | \forall (X,x) \in \bb F^\star, X\stackrel {x,t} \Join T \in L \Ra X\stackrel {x,s} \Join S \in L }$$
		For $u$ a vertex of a graph $G_\infty$ recognising $L$, let $L_u$ be the language of trees starting from $u$, then let $[L_u]\da = \bigcup_{T \in L_u} [T,u]\da$. 
		We claim that identifying all vertices of $G_\infty$ with the same value for $[L_u]\da$, will yield a finite graph $G$ with $L^\tree(G) = L^\tree(G_\infty)$. 
		
		In order to be able to describe the trees in $G$ based on the vertices in $G_\infty$, we add $\epsilon$-labeled edges between all pairs of identified vertices in $G_\infty$, we write $G_\infty^\epsilon$, we have $L^\tree(G) = L^\tree(G_\infty^\epsilon)$. Then, edges of $G$ lift to edges of $G_\infty^\epsilon$ and vertices of $G$ lift to $\epsilon$-labeled cliques of $G_\infty^\epsilon$. (see \Cref{fig:triangle_lift}).
		
		Since $G$ is obtained from identifying vertices of $G_\infty$, we have $L^\tree(G_\infty) \subset L^\tree(G) $. 
		Now, we prove $L^\tree(G) = L^\tree(G_\infty^\epsilon) \subset L^\tree(G_\infty)$.
		%We do an induction on the minimum number of $\epsilon$-edges needed in to accept a tree $T$ in $G_\infty^\epsilon$. 
		Consider some tree $T \in L^\tree(G_\infty^\epsilon)$, for the following analysis, in a slight abuse of notation, we will consider the $\epsilon$-edges used in the accepting run as edges of the tree itself (as in \Cref{fig:triangle_lift}, top right).
				
		If $T$ is accepted without using the $\epsilon$-edges, then $T\in L^\tree(G_\infty)$.
		Otherwise, let $T = R \stackrel {s} \Join S$ such that: 
		\begin{itemize} [itemsep=0pt]
			\item $s$ is $\eps$-adjacent to some vertex $u$.
			\item $S$ does not use any $\epsilon$-edges, 
			\item $R = T_{s \angle u}$, 
		\end{itemize}
		
		Since $u$ and $s$ are $\epsilon$-adjacent, we must have in $G_\infty$: $[L_{u}]\da = [L_{s}]\da \ni (S,s)$, therefore, there exists $(U,u) \in L_{u}$ such that:
		 $$ R\stackrel {u} \Join U \in L^\tree(G_\infty) \Ra \left( R\stackrel {s} \Join S = T \in L^\tree(G_\infty) \right).$$ 
		In $R\stackrel {u} \Join U$, the edge $su$, has to be a pendant edge.
		This edge is thus useless for the language of $R\stackrel {u} \Join U$ and we can therefore remove it.
		
		Repeating the procedure, we end up with a tree $T'$ of $G_\infty$ such that $T' \in L^\tree(G_\infty) \Ra T\in L^\tree(G_\infty)$. Thus $T \in L^\tree(G_\infty)$.
	\end{proof}
	
	\begin{figure}[ht]
		\centering
		\begin{tikzpicture}
	\begin{pgfonlayer}{nodelayer}
		\node [style=none] (18) at (-3.75, 7.75) {$\epsilon$};
		\node [style=black empty] (19) at (-4, 6) {2};
		\node [style=black empty] (21) at (-2.75, 4) {4};
		\node [style=black empty] (23) at (-1.5, 6) {3};
		\node [style=black empty] (24) at (-4, -0.5) {2};
		\node [style=black empty] (25) at (-1.5, -0.5) {13};
		\node [style=none] (27) at (-2.75, 3) {};
		\node [style=none] (28) at (-2.75, 0.25) {};
		\node [style=black empty] (29) at (-6.5, 6) {1};
		\node [style=black empty] (31) at (-2.75, -2.5) {4};
		\node [style=small black empty] (32) at (3, -1.5) {};
		\node [style=small black empty] (33) at (5, -1.5) {};
		\node [style=small black empty] (34) at (7, -1.5) {};
		\node [style=small black empty] (35) at (9, -1.5) {};
		\node [style=small black empty] (36) at (5, -1.5) {};
		\node [style=none] (37) at (2, -1.5) {};
		\node [style=none] (38) at (-1, -1.5) {};
		\node [style=small black empty] (39) at (5, 4.75) {};
		\node [style=small black empty] (41) at (9, 4.75) {};
		\node [style=small black empty] (42) at (11, 4.75) {};
		\node [style=small black empty] (43) at (7, 4.75) {};
		\node [style=small black empty] (44) at (3, 4.75) {};
		\node [style=none] (45) at (6, 5.5) {$\epsilon$};
		\node [style=none] (46) at (2.25, 4.75) {};
		\node [style=none] (47) at (-0.75, 4.75) {};
		\node [style=none] (48) at (5, -1) {13};
		\node [style=none] (49) at (7, -1) {2};
		\node [style=none] (50) at (3, -1) {2};
		\node [style=none] (51) at (9, -1) {4};
		\node [style=none] (52) at (5, 4.25) {1};
		\node [style=none] (53) at (9, 4.25) {2};
		\node [style=none] (54) at (3, 4.25) {2};
		\node [style=none] (55) at (11, 4.25) {4};
		\node [style=none] (56) at (7, 4.25) {3};
		\node [style=none] (57) at (6, 1) {};
		\node [style=none] (58) at (6, 3) {};
		\node [style=none] (59) at (-3.25, 1.75) {identifying vertices};
		\node [style=none] (60) at (6.75, 2) {lift};
		\node [style=none] (61) at (0.75, -1) {mapping};
		\node [style=none] (62) at (0.75, 5.25) {mapping};
		\node [style=small black empty] (63) at (8.25, 3.25) {};
		\node [style=none] (64) at (-6.5, 4.25) {$G_\infty^\epsilon$};
		\node [style=none] (65) at (-6.5, -1) {$G$};
		\node [style=none] (66) at (11, -1) {$T$};
		\node [style=none] (67) at (12.75, 4.25) {$T$};
		\node [style=none] (68) at (12.75, 3.5) {\tiny{(with epsilon transition)}};
		\node [style=small black empty] (69) at (6, -3) {};
	\end{pgfonlayer}
	\begin{pgfonlayer}{edgelayer}
		\draw [style={blue_edge}] (19) to (23);
		\draw [style=edge red] (23) to (21);
		\draw [style={blue_edge}] (24) to (25);
		\draw [style=Arc] (27.center) to (28.center);
		\draw [style={blue_edge}] (29) to (19);
		\draw [style={dashed_edge}, bend left=90, looseness=0.50] (29) to (23);
		\draw [style=edge red] (25) to (31);
		\draw [style=edge red] (21) to (19);
		\draw [style=edge red] (31) to (24);
		\draw [style={blue_edge}] (32) to (36);
		\draw [style={blue_edge}] (36) to (34);
		\draw [style=edge red] (34) to (35);
		\draw [style=Arc] (37.center) to (38.center);
		\draw [style={blue_edge}] (43) to (41);
		\draw [style=edge red] (41) to (42);
		\draw [style={blue_edge}] (44) to (39);
		\draw [style={dashed_edge}] (39) to (43);
		\draw [style=Arc] (46.center) to (47.center);
		\draw [style=Arc] (57.center) to (58.center);
		\draw [style=edge red] (43) to (63);
		\draw [style=edge red] (36) to (69);
	\end{pgfonlayer}
\end{tikzpicture}
		\caption{A lift of a path after identification of vertices}
		\label{fig:triangle_lift}
	\end{figure}
	
	\begin{figure}[ht]
		\centering
		\scalebox{1}{\begin{tikzpicture}
	\begin{pgfonlayer}{nodelayer}
		\node [style=none] (19) at (0, 2.25) {};
		\node [style=none] (20) at (-2.75, 7.5) {};
		\node [style=none] (21) at (2.75, 7.5) {};
		\node [style=none] (22) at (0, 9.25) {};
		\node [style=none] (4) at (0, 3.5) {};
		\node [style=none] (5) at (-1.75, 0.75) {};
		\node [style=none] (6) at (1.75, 0.75) {};
		\node [style=none] (7) at (0, -0.25) {};
		\node [style=black empty] (8) at (0, 5) {};
		\node [style=black empty] (9) at (0, 2.75) {};
		\node [style=black empty] (10) at (-1, 0.75) {};
		\node [style=black empty] (11) at (1, 0.75) {};
		\node [style=black empty] (12) at (0, 7) {};
		\node [style=black empty] (13) at (-1.25, 5) {};
		\node [style=black empty] (14) at (1.25, 5) {};
		\node [style=black empty] (15) at (1, 7) {};
		\node [style=black empty] (16) at (2, 7) {};
		\node [style=none] (17) at (0, 2.75) {$s$};
		\node [style=none] (18) at (0, 5) {$u$};
		\node [style=black empty] (23) at (-2, 7) {};
		\node [style=none] (24) at (0, 8) {$R$};
		\node [style=none] (25) at (0, 0.25) {$S$};
		\node [style=black empty] (26) at (-1, 7) {};
		\node [style=none] (27) at (8.75, 2.5) {};
		\node [style=none] (28) at (6, 7.5) {};
		\node [style=none] (29) at (11.5, 7.5) {};
		\node [style=none] (30) at (8.75, 9.25) {};
		\node [style=none] (31) at (8.75, 5.75) {};
		\node [style=none] (32) at (7, 2.75) {};
		\node [style=none] (33) at (10.5, 2.75) {};
		\node [style=none] (34) at (8.75, 1.75) {};
		\node [style=black empty] (35) at (8.75, 5) {};
		\node [style=black empty] (36) at (8.75, 5) {};
		\node [style=black empty] (37) at (8.75, 3) {};
		\node [style=black empty] (38) at (7.75, 3) {};
		\node [style=black empty] (39) at (8.75, 7) {};
		\node [style=black empty] (40) at (7.5, 5) {};
		\node [style=black empty] (41) at (10, 5) {};
		\node [style=black empty] (42) at (9.75, 7) {};
		\node [style=black empty] (43) at (10.75, 7) {};
		\node [style=none] (45) at (8.75, 5) {$u$};
		\node [style=black empty] (46) at (6.75, 7) {};
		\node [style=none] (47) at (8.75, 8) {$R$};
		\node [style=none] (48) at (9.5, 2.5) {$U$};
		\node [style=black empty] (49) at (7.75, 7) {};
		\node [style=none] (50) at (0.25, 4) {$\epsilon$};
		\node [style=none] (51) at (-2, 6) {$\epsilon$};
		\node [style=none] (52) at (6.75, 6) {$\epsilon$};
	\end{pgfonlayer}
	\begin{pgfonlayer}{edgelayer}
		\draw [style={shape_border}] (22.center)
			 to (21.center)
			 to [bend left, looseness=0.75] (19.center)
			 to [bend left, looseness=0.75] (20.center)
			 to cycle;
		\draw [style={shape_border}] (7.center)
			 to [bend right, looseness=1.25] (6.center)
			 to (4.center)
			 to (5.center)
			 to [bend right, looseness=1.25] cycle;
		\draw [style={blue_edge}] (9) to (10);
		\draw [style=edge red] (9) to (11);
		\draw [style={dashed_edge}] (8) to (9);
		\draw [style={blue_edge}] (12) to (8);
		\draw [style={blue_edge}] (14) to (15);
		\draw [style=edge red] (14) to (16);
		\draw [style={dashed_edge}] (13) to (23);
		\draw [style=edge red] (26) to (18.center);
		\draw [style=edge red] (13) to (18.center);
		\draw [style={dashed_edge}] (18.center) to (14);
		\draw [style={shape_border}] (30.center)
			 to (29.center)
			 to [bend left, looseness=0.75] (27.center)
			 to [bend left, looseness=0.75] (28.center)
			 to cycle;
		\draw [style={shape_border}] (34.center)
			 to [bend right, looseness=1.25] (33.center)
			 to (31.center)
			 to (32.center)
			 to [bend right, looseness=1.25] cycle;
		\draw [style={blue_edge}] (36) to (37);
		\draw [style=edge red] (36) to (38);
		\draw [style={blue_edge}] (39) to (35);
		\draw [style={blue_edge}] (41) to (42);
		\draw [style=edge red] (41) to (43);
		\draw [style={dashed_edge}] (40) to (46);
		\draw [style=edge red] (49) to (45.center);
		\draw [style=edge red] (40) to (45.center);
		\draw [style={dashed_edge}] (45.center) to (41);
	\end{pgfonlayer}
\end{tikzpicture}}
		\caption{Illustration of the proof of \Cref{lem:infinite_automata_tree}}
		\label{fig:remove_epsilon}
	\end{figure}
	
	\begin{theorem}[characterisation of graph-based tree-languages]
		\label{thm:tree-language-characterisation}
		$L$ is a graph-based tree-language if and only if: 
		\begin{itemize}
			\item $L$ is a regular tree-language
			\item $L$ is stable by subtree
			\item $L$ is stable via "doubling": if $T \in L$ and $t \in V(T)$, then $T\stackrel t \Join T \in L$.
		\end{itemize}
	\end{theorem}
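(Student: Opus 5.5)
The plan mirrors the proof of \Cref{thm:language-characterisation}: the canonical automaton will be the disjoint union of the trees of $L$, and \Cref{lem:infinite_automata_tree} will make it finite.

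\textbf{Necessity.} Let $L = L^\tree(G)$. Regularity is \Cref{prop:tree-mapping-regular}, and stability by subtree is \Cref{prop:basic-prop-tree}. For doubling, take a morphism $\phi : T \ra G$. Define $\psi$ on $T\stackrel t \Join T$ by applying $\phi$ on each copy of $T$; this is well defined because the two copies share only $t$, and both copies send $t$ to $\phi(t)$. Every edge of $T\stackrel t \Join T$ lies inside one copy, so $\psi$ is a morphism and $T\stackrel t\Join T \in L$.

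\textbf{Sufficiency.} Let $L$ satisfy the three conditions, and set $G_\infty = \bigsqcup_{T\in L} T$, viewing each tree as a decorated graph.
\begin{itemize}
\item Every $T\in L$ maps to its own copy, so $L\subseteq L^\tree(G_\infty)$.
\item For the converse, a tree maps to $G_\infty$ only if it maps into a single component $T$, since trees are connected. So the key claim is: \emph{if $S$ is a tree with $S\ra T$ and $T\in L$, then $S\in L$.}
\item Granting the claim, $L^\tree(G_\infty)=L$. This language is regular, so \Cref{lem:infinite_automata_tree} gives a finite quotient $G$ with $L^\tree(G)=L$, which finishes the proof.
\end{itemize}

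\textbf{Proof of the claim.} I would argue by induction on $|V(S)|$, with the analogue of the segment argument: a morphism from a tree into a tree either is injective or folds somewhere, and each fold should be produced by doubling followed by pruning.
\begin{itemize}
\item If $\phi : S\ra T$ is injective, then $S$ is isomorphic to a subtree of $T$, so $S\in L$ by stability by subtree.
\item Otherwise, some vertex $v$ of $S$ has two neighbours $a\neq b$ with $\phi(a)=\phi(b)$ and the same edge label and orientation relative to $v$. This holds because a non-injective morphism between trees cannot be locally injective: a locally injective map of trees sends paths to non-backtracking walks, which are paths in $T$, hence it is injective.
\item Fold $S$ by identifying $a$ and $b$ (merging the two branches $S_{v\angle a}$ and $S_{v\angle b}$ into one rooted at the identified vertex). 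This gives a tree $S'$ with $|V(S')|<|V(S)|$, and $\phi$ factors through $S'$. By induction $S'\in L$.
\item To recover $S$ from $S'$, double $S'$ at $v$ to get $S'\stackrel v\Join S'$. It contains two copies of the merged branch at $v$, one of which can serve as $S_{v\angle a}$ and the other as $S_{v\angle b}$. Remove the superfluous parts of each copy and of the duplicated remainder of $S'$, using stability by subtree.
\end{itemize}

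The difficulty in the last step is that the merged branch is the union of the two branches. $S$ should then be the subtree of $S'\stackrel v\Join S'$ keeping the $a$-part in one copy, the $b$-part in the other, and the rest of $S$ (the part outside both branches) in one copy only. The point to verify is that these choices are consistent: the identification in $S'$ is only of $a$ with $b$ and of corresponding descendants, so the original branches embed disjointly in the two copies. Since $S$ is finite, one fold per induction step suffices.

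The main obstacle is making this ``doubling then pruning'' step rigorous. The tree analogue of the back-and-forth move is doubling, and I need to check that every single fold of a tree morphism is undone by one doubling followed by deletion of a subtree. If identifying $a$ and $b$ causes cascading identifications deeper down, I would instead define $S'$ as the quotient of $S$ by $\phi$ restricted to the two branches, i.e.\ their image in $T$, so that it stays a tree. I would then argue that $S$ embeds in the doubled tree $S'\stackrel{v}\Join S'$ by sending each branch into its own copy.
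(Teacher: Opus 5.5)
Your proof is correct. Its skeleton matches the paper's: necessity from the closure properties of $L^\tree(G)$, and sufficiency by taking the disjoint union $G_\infty$ of the trees of $L$ and then making it finite with \Cref{lem:infinite_automata_tree}. The difference is in the key claim that $S\ra T\in L$ forces $S\in L$, which the paper isolates as \Cref{lem:tree-language}.

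The paper argues by decomposition. It splits $S$ at a vertex into its branches and inductively embeds each rooted branch into an iterated doubling $T_i$ of $T$, with the root sent to $t$. It then glues the $T_i$ at $t$, reaching the normal form ``$S$ is a subtree of an iterated doubling of $T$''.

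You instead factor the morphism through folds. A non-injective morphism of trees cannot be locally injective, so two neighbours $a,b$ of some $v$ have the same image. Folding them gives a smaller tree $S'\ra T$, hence $S'\in L$ by induction, and $S$ is a subtree of the single doubling $S'\stackrel{v}\Join S'$.

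Your route is more economical, because each step doubles a tree already known to lie in $L$. You never have to justify that the join $\Join_i (T_i,t)$ of different iterated doublings of $T$ is again such a doubling. Nor do you have to treat a root that happens to be a leaf. The paper's sketch leaves both points implicit. What the paper's route buys is the explicit rooted embedding of $S$ into an iterated doubling of $T$. Your direct check that $L^\tree(G)$ is closed under doubling (apply $\phi$ on each copy) is also simpler than going through the lemma.

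The concern in your last paragraph is unfounded, and you should drop the fallback. Identifying only $a$ and $b$ merges the edges $va$ and $vb$. These carry the same colour and orientation, since they map to the same edge of the tree $T$. The result is a tree with one vertex and one edge fewer. No descendants are identified, so the phrase ``and of corresponding descendants'' is inaccurate, and no cascade occurs.

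To see the recovery step concretely, let $R$ be the component of $v$ in $S$ minus the edges $va,vb$, so that $S$ is $R$ together with the branches $va\cup A$ and $vb\cup B$. Take the connected subgraph of $S'\stackrel{v}\Join S'$ formed by:
\begin{itemize}
\item $R$, together with the edge $v\,ab$ and the $A$-part, all in the first copy;
\item the edge $v\,ab$ and the $B$-part in the second copy.
\end{itemize}
This subgraph is isomorphic to $S$, which is all you need.

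The fallback, by contrast, would not work as stated. If $S'$ collapses the two branches onto their $\phi$-image, then a branch on which $\phi$ is itself non-injective no longer embeds into a single copy of that image.
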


	\begin{lemma}\label{lem:tree-language}
		For a tree $T$, $L^\tree(T)$ is exactly the language obtained from $T$ by repeating the operations of subtrees and doubling.
	\end{lemma}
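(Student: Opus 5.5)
We prove the two inclusions separately. For the easy one, let $\mathcal C(T)$ denote the closure of $\set{T}$ under taking subtrees and doubling. The identity shows $T \in L^\tree(T)$. If $S \ra T$ via $\phi$, then every subtree of $S$ maps to $T$ by restricting $\phi$. Also $S\stackrel s \Join S$ maps to $T$ by sending both copies through $\phi$; this is well defined because the identified vertex $s$ receives the same image $\phi(s)$ from both copies. Hence $L^\tree(T)$ is stable under both operations, and $\mathcal C(T) \subseteq L^\tree(T)$. (This is consistent with \Cref{prop:basic-prop-tree}.)

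For the reverse inclusion, take $S \in L^\tree(T)$ with a homomorphism $\phi : S \ra T$. We show $S \in \mathcal C(T)$ by induction on $|V(S)|$.

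\emph{Injective case.} If $\phi$ is injective, then $\phi(S)$ is a connected subgraph of the tree $T$ isomorphic to $S$, colours and orientations included. So $S$ is a subtree of $T$, and hence $S \in \mathcal C(T)$.

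\emph{Finding a fold.} Otherwise pick $x \neq y$ with $\phi(x)=\phi(y)$ and at minimal distance in $S$. The image of the $x$–$y$ path in $S$ is a closed walk in the tree $T$, so it must backtrack somewhere. By minimality, the backtrack yields a vertex $s$ of $S$ with two distinct neighbours $a,b$ such that $\phi(a)=\phi(b)$. Since $T$ has no multi-edges between $\phi(s)$ and $\phi(a)$ (or, if multi-edges are allowed, we additionally require matching colour and orientation, which the fold provides), the edges $sa$ and $sb$ carry the same colour and orientation.

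\emph{Folding step.} Let $S'$ be obtained from $S$ by identifying $a$ with $b$, which also merges the two parallel edges $sa$ and $sb$. Equivalently, the branch $B=S_{s\angle b}\setminus\{s\}$ is reattached at $a$. Then $S'$ is again a tree with fewer vertices, and $\phi$ factors through $S'$, so $S' \ra T$. By induction, $S' \in \mathcal C(T)$.

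\emph{Recovering $S$.} Form the doubling $S'\stackrel s\Join S'$.
\begin{itemize}
\item In the first copy, delete the part of $B$ hanging at $a$.
\item In the second copy, keep only the edge $s a$ together with the part of $B$ hanging at $a$; this copy of $a$ plays the role of $b$.
\end{itemize}
What remains is a connected subgraph of $S'\stackrel s\Join S'$, hence a subtree, and it is isomorphic to $S$. Therefore $S \in \mathcal C(T)$.

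The step needing the most care is the existence of the fold, i.e. that non-injectivity of $\phi$ forces two neighbours of a common vertex to have the same image. This uses that $T$ is a tree: closed walks in $T$ backtrack. The other delicate point is checking that the final deletion really is an instance of the paper's subtree operation, namely that the kept vertex set is connected and contains $s$ in both copies.
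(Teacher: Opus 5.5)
Your proof is correct, but it takes a genuinely different route from the paper.

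\textbf{The paper's route.} The paper proves a rooted strengthening: if $(S,s)\ra(T,t)$, then $(S,s)$ embeds, with $s\mapsto t$, into some $(T',t)$ obtained from $T$ by repeated doublings. It does this by cutting $S$ at a vertex into its branches $S_{s\angle s_i}$, embedding each branch recursively, and gluing the resulting doublings at $t$. The gluing uses the observation that a join of embeddings is an embedding into the join.

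\textbf{Your route.} You induct on $|V(S)|$ by folding. A non-injective homomorphism into a tree must send two neighbours $a,b$ of some vertex $s$ to the same vertex. Identifying them gives a smaller tree $S'$ through which $\phi$ factors. One doubling $S'\stackrel{s}{\Join}S'$ followed by a subtree then recovers $S$.

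\textbf{What each buys.} Your version uses only that $\mathcal C(T)$ is closed under the two operations. It therefore sidesteps two points the paper leaves implicit:
\begin{itemize}
\item why the join at $t$ of several different iterated doublings $T_i$ of $T$ is again obtainable from $T$;
\item how the rooted induction continues when the root is a leaf, which is exactly the situation of each branch $(S_{s\angle s_i},s)$.
\end{itemize}
In exchange, the paper's formulation records an explicit rooted embedding into an iterated doubling of $T$.

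\textbf{A minor correction.} Your parenthetical about multi-edges is not right. If an edge of $T$ carried a set of colours (the paper's definition of edge-colouring allows this) or were bidirected, the fold would not force $sa$ and $sb$ to have the same label. In that setting, however, the lemma itself needs the subtree operation to be allowed to drop labels. Your argument then still works: give the merged edge the union of both labels, and let each copy in the doubling keep only its own label.
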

	\begin{proof}
		First, observe that $T \in L^\tree(T)$ and, if $S \in L^\tree(T) $(meaning $S \ra T$), then for $S'$ a subtree of $S$, $S' \ra S \ra T$, furthermore, for $s \in V(S)$, $S\stackrel s \Join S \ra S \ra T$. Thus, all trees obtained from $T$ by the subtree and doubling operations are in $L^\tree(T)$.
		
		For the other direction, let $S \ra T$, we want to show that $S$ can be obtained using the doubling and subgraph operation. 
		We will start with the following observation : if $S_1$ and $S_2$ are subtrees of $T_1$ and $T_2$ respectively, via the embeddings $\phi_1$ and $\phi_2$, and $\phi_1(s_1) = t_1 \in T_1, \phi_2(s_2) = t_2 \in T_2$ for $s_1 \in V(S_1)$ and $s_2 \in V(S_2)$, then $S_1 \stackrel {s_1,s_2} \Join S_2 $ is a subtree of $ T_1 \stackrel {t_1,t_2} \Join T_2$ with $s_{1}s_{2}$ mapped to $t_{1}t_{2}$. ($*$)
		
		We actually show a stronger result: if $(S,s) \arc\phi (T,t)$, then there exists $T'$ obtain by repeated doublings of $T$ such that $(S,s)$ is a subgraph of $(T',t)$ (meaning that the subtree embedding maps $s$ to $t$).
			
	If $S$ is at most an edge, then $S$ is a subgraph of $T$ and $T'=T$ convene. Otherwise, let $s$ be a non-leaf of $S$ with neighbours $s_1,\dots,s_k$ and consider the subtrees $S_i = T_{s \angle s_i}$. By induction, each $(S_i,s)$ is a subgraph of some $(T_i,t)$ obtained via doublings of $T$.
	Then, using, ($*$), we get that $\Join_{i \leq k }\!\!(S_i,s)$ is a subtree of $\Join_{i \leq k }\!\!(T_i,t) $ which can be obtained via doublings of $T$ (and taking a subtree). 
	\end{proof}
	
	Now we can prove the \Cref{thm:tree-language-characterisation}, in a similar way to \Cref{thm:language-characterisation}
	
	\begin{proof}[proof of \Cref{thm:tree-language-characterisation}]
		\boxed \Ra\ If $G$ is a graph accepting $L$, and $T \in L$. In \Cref{lem:tree-language}, we saw that the set of trees accepted by $T$, is exactly the trees obtained by the subtrees and doubling operations. Since $S \in L^\tree(T)$ is equivalent to $S \ra T$, by transitivity of morphisms, we get that $L^\tree(G)$ is stable by subtree and doubling.
		
		\boxed \La\ Fix a regular language $L$ verifying the properties of the theorem. 		
		Now let $G_\infty = \bigcup_{T \in L} T$. For any $T$, by \Cref{lem:tree-language}, $L^\tree(T)$ is exactly the set of trees obtained by doubling and subtree starting from $T$, but since $L$ is stable by those operations, we must have that $L^\tree(G_\infty) = L$. Note that $G_\infty$ is a graph and is, in general, infinite.
		
		Using \Cref{lem:infinite_automata_tree}, one deduces the existence of a finite automaton $A$ such that $L^\tree(A) = L^\tree(A_\infty) = L$, and $A$ is obtained from $A_\infty$ by identifying states. Therefore, $A$ is also a graph-based tree automaton.
	\end{proof}

	\begin{lemma}[maximal simplices]\label{lem:max_simplices}
		If $L$ is a set and $\cal C$ is a set of finite subsets of $L$, which is downward closed, meaning $(A\! \in\! \cal C) \wedge (B\! \subset\! A) \Ra (B \in \cal C)$,  then there exists a set $\cal M$ verifying $\cal C = \bigcup_{M \in \cal M} \set{A \mid A \subset_{finite} M}$ with every element of $\cal M$ being maximal, meaning that adding anything to it will induce a finite subset outside $\cal C$.
		
		Furthermore, when $|\cal M|$ is finite, it is the unique minimum set in terms of cardinality such that $\cal C = \bigcup_{M \in \cal M} \set{A \mid A \subset_{finite} M}$.
	\end{lemma}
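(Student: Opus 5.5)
The plan is to take $\cal M$ to be the set of all maximal elements of the poset $\cal P$ of subsets $M \subseteq L$ (possibly infinite) all of whose finite subsets lie in $\cal C$. Call such $M$ \emph{$\cal C$-consistent}. The existence half then follows from Zorn's lemma, and uniqueness follows from a short comparison argument.

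\textbf{Existence.} First check that $\cal P$ is closed under unions of chains. If $(M_i)$ is a chain of $\cal C$-consistent sets and $A$ is a finite subset of $\bigcup_i M_i$, then, since $A$ is finite and the $M_i$ are totally ordered, $A$ lies in a single $M_i$, so $A \in \cal C$. Zorn's lemma then shows that every $\cal C$-consistent set is contained in a maximal one. In particular every $A \in \cal C$ is contained in some maximal $M$, since $A$ is itself consistent because $\cal C$ is downward closed. Conversely, every finite subset of a consistent $M$ is in $\cal C$ by definition. Hence $\cal C = \bigcup_{M\in\cal M}\set{A \mid A\subset_{finite} M}$. Maximality in the poset is exactly the required property: for $x\notin M$, the set $M\cup\set x$ is not consistent, so some finite subset of it lies outside $\cal C$. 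When $L$ is countable, one can avoid Zorn by a greedy enumeration of $L$, but Zorn handles the general case uniformly.

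\textbf{Uniqueness.} Suppose $|\cal M|$ is finite and $\cal N$ is any family with $\cal C = \bigcup_{N\in\cal N}\set{A\mid A\subset_{finite}N}$. Each $N\in\cal N$ is then $\cal C$-consistent. The key claim is that every $M\in\cal M$ equals some $N\in\cal N$. It suffices to show $M\subseteq N$ for some $N$, because $N$ is consistent and $M$ is maximal.

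This step is the main obstacle when $\cal N$ is infinite or $M$ is infinite. If $M$ is finite, the argument is immediate: $M\in\cal C$, so $M\subseteq N$ for some $N$. In general, I would use finiteness of $\cal M$ to obtain a finite separating set. For each $M'\in\cal M$ with $M'\neq M$, maximality of both sets gives an element $x_{M'}\in M\setminus M'$. Let $F=\set{x_{M'}}$, a finite subset of $M$. Now enumerate an arbitrary finite $A\subseteq M$ and consider $A\cup F$. This set lies in $\cal C$, so it lies in some $N$, and $N$ extends to some maximal $M''\in\cal M$. Since $F\subseteq M''$, we cannot have $M''=M'$ for any $M'\neq M$, so $M''=M$.

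This alone does not force a single $N\supseteq M$. Instead, group the elements of $\cal N$ by the maximal sets containing them. If no $N$ contains $M$, then for each $N$ contained in $M$ there is some $y_N\in M\setminus N$, and one must produce a finite subset of $M$ avoiding every $N$. I expect this to need the finiteness hypothesis more carefully, namely an assumption that $\cal N$ is finite as well, which is natural since we compare cardinalities with a finite $\cal M$. With $\cal N$ finite, take $A=F\cup\set{y_N}$: it lies in no $N\in\cal N$, contradicting $A\in\cal C$. So each $M$ is contained in, and hence equal to, some $N$, and distinct $M$ give distinct $N$. This yields $|\cal N|\ge|\cal M|$, with equality only if $\cal N=\cal M$. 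That establishes that $\cal M$ is the unique minimum.
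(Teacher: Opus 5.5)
Your proof is correct and is essentially the paper's argument made explicit. Existence comes from Zorn's lemma applied to the sets all of whose finite subsets lie in $\cal C$. Uniqueness comes from showing that every maximal set must itself belong to any finite covering family. That suffices, since an infinite family automatically has larger cardinality than a finite $\cal M$; this is why the paper speaks of ``each finite solution'', and you may state the restriction to finite $\cal N$ as without loss of generality rather than as an extra assumption.

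The auxiliary set $F$ is not needed. If no $N\in\cal N$ contains $M$, then some $y_N\in M\setminus N$ exists for every $N$. The set $\set{y_N \mid N\in\cal N}$ is then already a finite subset of $M$, hence in $\cal C$, yet contained in no member of $\cal N$.
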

	
	\begin{proof}
		What we have defined is a simplicial complex, whose maximal elements are called maximal faces. In that context, this result is well-known. Note that, for arbitrary infinite sets, it follows from Zorn's Lemma.
		The second part of the proof follows from the observation that the maximal faces are uniquely defined and must be part of each finite solution.
	\end{proof}

	We have found a characterisation for graph-based languages and tree-languages. This proves, given a suitable language, the existence of a graph recognising it. We now show that, in the case of tree automata, there also exists a unique minimal one.
	
	\begin{theorem}\label{thm:minimal-tree-automata}
		For a graph-based tree-language $L$, there exists a unique minimal graph $G$ that accepts $L$ (with respect to the number of vertices).
	\end{theorem}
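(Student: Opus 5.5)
The plan is to attach to every vertex $v$ of a graph $G$ accepting $L$ its \emph{type} $\tau_G(v)=\set{(T,t)\in \bb F^\star \mid \exists\, T\arc\phi G,\ \phi(t)=v}$. Then I would show that the types of maximal vertices are intrinsic to $L$. Two closure facts hold for every vertex. First, $\tau_G(v)$ is closed under taking rooted subtrees. Second, $\tau_G(v)$ is closed under root-joins: if $(S,s),(T,t)\in\tau_G(v)$ then $(S\stackrel{s,t}\Join T, st)\in\tau_G(v)$, since the two morphisms agree on the glued root.

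To apply \Cref{lem:max_simplices}, I would take as ground set the $\sim_L$-classes of rooted trees, of which there are finitely many since $L$ is regular. Let $\cal C$ be the family of finite sets of rooted trees whose root-join lies in $L$. This family is downward closed because $L$ is stable by subtree. Its maximal faces form a finite set $\cal M$, since a face is determined by the classes it meets, using doubling to absorb repetitions within a class.

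\textbf{Lower bound.} Fix $M\in\cal M$ and take the root-join $T_M$ of one representative of each class in $M$. Then $T_M\in L$, so $T_M$ maps to $G$ with root sent to some $v_M$. By the join-closure, the set of finite subsets of $\tau_G(v_M)$ is a face of $\cal C$ containing $M$. By maximality, $\tau_G(v_M)$ is exactly $M$ (up to $\sim_L$). Distinct maximal faces therefore force distinct vertices, so $|V(G)|\ge|\cal M|$ for every $G$ with $L^\tree(G)=L$.

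\textbf{Construction attaining the bound.} I would build $G_{\min}$ with vertex set $\cal M$ and an arc $M\arc c M'$ exactly when, for every $(S,s)\in M'$, the rooted tree obtained by attaching $S$ below a new root $r$ through a $c$-edge $r\arc c s$ belongs to $M$. The opposite orientation is handled symmetrically. I would then prove $L^\tree(G_{\min})=L$ as follows.
\begin{itemize}
\item For $\subseteq$: by induction on the tree, going leaf-to-root, if $\phi:T\to G_{\min}$ then $(T,t)\in\phi(t)$ for every $t$. The inductive step uses the arc condition for each child subtree and then the join-closure of the face $\phi(t)$. Hence $T\in L$.
\item For $\supseteq$: given $T\in L$, take any graph $G$ accepting $L$, which exists by \Cref{thm:tree-language-characterisation}. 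Compose a morphism $T\to G$ with a map $\pi:V(G)\to\cal M$ that sends $v$ to some maximal face containing $\tau_G(v)$.
\end{itemize}

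\textbf{Main obstacle.} The hard step is showing that $\pi$ can be chosen to be a homomorphism, i.e.\ that $v\arc c w$ in $G$ forces $\pi(v)\arc c\pi(w)$ in $G_{\min}$. A priori $\pi(w)$ may contain trees that do not extend compatibly into $\pi(v)$. My first attempt is a ``saturation'' argument. Replace $G$ by the graph on $\cal M$ obtained by merging each vertex into the maximal vertex $v_M$ of the lower-bound step, justified via $\epsilon$-edges exactly as in \Cref{lem:infinite_automata_tree}. For this I need $[L_v]\da$ to be determined by, and monotone in, $\tau_G(v)$, so that vertices whose types are contained in the same maximal face can be identified without changing the language. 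If that works, any minimal $G$ has exactly one vertex of each maximal type.

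\textbf{Uniqueness.} It then remains to show that the arcs of a minimal $G$ are forced to be those of $G_{\min}$. An arc $v_M\arc c v_{M'}$ of $G$ implies the arc condition by composing morphisms. Conversely, if the arc condition holds, then the one-edge extension lies in $L$ and its root must land on a vertex of type $M$, which is $v_M$ alone. I would check carefully that this yields the required arc, and if not, fix the convention that ``minimal'' means edge-maximal among vertex-minimal graphs. This gives $G\cong G_{\min}$.
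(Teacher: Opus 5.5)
Your lower bound and the inclusion $L^\tree(G_{\min})\subseteq L$ are sound. The leaf-to-root claim $(T,t)\in\phi(t)$ is the right strengthening, and the arc condition is indeed symmetric by maximality.

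\textbf{The gap.} The reverse inclusion $L\subseteq L^\tree(G_{\min})$ is not established. Both attainability of $|\cal M|$ and your uniqueness step rest on it, and the route you propose for it fails in general. Let $H$ be the loopless rainbow triangle $x\stackrel{r}{\adj}y\stackrel{g}{\adj}z\stackrel{b}{\adj}x$ and let $L=L^\tree(H)$. A root carrying an $r$-edge and a $b$-edge can only go to $x$, and similarly for $y$ and $z$. Hence $|\cal M|=3$ and $G_{\min}\cong H$.

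Now let $G=H\sqcup C_3^{r}$, where $C_3^{r}$ is an all-red triangle. All-red trees map to the red edge $xy$, so $L^\tree(G)=L$. Yet $C_3^{r}\not\ra H$, because the red part of $H$ is a single edge. So no homomorphism $\pi:G\to G_{\min}$ exists.

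The saturation fallback fails on the same example. The triangle vertices have types inside $\tau(x)\cap\tau(y)$, so any merging onto $\set{x,y}$ creates a red loop. A red loop at $x$ accepts the tree whose root has a $b$-edge and an $r$-edge to a vertex that also has a $b$-edge; this tree is not in $L$. \Cref{lem:infinite_automata_tree} gives no support here either, since it only merges vertices with equal $[L_u]\da$. Whenever $\tau_G(v)$ lies in two maximal faces $M\neq M''$, any $(S,s)\in M\setminus M''$ is incompatible with some $J\in M''$ that is compatible with all of $\tau_G(v)$. Hence $[L_v]\da\neq[L_{v_M}]\da$.

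\textbf{The missing idea.} Build the morphism from $T$ itself, top-down, as the paper does. Prove by induction on depth that for every maximal face $K$ and every $(T,t)\in K$, $T$ maps to $G_{\min}$ with $t\mapsto K$.

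For a neighbour $t_i$ of $t$ with $T_i=T_{t\angle t_i}$, the set $X_i=\set{(R\stackrel{t}{\Join}T_i,t_i)\mid (R,t)\in K}$ is compatible. Given $R_1,\dots,R_k$, the tree $(R_1\stackrel{t}{\Join}\cdots\stackrel{t}{\Join}R_k)\stackrel{t}{\Join}T_i$ lies in $K$. Repeated doubling at $t_i$ then produces a tree containing, as a subtree, the join at $t_i$ of the trees $R_j\stackrel{t}{\Join}T_i$.

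Take any maximal $K_i\supseteq X_i$. By subtree-closure, $K_i$ is joined to $K$ by an arc with the colour and orientation of the edge $tt_i$. It also contains $(T_i,t_i)$, so the induction closes. This is where doubling does real work; in your outline it only makes faces unions of $\sim_L$-classes.

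\textbf{Uniqueness.} Do not redefine minimality to rescue this step, since that changes the theorem. The one-edge extension does not pin its root. Instead use $T_M\stackrel{r}{\Join}(r\arc{c}s)\stackrel{s}{\Join}T_{M'}$, with $T_M$ the join of class representatives of $M$. It lies in $L$ by the arc condition and join-closure. In a graph with $|\cal M|$ vertices it forces $r\mapsto v_M$ and $s\mapsto v_{M'}$, which gives the required arc.
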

	\begin{proof}
		Let $L$ be a graph-based tree-language.
		Two rooted trees $(S,s)$ and $(T,t)$ are said to be \emph{compatible} if $(S \stackrel {s,t} \Join T,st) \in L^\star$. Note that, if $(S,s)$ and $(T,t)$ are accepted starting from the same vertex in a given graph recognising $L$, then $(S,s)$ and $(T,t)$ must be compatible. Similarly, we say that a finite set of rooted trees is compatible if the join of all of them is still in $L$. Let $\cal K = K_1,K_2,\dots$ be the maximal compatible sets obtained from $\Cref{lem:max_simplices}$. By maximality, the $K_i$'s are stable by joining over $t$ and taking subtrees (still containing $t$).
		
		Consider the graph $\G$ whose vertex set is $\cal K$ and $K \arc c K'$ if for every $(T,t) \in K$, the rooted tree $(T \stackrel {t} \Join (t \arc c t'),t')$ obtained by adding a new root $t'$ to $T$ is in $ K'$. 
		Note that, by compatibility, this is the same as saying that, for all $(T,t) \in K, (T',t') \in K'$ we have $(T \stackrel {t} \Join (t \arc c t') \stackrel {t'} \Join T',t') \in K'$ and the same tree rooted at $t$ in $K$. This presentation manifests the fact that $K \arc c K' \iff K' \xleftarrow c K$.
		
		We now claim two things : (1) $L^\tree(\G) = L$ and (2) Every graph accepting $L$ has more vertices than $\G$.
		\vspace{5pt}
		
		(1): We prove that the language of $\G$ starting at a vertex $K$ is at least $K$. We do so by induction on the depth of a rooted tree $(T,t)$. Trees of depth $0$ can map anywhere in $\G$. If $(T,t)$ is of depth $d$, let the $t_i$'s be the neighbours of $t$ and $T_i = T_{t \angle t_i}$. 
By definition of $\cal K$, there must exist some $K \in V(\G)$ such that $(T,t) \in K$. Then let $X_i = \set{(R \stackrel {t} \Join T_i,t_i) \mid (R,t) \in K}$, note that we have $(R \stackrel {t} \Join T_i,t) \in K$ by compatibility between $K$ and $T_i$.

We prove that $X_i$ is a compatible set. Given a finite family $((S_j,t_i)_{j \leq j_{max}})$ in $X_i$, with each $S_j = R_j \stackrel {t} \Join T_i$, then let $R_{all} = \Join_{j \leq j_{max}}\!\! (R_j,t)$, and $S_{all} = R_{all} \stackrel {t} \Join T_j$. Observe that $(S_{all},t)$ and $(R_{all},t)$ are in $K$. By doubling $S_{all}$ on $t_i$, repeating $j_{max}$ times, we obtain a tree containing $\Join_{j \leq j_{max}}\!\! (S_j,t_i)$ as a subtree, which therefore must be in $L$ (see the two rightmost trees in \Cref{fig:tree_unfolds}). This proves that $X_i$ is a compatible set.
By definition of $\cal K$, there must exists some $K_i \supset X_i$.

\begin{figure}[ht]
		\centering
		\begin{tikzpicture}
	\begin{pgfonlayer}{nodelayer}
		\node [style=small black empty] (0) at (-2, 0) {$ti$};
		\node [style=small black empty] (1) at (-1, 0) {};
		\node [style=small black empty] (2) at (0, 0) {};
		\node [style=small black empty] (3) at (-1, -1) {};
		\node [style=none] (4) at (-1, 1) {$T_i$};
		\node [style=small black empty] (5) at (2.25, -1.5) {$ti$};
		\node [style=small black empty] (6) at (3.25, -1.5) {};
		\node [style=small black empty] (7) at (4.25, -1.5) {};
		\node [style=small black empty] (8) at (3.25, -2.5) {};
		\node [style=none] (9) at (3.25, 2.75) {$S_1$};
		\node [style=small black empty] (10) at (1, -1.5) {$t$};
		\node [style=small black empty] (11) at (0, -1.5) {};
		\node [style=small black empty] (12) at (0, -2.5) {};
		\node [style=small black empty] (13) at (2.25, 1.5) {$ti$};
		\node [style=small black empty] (14) at (3.25, 1.5) {};
		\node [style=small black empty] (15) at (4.25, 1.5) {};
		\node [style=small black empty] (16) at (3.25, 0.5) {};
		\node [style=none] (17) at (3.25, -0.5) {$S_2$};
		\node [style=small black empty] (18) at (1, 1.5) {$t$};
		\node [style=small black empty] (19) at (1, 0.5) {};
		\node [style=small black empty] (21) at (1, 2.5) {};
		\node [style=small black empty] (27) at (6.5, 0) {};
		\node [style=small black empty] (28) at (5.5, 0) {};
		\node [style=small black empty] (29) at (5.5, -1) {};
		\node [style=small black empty] (30) at (7.5, 0) {$ti$};
		\node [style=small black empty] (31) at (8.5, 0) {};
		\node [style=small black empty] (32) at (9.5, 0) {};
		\node [style=small black empty] (33) at (8.5, -1) {};
		\node [style=none] (34) at (7.75, 1) {$S_{all}$};
		\node [style=small black empty] (35) at (6.5, 0) {};
		\node [style=small black empty] (36) at (6.5, -1) {};
		\node [style=small black empty] (37) at (6.5, 1) {};
		\node [style=small black empty] (42) at (14.25, 0.25) {};
		\node [style=small black empty] (43) at (15.25, 0.25) {};
		\node [style=small black empty] (44) at (14.25, -0.75) {};
		\node [style=none] (45) at (11.5, 3.25) {$S_{all} \Join S_{all}$};
		\node [style=small black empty] (52) at (13.25, 0) {$ti$};
		\node [style=small black empty] (53) at (14.25, -0.25) {};
		\node [style=small black empty] (54) at (15.25, -0.25) {};
		\node [style=small black empty] (55) at (14.25, -1.25) {};
		\node [style=small black empty] (62) at (19.75, 0.25) {};
		\node [style=small black empty] (63) at (20.75, 0.25) {};
		\node [style=small black empty] (64) at (19.75, -0.75) {};
		\node [style=none] (65) at (17, 3.25) {$S_1 \Join S_2$};
		\node [style=small black empty] (72) at (18.75, 0) {$ti$};
		\node [style=small black empty] (73) at (19.75, -0.25) {};
		\node [style=small black empty] (74) at (20.75, -0.25) {};
		\node [style=small black empty] (75) at (19.75, -1.25) {};
		\node [style=small black empty] (76) at (-6.25, 0) {$ti$};
		\node [style=small black empty] (77) at (-5.25, 0) {};
		\node [style=small black empty] (78) at (-4.25, 0) {};
		\node [style=small black empty] (79) at (-5.25, -1) {};
		\node [style=none] (80) at (-7, 1) {$T$};
		\node [style=small black empty] (81) at (-7.5, 0) {$t$};
		\node [style=small black empty] (82) at (-8, 1) {$t1$};
		\node [style=small black empty] (83) at (-8, -1) {$t2$};
		\node [style=small black empty] (84) at (-8, 2) {};
		\node [style=small black empty] (85) at (-3.25, 0) {$t$};
		\node [style=small black empty] (101) at (11.75, 0.75) {};
		\node [style=small black empty] (102) at (10.75, 0.75) {};
		\node [style=small black empty] (103) at (10.75, -0.25) {};
		\node [style=small black empty] (104) at (11.75, 0.75) {};
		\node [style=small black empty] (105) at (11.75, -0.25) {};
		\node [style=small black empty] (106) at (11.75, 1.75) {};
		\node [style=small black empty] (107) at (12.25, -1) {};
		\node [style=small black empty] (108) at (11.25, -1) {};
		\node [style=small black empty] (109) at (11.25, -2) {};
		\node [style=small black empty] (110) at (12.25, -1) {};
		\node [style=small black empty] (111) at (12.25, -2) {};
		\node [style=small black empty] (112) at (12.25, 0) {};
		\node [style=small black empty] (113) at (17.5, 1) {};
		\node [style=small black empty] (114) at (17.5, 0) {};
		\node [style=small black empty] (115) at (17.5, 2) {};
		\node [style=small black empty] (116) at (17.5, -1) {};
		\node [style=small black empty] (117) at (16.5, -1) {};
		\node [style=small black empty] (118) at (16.5, -2) {};
	\end{pgfonlayer}
	\begin{pgfonlayer}{edgelayer}
		\draw [style={blue_edge}] (0) to (3);
		\draw [style={blue_edge}] (0) to (1);
		\draw [style=edge red] (1) to (2);
		\draw [style={blue_edge}] (5) to (8);
		\draw [style={blue_edge}] (5) to (6);
		\draw [style=edge red] (6) to (7);
		\draw [style={blue_edge}] (5) to (10);
		\draw [style=edge red] (10) to (11);
		\draw [style=edge red] (11) to (12);
		\draw [style={blue_edge}] (13) to (16);
		\draw [style={blue_edge}] (13) to (14);
		\draw [style=edge red] (14) to (15);
		\draw [style={blue_edge}] (13) to (18);
		\draw [style={blue_edge}] (18) to (19);
		\draw [style=edge red] (21) to (18);
		\draw [style=edge red] (27) to (28);
		\draw [style=edge red] (28) to (29);
		\draw [style={blue_edge}] (30) to (33);
		\draw [style={blue_edge}] (30) to (31);
		\draw [style=edge red] (31) to (32);
		\draw [style={blue_edge}] (30) to (35);
		\draw [style={blue_edge}] (35) to (36);
		\draw [style=edge red] (37) to (35);
		\draw [style={blue_edge}] (27) to (30);
		\draw [style=edge red] (42) to (43);
		\draw [style={blue_edge}] (52) to (55);
		\draw [style={blue_edge}] (52) to (53);
		\draw [style=edge red] (53) to (54);
		\draw [style={blue_edge}] (52) to (44);
		\draw [style={blue_edge}] (52) to (42);
		\draw [style=edge red] (62) to (63);
		\draw [style={blue_edge}] (72) to (75);
		\draw [style={blue_edge}] (72) to (73);
		\draw [style=edge red] (73) to (74);
		\draw [style={blue_edge}] (72) to (64);
		\draw [style={blue_edge}] (72) to (62);
		\draw [style={blue_edge}] (76) to (79);
		\draw [style={blue_edge}] (76) to (77);
		\draw [style=edge red] (77) to (78);
		\draw [style={blue_edge}] (81) to (76);
		\draw [style=edge red] (81) to (83);
		\draw [style=edge red] (81) to (82);
		\draw [style={blue_edge}] (82) to (84);
		\draw [style={blue_edge}] (85) to (0);
		\draw [style=edge red] (101) to (102);
		\draw [style=edge red] (102) to (103);
		\draw [style={blue_edge}] (104) to (105);
		\draw [style=edge red] (106) to (104);
		\draw [style={blue_edge}] (52) to (104);
		\draw [style=edge red] (107) to (108);
		\draw [style=edge red] (108) to (109);
		\draw [style={blue_edge}] (110) to (111);
		\draw [style=edge red] (112) to (110);
		\draw [style={blue_edge}] (52) to (110);
		\draw [style={blue_edge}] (113) to (114);
		\draw [style=edge red] (115) to (113);
		\draw [style=edge red] (116) to (117);
		\draw [style=edge red] (117) to (118);
		\draw [style={blue_edge}] (113) to (72);
		\draw [style={blue_edge}] (116) to (72);
	\end{pgfonlayer}
\end{tikzpicture}
		\caption{Example of trees for the proof of \Cref{thm:minimal-tree-automata}}
		\label{fig:tree_unfolds}
	\end{figure}

Given a $(R,t) \in K$, we have
$$ R \stackrel {t} \Join (t \arc c t_i) \leq_{subtree} R \stackrel {t} \Join T_i  \in X_i \subset K_i$$
Therefore $(R \stackrel {t} \Join (t \arc c t_i),t_i) \in K_i$,
thus $K \arc c K_i$. However, by induction, $T_i \in K_i$ implies that $\G$ accepts $T_i$ starting on $K_i$, thus, putting everything together, we get that $\G$ accepts $T$ starting on $K$. 

Now, it remains to show that $\G$ does not accept any tree outside $L$. If it were the case, consider the minimal tree $T$ (with respect to size) such that $T \in L^\tree(G) \setminus L$, let $s$ be a leaf of $T$ with (without loss of generality) $s \arc{c} t$ and $\phi$ be a mapping of $T$ into $\G$. Let $ K_s = \phi(s)  \arc{c} \phi(t) = K_t $. Then $T- s$ is accepted by $\G$ starting at $K_t$. By minimality of $T$, $T-s \in K_t$. Then $K_s \arc c K_t$ would imply $((s \arc c t) \stackrel {t} \Join (T-s),s ) = (T,s) \in K_s $.

This ends the proof of (1).
	\vspace{5pt}
	
	(2): Let $G$ be a (possibly infinite) graph such that $L^\tree(G) = L$. Let $\cal K' = \set{K'_i}_{i\geq 0}$ be the languages of $L^\star$ corresponding to the trees accepted by each vertex of $G$. They must correspond to compatible sets since any two trees accepted from the same vertex can be glued together. We have $L^\star = \bigcup K_i = \bigcup K'_i$, but the $K_i$ correspond to the minimal such configuration (in terms of the number of sets), thus $|G| \geq |\G|$. Now, if $|G|=|\Gamma|$, since $\cal K$ is finite, by \Cref{lem:max_simplices}, $\cal K' = \cal K$, and thus $G \cong \G$
		\end{proof}
	
	Note that the heart of the proof relies on the fact that trees can be joined together at a specific vertex, which is the fundamental difference from paths. In order to find a canonical graph for a given graph-based path language, we suggest two options. The first is to lift the word language to a tree language (see \Cref{prop:path-to-tree}) and use minimisation there. Alternatively, \Cref{cor:hom-minimal} establishes the existence of a homomorphism-minimal graph for a given language.  
	
	\begin{proposition}\label{prop:path-to-tree}
		For $L$ a graph-based path-language, the union of all graph-based tree-languages whose restriction to paths is $L$ is a (regular) graph-based tree-language.
	\end{proposition}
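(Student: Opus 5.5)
Let $L$ be a graph-based path-language, realised as $L = L(G_0)$ for some finite decorated graph $G_0$. Let $\mathcal{T}_L$ be the family of graph-based tree-languages whose restriction to paths is $L$, and let $U = \bigcup_{L' \in \mathcal{T}_L} L'$. The plan is to exhibit one explicit finite graph $H$ with $L^\tree(H) = U$, so that regularity and the graph-based property follow directly from \Cref{prop:tree-mapping-regular}. The natural candidate is the "path-universal" object: a tree $T$ should lie in $U$ exactly when every path of $T$ lies in $L$. Indeed, if $T \in L' \in \mathcal{T}_L$, then by stability under subtrees (\Cref{prop:basic-prop-tree}) every path of $T$ lies in $L'$, hence in $L$. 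So the first step is to show
$$U \subseteq P_L := \set{T \in \bb F \mid \text{every path of } T \text{ is in } L}.$$

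The second step is the converse: construct $H$ with $L^\tree(H) = P_L$. Its restriction to paths is then $L$, so $L^\tree(H) \in \mathcal{T}_L$ and $U = P_L$. For $H$ I would use a subset construction on $G_0$. Its vertices are the nonempty sets $X \subseteq V(G_0)$, and we put $X \arc c Y$ whenever every $x\in X$ has a $c$-out-neighbour in $Y$ and every $y \in Y$ has a $c$-in-neighbour in $X$. This is the standard "power graph" used for tree duality and arc-consistency. By \Cref{prop:basic-prop}, $L(G_0)$ is exactly the set of paths mapping to $G_0$. The goal becomes the classical fact that a tree $T$ maps to this power graph iff the arc-consistency procedure on $T\times G_0$ never empties a domain.

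If that power graph fails to give exactly $P_L$, I would instead prove regularity of $P_L$ directly. I would assign to a rooted tree $(T,t)$ the finite datum
$$\bigl(\set{[w] : w \text{ a word read on a path from } t},\ \text{whether all internal paths lie in } L\bigr),$$
where $[w]$ is the Myhill–Nerode class of $w$. Gluing $(T,t)$ and $(S,s)$ produces new paths $w^{-1}w'$, and membership of these in $L$ depends only on the classes of $w$ and $w'$, so this datum determines $\sim_{P_L}$ and there are finitely many classes. Stability under subtrees is immediate. Stability under doubling holds because a path in $T\stackrel t\Join T$ either stays in one copy or has the form $w^{-1}w'$ with $w,w'$ two paths from $t$ in $T$. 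The "back and forth" property of \Cref{thm:language-characterisation} together with factor-closure should give $w^{-1}w' \in L$. Concretely, a path through $t$ in $T$ containing the longer of the two branches, combined with back-and-forth at $t$, yields a word containing $w^{-1}w'$ as a factor. Then \Cref{thm:tree-language-characterisation} shows $P_L$ is a graph-based tree-language.

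The main obstacle is that doubling step. With $w$ and $w'$ both leaving $t$ in possibly different branches, a path through $t$ only reads $u^{-1}u'$ for distinct branches, whereas in the doubled tree the two copies of the same branch meet. I would handle this with back-and-forth applied to the word $u^{-1}u'$. Writing the path as $x\,u'$ and applying $xuy\Rightarrow xuu^{-1}uy$ with $u$ the branch word read toward $t$, we obtain $u^{-1}u\,u^{-1}u'$, which contains $u^{-1}u$ (the same branch folded). Shorter prefixes are then handled by factor-closure. I would check carefully that this covers paths that are prefixes of different lengths in the two copies.
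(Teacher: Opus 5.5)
\textbf{There is a genuine gap: the set $P_L$ is not closed under doubling, so it is not graph-based, and $U\neq P_L$ in general.} Your inclusion $U\subseteq P_L$ and the regularity of $P_L$ are fine.

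Here is why the doubling step fails. In $T\stackrel{t}{\Join}T$, two copies of one branch at $t$ give words $w^{-1}w'=q^{-1}p^{-1}p\,q'$, where $w=pq$ and $w'=pq'$ diverge at a vertex $b$. In $T$ this is a walk that runs from the end of $q$ back through $b$ to $t$, returns to $b$, and then leaves along a \emph{different} sub-branch $q'$. Back-and-forth ($xuy\Ra xuu^{-1}uy$) only retraces a factor and then resumes the \emph{same} continuation $y$. So it does not produce such words from simple-path words.

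A concrete counterexample:
\begin{itemize}
\item Take undirected colours $\alpha,\beta,\gamma$. Let $G_0$ be the disjoint union of three two-edge paths coloured $\alpha\beta$, $\alpha\gamma$, $\beta\gamma$, and set $L=L(G_0)$.
\item Let $T$ be the star with edges coloured $\alpha,\beta,\gamma$. Then $T\in P_L$, since each simple path of $T$ uses at most two colours and is read in one component of $G_0$.
\item Doubling $T$ at its $\alpha$-leaf creates a path reading $\beta\alpha\alpha\gamma$. This word uses three colours, so no walk of $G_0$ reads it, and the doubled tree is not in $P_L$.
\item In fact $T\notin U$: any $H$ with $T\to H$ reads the walk $\beta\alpha\alpha\gamma$ of $T$.
\end{itemize}

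Your power-graph route fails for a different reason. On trees, arc consistency is exact, so $L^{\tree}(\mathcal{P}(G_0))=L^{\tree}(G_0)$. That is the tree language of one particular realisation of $L$, and it can be strictly smaller than $U$. For example, let $G_0$ be the $6$-cycle coloured $\alpha\alpha\gamma\gamma\beta\beta$. Every walk of the $\alpha\beta\gamma$-star is read in $G_0$, so the star lies in $L^{\tree}(G_0\sqcup K_{1,3})\subseteq U$. Yet the star does not map to $G_0$.

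The missing idea is that $U$ is governed by walks, not simple paths. Each member $L^{\tree}(G')$ of $\mathcal{T}_L$ is closed under homomorphic preimages. So $T\in U$ forces every path admitting a homomorphism to $T$ (that is, every walk of $T$) into $L$. The right candidate is therefore $W_L=\set{T\mid L(T)\subseteq L}$.
\begin{itemize}
\item $W_L$ is trivially closed under subtrees and under doubling, because $T\stackrel{t}{\Join}T\to T$.
\item Its path restriction is $L$.
\item The real content is its regularity. This is harder than for $P_L$, because walks may cross the gluing vertex arbitrarily often. One would record at the root the sets of relations on $V(G_0)$ realised by walks from the root and by closed walks at the root, and check that these data compose under gluing.
\end{itemize}

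The paper avoids this work. It applies \Cref{thm:duality-regular} to the regular set of paths $L^c$, obtaining a finite graph $\mathcal{G}_{L^c}$ with $G\to\mathcal{G}_{L^c}$ if and only if $L^{\tree}(G)\cap L^c=\emptyset$. Then $L^{\tree}(\mathcal{G}_{L^c})$ is exactly $W_L$ and contains every member of $\mathcal{T}_L$. It also has path restriction $L$, since $G_0\to\mathcal{G}_{L^c}$ and $\mathcal{G}_{L^c}$ accepts no path of $L^c$.
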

	\begin{proof}
		The language $\widetilde L$ we construct can be characterised by : $S \cap L^c = \emptyset \Ra S \subset \widetilde L$ for any graph-based tree-language $S$, where $L^c$ is the complement of $L$ within path languages. Note that $L^c$ is regular in the usual sense of word automata. Applying \Cref{thm:duality-regular} (which will be proven shortly) on $L^c$, there exists a graph $\cal G_{L^c}$ such that $S \cap L^c = \emptyset \Ra S \subset L(\cal G_\G)$.
	\end{proof}
	
	\section{Homomorphism duality} \label{sec:morphism-duality}
	In the study of the homomorphism ordering, a duality between two objects is a pair $(F,H)$ that satisfies a relation of the form $F \not \ra G \iff G \ra H$. For graphs, the only duality pair is $(K_2,K_1)$. For planar graphs, Reza Naserasr showed that $(K_3,K_3\times H_5)$ is a duality pair, where $H_5$ is the Clebsch graph. Note that it is not planar.
	Since duality pairs are rare, we also call a duality any "well-behaved" set of obstructions to a mapping, that is a pair $((F_i)_{i\geq 0},H)$ such that $(\forall i, F_i \not \ra G) \iff G \ra H$. The most famous example is $((C_{2k+1})_{k\geq 1},K_2 )$, which corresponds to the well-known fact that a graph is bipartite if it does not contain an odd cycle.
	
	A central result for duality pairs is that, for oriented graphs, every oriented tree (meaning a tree, with an arbitrary orientation on the edges) is in a duality pair \cite{NT00}. This result actually generalises to more general finite structures \cite{EPTT17}. 

\begin{theorem}\label{thm:duality-single-tree}
	If $T$ is an edge-colored, oriented tree (allowing bidirections), then there exists $\cal G_T$ such that, for any edge-colored oriented graph:
	$$T \not \ra G \iff G \ra \cal G_T$$
	This, in terms of languages, means: 
	$T \not \in L^\tree(G) \iff G \ra \cal G_T$
\end{theorem}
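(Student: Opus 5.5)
We construct $\cal G_T$ explicitly, following the Nešetřil--Tardif construction for oriented trees. We work with the set $\cal R$ of rooted subtrees $T_{u\angle v}$: for every ordered pair $(u,v)$ of adjacent vertices of $T$, this is the branch at $u$ containing $v$. The vertices of $\cal G_T$ will be the functions $f : V(T) \to V(T)$ with $f(u) \in N(u)$ for every $u$, so $f$ selects one neighbour per vertex. We put an arc $f \arc c g$ whenever, for every edge $u \arc c v$ of $T$, we have $f(u) \neq v$ or $g(v) \neq u$. Informally, $f(u)$ records the branch of $T$ at $u$ that cannot be realised at the current vertex of $G$. An edge is forbidden only when the two ends each point across that same edge.

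The plan proceeds in three steps.

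First, we show that $T \not\ra \cal G_T$, which gives the direction $G \ra \cal G_T \Ra T\not\ra G$ by transitivity. Suppose $\phi : T \to \cal G_T$. Start from any vertex $u_0$ and follow $u_{i+1} = \phi(u_i)(u_i)$. Each step moves along an edge of the finite tree $T$, so the walk must eventually backtrack: some $u_i, u_{i+1}$ satisfy $\phi(u_i)(u_i)=u_{i+1}$ and $\phi(u_{i+1})(u_{i+1})=u_i$. This contradicts the arc condition on the edge $u_iu_{i+1}$, which is preserved by $\phi$ with the right colour and orientation. Bidirected or multicoloured edges cause no extra difficulty, since the condition is imposed per coloured arc.

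Second, and this is the main step, we take $G$ with $T\not\ra G$ and build a homomorphism $G \to \cal G_T$. For $x\in V(G)$ and an edge $uv$ of $T$, say the branch $(T_{u\angle v},u)$ \emph{fails at} $x$ if there is no homomorphism of it sending $u$ to $x$. We claim that for every $x$ and every $u$ some branch at $u$ fails at $x$. Indeed, if every branch at $u$ mapped with $u\mapsto x$, these maps would glue at $u$ into a map $T\to G$, since $T$ is the join at $u$ of its branches. Choosing a failing branch naively does not suffice, so we choose minimally: define $\psi(x)(u)=v$ where $T_{u\angle v}$ fails at $x$ and is minimal in size among the failing branches at $u$, with ties broken arbitrarily.

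Third, we verify that $\psi$ is a homomorphism, which is the point most likely to need care. Suppose $x \arc c y$ in $G$, $u\arc c v$ in $T$, $\psi(x)(u)=v$ and $\psi(y)(v)=u$. Then $T_{u\angle v}$ fails at $x$. Consider the branches at $v$ other than the one through $u$. Each of them is strictly smaller than $T_{v\angle u}$, because $T_{v\angle u}$ together with $T_{u\angle v}$ covers all of $T$ and overlaps it only on one edge. The branch $T_{v\angle u}$ was chosen minimal among failing branches at $v$ for $y$, so these smaller branches all map with $v\mapsto y$. Gluing them at $v$ and adding the edge $x\arc c y$ shows that $T_{u\angle v}$ maps with $u\mapsto x$. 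This contradicts the fact that $T_{u\angle v}$ fails at $x$.

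The delicate point is the size comparison. The minimality argument has to compare branches at the same vertex $v$, and we need that every branch at $v$ other than $T_{v\angle u}$ is strictly smaller than it. This holds because those branches are contained in $T_{u\angle v}$ minus $u$, and hence are disjoint from $T_{v\angle u}$ apart from $v$. If this comparison turns out not to be strictly ordered, the fallback is to order branches by inclusion instead of size, which still yields the required contradiction.

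Finally, the language formulation follows from the definition of $L^\tree(G)$ as the set of trees mapping to $G$.
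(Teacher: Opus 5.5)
Your construction of $\mathcal{G}_T$, the well-definedness argument, and the proof that $T\not\ra\mathcal{G}_T$ all match the paper. The paper obtains this theorem as the one-tree case of Theorem~\ref{thm:duality-regular}.

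The gap is in your third step. You claim that every branch $T_{v\angle w}$ with $w\in N(v)\setminus\{u\}$ is strictly smaller than $T_{v\angle u}$. This is false: being disjoint from $T_{v\angle u}$ apart from $v$ gives no size comparison. On the path $a\,u\,v\,w\,b\,c$ you have $|T_{v\angle u}|=3$ but $|T_{v\angle w}|=4$. So the size-minimality of $T_{v\angle u}$ among branches failing at $y$ tells you nothing about $T_{v\angle w}$, and the gluing at $v$ is not justified. Your fallback does not help either. Two distinct branches at $v$ meet only in $v$, so they are pairwise incomparable for inclusion, and inclusion-minimality carries no information at all.

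What is missing is a choice of which endpoint to argue from. The paper supplies this with its degeneracy order: each vertex $s$ has at most one neighbour $t\succeq s$, and one assumes without loss of generality $t\succeq s$.

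With sizes the choice works as follows:
\begin{itemize}
\item The branches at $v$ other than $T_{v\angle u}$ lie in $T_{u\angle v}$ minus $u$, so they have at most $|T_{u\angle v}|-1$ vertices.
\item Symmetrically, the branches at $u$ other than $T_{u\angle v}$ have at most $|T_{v\angle u}|-1$ vertices.
\item If $|T_{u\angle v}|\le |T_{v\angle u}|$, your argument from $v$ is valid as written.
\item Otherwise, argue from $u$. The other branches at $u$ are strictly smaller than $T_{u\angle v}$, so by the minimality in $\psi(x)(u)=v$ they all map with $u\mapsto x$. Gluing them and extending along $x\arc{c}y$ maps $T_{v\angle u}$ with $v\mapsto y$, contradicting $\psi(y)(v)=u$.
\end{itemize}
With this case distinction added, your proof is the paper's proof specialised to one tree, with branch size in place of the leaf-removal order.
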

We omit this proof because this is a special case of \Cref{thm:duality-regular}, which will be proven shortly.
\vspace{10pt}

	This in particular means that $L^\tree(\cal G_T)$ is the largest graph-based language which does not contain $T$, and that $\cal G_T$ is the minimal automata (from a homomorphism point of view) that accepts that language.
	We can view this as an alternative to a complementation : we just consider the largest graph-based language in $\set{T}^c$. 
For any finite set $\set{T_i}$, we can construct $\cal G_{\set{T_i}} = \prod_i \cal G_{T_i}$ which verifies $ L^\tree(G) \subset \set{T_i}^c \iff G \ra \cal G_{\set{T_i}}$. The \Cref{thm:duality-regular} generalises this to any regular tree language. This result was originally proven by Erdös, Pálvölgyib, Tardif and Tardos in \cite{EPTT17} in the context of arbitrary $\sigma$-structures. %Regular families of forests, antichains and duality pairs of relational structures
We give here a different proof, tailored to our context.
	
\begin{theorem}\label{thm:duality-regular}
	If $\set{T_i}$ is a regular set of trees recognised by the regular tree automata $A = (Q,\Gamma,F)$, then there exists a graph $\cal G_A$ such that for any edge colored oriented graph $G$:
	$$ \set{T_i} \cap L^\tree(G) = \emptyset \iff G \ra \cal G_A$$
\end{theorem}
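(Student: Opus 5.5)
The plan is to build $\cal G_A$ as a finite "certificate" graph. Each vertex records an over-approximation of the states of $A$ that can be reached at a vertex of $G$, together with a local witness explaining why each missing state is really missing. The key point is to make all constraints \emph{pairwise}, so that they can be expressed by edges. Throughout, a tree $T$ is in $\set{T_i}$ when $A$ accepts it for some choice of root.

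\textbf{Step 1: reachable states.} For a vertex $v$ of $G$, let $D(v)\subseteq Q$ be the set of states $p$ such that some rooted tree $(T,t)$ has a homomorphism to $G$ with $t\mapsto v$ and a run of $A$ on $(T,t)$ whose root gets state $p$. I would first show that $D$ is the least assignment closed under the following rule. If $S(q_1,\dots,q_n)\to p$ is a transition whose $j$-th leaf is attached by colour $c_j$ in direction $d_j$, and $v$ has neighbours $u_1,\dots,u_n$ (not necessarily distinct) with $v\to u_j$ matching $(c_j,d_j)$ and $q_j\in D(u_j)$, then $p\in D(v)$. Repetition of neighbours is allowed precisely because homomorphic images of trees may fold.

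Since every tree mapping to $G$ has an accepting rooting exactly when its image root lands in $D(v)\cap F$, we get
$$\set{T_i}\cap L^\tree(G)=\emptyset \iff D(v)\cap F=\emptyset \text{ for all } v.$$
This is essentially the argument of \Cref{prop:tree-mapping-regular} run bottom-up.

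\textbf{Step 2: making the closure condition pairwise.} The closure condition involves several neighbours at once, so it cannot directly be an edge relation. I would rewrite its negation: a set $X\subseteq Q$ fails to need $p$ via the transition $\tau=S(q_1,\dots,q_n)\to p$ iff there is a slot $j$ such that no $(c_j,d_j)$-neighbour carries $q_j$. This is existential in $j$ and universal over neighbours, which makes it pairwise once $j$ is fixed.

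So the vertices of $\cal G_A$ are pairs $(X,\sigma)$ where $X\subseteq Q$, $X\cap F=\emptyset$, and $\sigma$ assigns to every transition $\tau$ with head outside $X$ a slot $\sigma(\tau)$ of $\tau$. Transitions with $n=0$ have no slot, so their heads are forced into $X$.

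There is an edge $(X,\sigma)\arc c(Y,\rho)$ iff two conditions hold:
\begin{itemize}
\item for every $\tau$ with $\sigma(\tau)$ an out-$c$ slot labelled $q$, we have $q\notin Y$;
\item for every $\tau$ with $\rho(\tau)$ an in-$c$ slot labelled $q$, we have $q\notin X$.
\end{itemize}
This graph is finite, of size at most $2^{|Q|}\cdot(\text{max arity})^{|\Gamma|}$.

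\textbf{Step 3: the two directions.}

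($\Rightarrow$) If $\set{T_i}\cap L^\tree(G)=\emptyset$, map $v\mapsto (D(v),\sigma_v)$. Here $\sigma_v(\tau)$ is a slot witnessing that the closure rule does not fire; such a slot exists because $D$ is closed and $p\notin D(v)$. The edge conditions then hold by the choice of witnesses.

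($\Leftarrow$) Given a homomorphism $\psi:G\to\cal G_A$, I would show $D(v)\subseteq X_{\psi(v)}$ by induction on the derivation of $p\in D(v)$. Suppose $p$ comes from $\tau$ and neighbours $u_j$ with $q_j\in D(u_j)$, and suppose $p\notin X_{\psi(v)}$. The slot $\sigma(\tau)=j$ together with the edge $\psi(v)\psi(u_j)$ forces $q_j\notin X_{\psi(u_j)}$, which contradicts the induction hypothesis. Hence $D(v)\cap F\subseteq X_{\psi(v)}\cap F=\emptyset$.

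\textbf{Main obstacle.} The delicate step is getting Step 1 exactly right: the correspondence between "some rooting of some tree in $L^\tree(G)$ is accepted" and the least fixpoint $D$, including orientation and colour bookkeeping for the slots and the treatment of leaves via transitions $S()\to p$. If Step 1 proves awkward with unordered stars, I would first normalise $A$ so that each transition's star is listed with explicit (colour, direction) slots, and handle stars of arbitrary arity by noting that the relevant $\Gamma$ is finite. The pairwise encoding in Step 2 then goes through mechanically.
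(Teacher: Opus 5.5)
Your proof is correct, but it takes a genuinely different route from the paper.

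\textbf{The paper's route.} The paper adapts the Ne\v{s}et\v{r}il--Tardif construction of tree duals. A vertex of the dual is a ``pointer'' function sending every vertex $s$ of every tree of $L(A)$ to a neighbour. The value $\phi(u)(s)$ is the $\prec$-least neighbour $t$ (for a degeneracy ordering of the tree) such that no locally similar branch $T'_{s'\angle t'}$ maps to $G$ with $s'\mapsto u$. Arcs forbid two adjacent tree vertices that point at each other, and finiteness comes from $A$ having finitely many local configurations. The homomorphism property uses the degeneracy ordering together with recombining branches that share a local configuration. The fact that no $T\in L(A)$ maps to $\cal G_A$ is a pigeonhole argument: a map sending every vertex of a finite tree to a neighbour must have two adjacent vertices pointing at each other.

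\textbf{Your route.} You work entirely on the automaton side. You compute the least fixpoint $D$ of the states reachable at each vertex of $G$, which is in effect running $A$ bottom-up on $G$. Your dual vertices $(X,\sigma)$ are certificates of a post-fixpoint avoiding $F$. The choice of a blamed slot turns the failure condition ``there is a slot such that for all matching neighbours\dots'' into a purely pairwise arc condition. Both directions then reduce to a routine fixpoint induction.

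\textbf{Comparison.} Your approach gives an explicit finite graph of size at most about $2^{|Q|}\cdot k^{|\Gamma|}$, with $k$ the maximal arity of a transition. It needs no infinite auxiliary graph $\cal G'_A$, no degeneracy ordering and no branch-swapping step. It is essentially the arc-consistency (monadic Datalog) view of tree duality. The paper's approach stays closer to the trees themselves, and it specialises directly to the classical single-tree dual of \Cref{thm:duality-single-tree}.

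One point deserves to be stated explicitly in a write-up, and you already adopt it: a tree belongs to $\set{T_i}$ when some rooting of it is accepted by $A$. This matters because $A$ runs on rooted trees while $L^\tree(G)$ consists of unrooted ones.
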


\begin{proof}
We start by defining an infinite graph of functions mapping each tree-vertex to a neighbour. We write $V(L(A))$ for the set of all vertices from trees in $L(A)$.

$\cal G_A' = 
\begin{cases}
V = \set{ f \in (\text{func: } V(L(A)) \ra V(L(A))) \mid \forall T \in L(A), \forall t \in V(T), f(t) \in N(t) } \\
E_c = \set{(f,g) | \neg \exists s \arc{c} t, f(s)=t \text{ and } f(t)=s}
\end{cases}$

For a tree $T$, let $\prec$ be the degeneracy ordering on $T$, that is, the ordering of the vertices obtained by successively removing the leaves of the tree. It verifies that, for each vertex $s$ there is at most one neighbour $t$ such that $t \succeq s$ ($*$)

For $T \in L(A), s \in V(T)$, we call the local configuration around $s$, the information of: the states of $s$ and its neighbours, with a corresponding rule of $\Gamma$. %and the structure of the poset $\prec$ restricted to $ N(s) \cup \set s$.
For $T' \in L(A), s' \in V(T')$, we say that $T$ and $T'$ are locally similar around $s,s'$ if the local configuration of $T$ around $s$ is equal to the local configuration of $T'$ around $s'$. We write $T' \locsim{s',s} T$, since a local similarity induces an isomorphism between the neighbourhoods, we may add neighbours of $s$ and their isomorphic image in the notation: $T',t' \locsim{s',s}T,t$.
		Note that a characteristic of regular tree languages is that they have a finite set of local configurations.

Let $G$ be a graph such that no $T \in L(A)$ map to it. 
We construct $\phi : V(G) \ra V(\cal G_{A}')$ as follows : 
$$ \phi(u) = \text{func:}\left(s\in T \ \mapsto\  \min_\prec \set{t \in N(s), \neg \left( \exists T'\!,t' \locsim{s',s} T,t, \ \   T'_{s' \!\angle t'} \xrightarrow{s' \mapsto u} G\right)}\right)$$ 

Where $T \xrightarrow{s \mapsto u} G$	means that there is a homomorphism from $T$ to $G$ mapping $s$ to $u$.  
	Intuitively, $\phi$ is a function that indicates, for each tree vertex, the direction in that tree that must lead to an obstruction.
Observe that, since the number of different local configurations are finite, and $\phi$ only considers the configuration around $s$, there is only a finite number of different possibilities for $\phi(u)$, we call the corresponding finite subgraph $\cal G_A \subset \cal G'_A$.

We now show that, in the definition, the minimum is well defined. Suppose the set is empty, then for all neighbours $t_i$ of $s$, there exists $ T^i\!,t^i_i \locsim{s^i,s} T,t_i, $ verifying $T^i_{s^i \!\angle t^i_i} \xrightarrow{s^i \mapsto u} G$.
Since all branches around $s$ are independent, we can find $T'\in L(A)$ such that $T',t_1',\dots \locsim{s',s} T,t_1,\dots$ and for each $i$, $T'_{s' \!\angle t_i'} = T^i_{s^i \!\angle t^i_i}$ , thus $T'_{s' \!\angle t'_i} \xrightarrow{s' \mapsto u} G$
	
 Merging all the homomorphisms together, we get that $T' \ra G$, which is a contradiction, thus, $\phi$ is well defined.
 
 Now, we show that $\phi$ is a homomorphism from $G$ to $\cal G_A$. Again, towards a contradiction, suppose that $u \arc{c} v$ but $\neg (\phi(u) \arc c \phi(v))$. Call them $f = \phi(u)$  and $ g=\phi(v)$. Then there exists $T$ and $s \arc c t \in E(T)$ verifying $f(s)=t$ and $g(t)=s$.
 Recall the property ($*$) of the elimination order in $T$. Without loss of generality, we can assume $t \succeq s$ which implies $r_i \prec s  \preceq t$ for every neighbour $r_i \neq t$ of $s$. 
By definition of  $f(s)=t$, $t$ is minimal, and therefore, for every $r_i$ there exists $T^i\!,r^i_i \locsim{s^i,s} T,r_i,$ such that $T^i_{s^i \!\angle r^i_i} \xrightarrow{s^i \mapsto u} G$. Again, we can construct $T'\in L(A)$ such that $T',t',r_1',\dots \locsim{s',s} T,t,r_1,\dots$ and for any $i$, $T'_{s' \!\angle r_i'} = T^i_{s^i \!\angle r^i_i}$, thus $T'_{s' \!\angle r_i'} \xrightarrow{s' \mapsto u} G$. Furthermore, $s' \arc c t'$ and $u \arc c v$, so we can extend any homomorphism with $t'$ unmapped and $s'$ mapped to $u$, by mapping $t'$ to $v$.
 
  Merging all the homomorphisms together, we get that $T'_{t' \!\angle s'} \xrightarrow{t' \mapsto v} G$, which contradicts $\phi(v)(t') = s'$. 
  \vspace{5pt}
  
	  Finally, we show that no $T \in L(A)$ admits a homomorphism to $\cal G_A$. Suppose otherwise, and consider the function $s \mapsto \phi(s)(s)$. Since it is a function from a tree to itself that maps any vertex to a neighbour, by the pigeonhole principle, there must exist two adjacent vertices $s \arc c t$ that have been sent across the same edge, thus, to each other. But this contradicts the definition of $\phi(s) \arc c \phi(t)$.
  \end{proof}

By applying \Cref{lem:min-outside-L} and \Cref{thm:duality-regular} to the complement of a tree language, we get the following :

\begin{corollary}
	\label{cor:tree-hom-minimal}
	For every graph-based tree-language $L$, there exists a homomorphism-minimal graph $\cal C^{tree}_L$ accepting $L$. Meaning that for any $G$, 
	$$ L^\tree(G) \subset L^\tree(\cal C^{tree}_L) = L \Ra (G \ra \cal C^{tree}_L)$$
\end{corollary}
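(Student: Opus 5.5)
Let $L$ be a graph-based tree-language and fix a graph $G_0$ with $L^\tree(G_0)=L$. The plan is to take for $\cal C^{tree}_L$ the dual graph $\cal G_A$ of \Cref{thm:duality-regular}, applied to the set $M$ of minimal trees (for inclusion) not in $L$. By \Cref{lem:min-outside-L}, $M$ is a regular set of trees of bounded degree. Hence there is a bottom-up decorated tree automaton $A=(Q,\Gamma,F)$ with $L(A)=M$; the explicit automaton at the end of that proof, with states the pairs $(\Lambda_{(T,t)\ra G_0},\Lambda_{(T-1,t)\ra G_0})$, can be used directly. \Cref{thm:duality-regular} then gives $\cal G_A$ such that, for every graph $G$,
$$M\cap L^\tree(G)=\emptyset \iff G\ra \cal G_A .$$
We set $\cal C^{tree}_L=\cal G_A$.

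The key step is to translate the condition ``$M\cap L^\tree(G)=\emptyset$'' into ``$L^\tree(G)\subset L$''. Suppose $L^\tree(G)\subset L$; then no tree of $M$ lies in $L^\tree(G)$, since $M\cap L=\emptyset$. Conversely, suppose some $T\in L^\tree(G)\setminus L$. Among its subtrees not in $L$, choose one minimal for inclusion. This subtree $T'$ belongs to $M$, because every proper subtree of $T'$ is in $L$. Also $T'\in L^\tree(G)$, since $L^\tree(G)$ is stable by subtree (\Cref{prop:basic-prop-tree}). So $M\cap L^\tree(G)\neq\emptyset$. This gives
$$L^\tree(G)\subset L \iff G\ra\cal C^{tree}_L .$$
The one point to check is that ``minimal for inclusion'' in \Cref{lem:min-outside-L} means minimal under the subtree relation, so that descending to $T'$ is legitimate. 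Trees are finite, so a minimal element exists.

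It remains to show $L^\tree(\cal C^{tree}_L)=L$. Applying the equivalence to $G=\cal C^{tree}_L$ itself, which maps to itself by the identity, gives $L^\tree(\cal C^{tree}_L)\subset L$. Applying it to $G=G_0$ gives $G_0\ra\cal C^{tree}_L$, and by composing homomorphisms $L=L^\tree(G_0)\subset L^\tree(\cal C^{tree}_L)$. Hence equality holds, and the stated implication $L^\tree(G)\subset L\Ra G\ra\cal C^{tree}_L$ is exactly the forward direction above. If it is wanted, the uniqueness up to homomorphic equivalence is immediate, since any two such minimal graphs map to each other.

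The main obstacle is to make sure \Cref{thm:duality-regular} genuinely applies to $M$: we need an actual bottom-up automaton recognising $M$ with its bounded degree. This is exactly why the degree bound in \Cref{lem:min-outside-L} matters, since the automaton model only handles bounded-degree stars. I would state explicitly that the degree bound $|G_0|$ together with the finite family of $\Lambda$-states yields finitely many transitions. A more naive choice, such as the full complement of $L$, is not recognised by such an automaton, because its trees have unbounded degree.
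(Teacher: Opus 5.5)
Your proposal is correct and follows the same route as the paper. You apply \Cref{thm:duality-regular} to the bounded-degree regular set $M$ of minimal non-members given by \Cref{lem:min-outside-L}, then use subtree-closure of $L^\tree(G)$ to convert $M\cap L^\tree(G)=\emptyset$ into $L^\tree(G)\subset L$. You also supply details the paper leaves implicit: the check $L^\tree(\cal C^{tree}_L)=L$ via $G_0$ and the identity map, and the reason the full complement of $L$ cannot be fed to the bounded-degree automaton model directly.
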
 
 
By choosing the regular set of trees to be a regular set of paths, we obtain the same result for automata over words.

\begin{corollary}\label{cor:hom-minimal}
	For every graph-based path language $L$, there exists a homomorphism-minimal graph $\cal C_L$ accepting $L$. Meaning that for any $G$, 
	$$ L(G) \subset L(\cal C_L) = L \Ra G \ra \cal C_L$$
\end{corollary}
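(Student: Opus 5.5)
The plan is to take the path language $L$, pass to its complement inside the paths, and feed that complement to \Cref{thm:duality-regular}. Concretely, let $P$ be the set of decorated paths that are \emph{not} in $L$, viewed as trees. Since a graph-based path language is regular by \Cref{thm:language-characterisation} and closed under inverse (\Cref{prop:basic-prop}), $P$ is a regular word language closed under inverse. As a set of unordered decorated trees, $P$ is therefore recognised by a bottom-up tree automaton of bounded degree. To build it, I will take a DFA for the complement language and let each state at a vertex record the pair of states reached by reading the path from each end up to that vertex. The transitions $S(q_1,\dots,q_n)\to p$ only allow stars with $n\le 2$. Acceptance is checked at the root by combining the two half-paths. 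Because the complement is closed under inverse, it does not matter which end is read as "the word".

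Applying \Cref{thm:duality-regular} to this automaton $A$ yields a graph $\cal C_L := \cal G_A$ such that, for every $G$, $P\cap L^\tree(G)=\emptyset \iff G\ra \cal C_L$. The corollary then follows in two steps.

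\textbf{Step 1: $L(\cal C_L)= L$.} For the inclusion $\subseteq$, take $G=\cal C_L$ and the identity map, which gives $P\cap L^\tree(\cal C_L)=\emptyset$. Every path accepted by $\cal C_L$ therefore lies outside $P$, that is, in $L$. For the inclusion $\supseteq$, let $H$ be a graph with $L(H)=L$, which exists because $L$ is graph-based. No path of $P$ maps to $H$, so $H\ra\cal C_L$. Every path in $L$ maps to $H$, hence to $\cal C_L$ by composing homomorphisms, so $L\subseteq L(\cal C_L)$.

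\textbf{Step 2: minimality.} If $L(G)\subseteq L$, then no path of $P$ maps to $G$. The paths mapping to $G$ are exactly $L(G)$ by \Cref{prop:basic-prop}(2). Hence $P\cap L^\tree(G)=\emptyset$, and the duality gives $G\ra\cal C_L$.

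The main obstacle is the first step: making sure that a regular word language really becomes a regular tree set in the sense required by \Cref{thm:duality-regular}. The trees are unrooted and unordered, so a path has no distinguished root or direction. I expect the bidirectional state-pair construction described above to handle this. The exact root choice is harmless, since the theorem only uses that the set of trees is given by finitely many local configurations.
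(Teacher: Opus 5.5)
Your proposal is correct and follows the paper's route. The paper gets this corollary in one line by applying \Cref{thm:duality-regular} to the regular set of paths outside $L$, which is exactly what you do, and your Steps 1 and 2 simply spell out why the resulting graph accepts $L$ and is homomorphism-minimal. The conversion of the word language into a bounded-degree tree automaton is left implicit in the paper; in your version, note that a non-root vertex can only compute the state from the leaf end in its own subtree, so the other component must be guessed and checked locally (alternatively, track the pair of syntactic-monoid classes of $w$ and $w^{-1}$).
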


Note that \Cref{cor:hom-minimal}, is in fact a consequence of the existence of a (homomorphism) minimal automata for (trimmed) NFAs. (\Cref{prop:NFA-hom-minimal}, \cite{ADN92})
	Then, for graph-based automata, it is enough to restrict the universal target to initial and terminal states and their bi-directional transitions.
%A note about minimal non-deterministic automata 

\begin{proposition}[Christian Carrez, 1970]\label{prop:NFA-hom-minimal}
	A (trim) automaton $A$ recognises a subset of a rational language $L$ iﬀ there is a homomorphism of $A$ into $\cal C_L$.
\end{proposition}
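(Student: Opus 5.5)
The plan is to construct $\cal C_L$ explicitly as the \emph{universal automaton} of $L$, built from maximal factorisations. Then I would prove both directions of the equivalence by tracking, for each state, its set of past words and its set of future words. Call a pair $(X,Y)$ of sets of words a \emph{factorisation} of $L$ if $XY\subseteq L$ and it is maximal for componentwise inclusion. By maximality, $Y=\set{v \mid Xv\subseteq L}$ and $X=\set{u\mid uY\subseteq L}$. In particular $Y$ is an intersection of left quotients $u^{-1}L$. By Myhill--Nerode there are only finitely many such quotients, so there are finitely many factorisations.

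Define $\cal C_L$ as follows. Its states are the factorisations. The initial states are those with $\eps\in X$, and the final states are those with $\eps\in Y$. There is a transition $(X,Y)\arc a (X',Y')$ iff $XaY'\subseteq L$.

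The first step is to check that $L(\cal C_L)=L$.

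For $L\subseteq L(\cal C_L)$, take $w=w_1\dots w_n\in L$. Extend the pair $(\set{w_1\dots w_i},\set{w_{i+1}\dots w_n})$ to a factorisation $(X_i,Y_i)$ for each $i$. The conditions on initial and final states hold by construction. For the transitions, I would make the choice coherent: put $Y_i=\set{v\mid P_i v\subseteq L}$ with $P_i=\set{w_1\dots w_i}$, and let $X_i$ be the maximal left partner of $Y_i$. This is the same recipe as in the next step, so the transition check is identical.

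For $L(\cal C_L)\subseteq L$, take an accepting run $(X_0,Y_0)\arc{w_1}\cdots\arc{w_n}(X_n,Y_n)$. I would show by induction that $w_1\dots w_i\in X_i$. The base case is $\eps\in X_0$. For the inductive step, $w_1\dots w_{i-1}\in X_{i-1}$ and $X_{i-1}w_iY_i\subseteq L$ give $(w_1\dots w_i)Y_i\subseteq L$, and maximality of $X_i$ puts $w_1\dots w_i$ in $X_i$. Finally, $\eps\in Y_n$ gives $w\in X_nY_n\subseteq L$.

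The direction ($\Leftarrow$) of the proposition is then immediate. A homomorphism of automata preserves initial states, final states and transitions, so it maps accepting runs to accepting runs, and $L(A)\subseteq L(\cal C_L)=L$.

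For the direction ($\Rightarrow$), let $A$ be trim with $L(A)\subseteq L$. For a state $q$, let $P_q$ be the set of words labelling a path from an initial state to $q$; it is nonempty by trimness. Define
$$Y_q=\set{v\mid P_qv\subseteq L},\qquad X_q=\set{u\mid uY_q\subseteq L},$$
and set $\phi(q)=(X_q,Y_q)$. This is a factorisation, since $Y_q$ is already maximal for $X_q\supseteq P_q$. I would then check the three conditions.

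\textbf{Initial states.} If $q$ is initial then $\eps\in P_q$, so $Y_q\subseteq L$, so $\eps\in X_q$.

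\textbf{Final states.} If $q$ is final then $P_q\subseteq L(A)\subseteq L$, so $\eps\in Y_q$.

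\textbf{Transitions.} If $q\arc a q'$ then $P_qa\subseteq P_{q'}$. Hence $aY_{q'}\subseteq Y_q$, and therefore $X_qaY_{q'}\subseteq X_qY_q\subseteq L$, which is exactly the transition $\phi(q)\arc a\phi(q')$.

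The main obstacle is choosing $\phi(q)$ canonically. An arbitrary maximal extension of $(P_q,F_q)$, where $F_q$ is the set of future words of $q$, need not respect transitions. The fix is to define everything from the past sets $P_q$ via the quotient, so that the monotonicity $P_qa\subseteq P_{q'}$ propagates to $aY_{q'}\subseteq Y_q$. Trimness is needed only to ensure every $P_q$ is nonempty; otherwise $Y_q$ would be the set of all words and the argument would break down.
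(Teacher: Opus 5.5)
The paper gives no proof of this proposition. It quotes Carrez's result from \cite{ADN92} and never defines the automaton $\mathcal{C}_L$; the same symbol denotes the graph of \Cref{cor:hom-minimal}. Your argument fills that gap with the standard universal-automaton construction: states are the maximal factorisations $(X,Y)$, and transitions are given by $XaY'\subseteq L$. Your proof is correct. The individual steps all hold:
\begin{itemize}
\item Finiteness follows because $Y$ is an intersection of quotients and $Y$ determines $X$.
\item The induction $w_1\cdots w_i\in X_i$ gives $L(\mathcal{C}_L)\subseteq L$.
\item The canonical choice $Y_q=\set{v\mid P_qv\subseteq L}$, $X_q=\set{u\mid uY_q\subseteq L}$ is a maximal pair, and it turns $P_qa\subseteq P_{q'}$ into $aY_{q'}\subseteq Y_q$.
\end{itemize}
Compared with the bare citation, your version makes $\mathcal{C}_L$ explicit. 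The paper's next remark, about restricting the universal target to states that are both initial and terminal with bidirectional transitions, tacitly needs that construction.

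Two small corrections.

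First, the paragraph proving $L\subseteq L(\mathcal{C}_L)$ is not needed for the equivalence. ($\Leftarrow$) only uses $L(\mathcal{C}_L)\subseteq L$. The reverse inclusion also follows from your ($\Rightarrow$) construction applied to the path automaton of $w$. That construction is what your ``same recipe'' actually refers to; it is not the step that immediately follows.

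Second, your claim that the argument breaks down without trimness is wrong for your own definition of $\mathcal{C}_L$. If $P_q=\emptyset$, then $\phi(q)=(\set{u\mid u\Sigma^*\subseteq L},\Sigma^*)$, which is a legitimate maximal pair. Every check then goes through verbatim: such a $q$ is never initial, $\varepsilon\in\Sigma^*$, and the transition argument uses no nonemptiness. Trimness matters only if $\mathcal{C}_L$ keeps just the factorisations with both components nonempty. In that case accessibility gives $X_q\supseteq P_q\neq\emptyset$ and co-accessibility gives $Y_q\neq\emptyset$.
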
 

\section{Structure from the language}

In this section, we present two results showcasing how the notion of graph-based language sits at the intersection between graph and automata theory. 

The first part compares graph-based languages with languages from reversible automata and studies how one decomposes into the other. We will use the graph structure to get information about the language.

The second part introduces the notion of the remanent language of a (plain) graph, here, we use the language to get information about a graph.

\subsection{Decomposition into reversible automata}
		Reversible automata are a well-studied class of automata. In this section, we compare reversible automata with graph-based automata and show how the two notions interact. We show that any graph-based language can be obtained as the projection of the language of a reversible automaton, and furthermore, the alphabet needed has size at most $\sum_c \Delta_c$, where $\Delta_c$ is the maximum degree in colour $c$. In the literature, similar kinds of decompositions have been studied with local languages, see Medvedev's theorem \cite{M58} and \cite{CP12}.
	
	\begin{definition}
		A reversible automaton is a finite (possibly incomplete) automaton in which each letter induces a partial one-to-one map from the set of states into itself.
	\end{definition}

		We link this notion with graph-based automata corresponding to undirected edge-colored graphs.
	
	\begin{proposition}\label{prop:reversible-proper}
		For an undirected edge-colored graph $G$, the following are equivalent:
		\begin{itemize}
			\item $G$ corresponds to a reversible automaton
			\item $G$ is properly edge-colored
		\end{itemize}
	\end{proposition}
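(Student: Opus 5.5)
The plan is to unfold both definitions and observe that they impose the same local condition at every vertex. For an undirected edge-colored graph $G$, the automaton $A(G)$ has state set $V(G)$ and alphabet $C$, since $\tra[c]$ and $\tla[c]$ are identified in the undirected case. Its transitions are $\delta(u,c) = N_c(u) = \set{v \mid u \arc c v}$. The letter $c$ therefore induces the relation $R_c = \set{(u,v) \mid u \arc c v}$ on states, and this relation is symmetric because $G$ is undirected.

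First, I would reduce \emph{reversible} to a condition on $R_c$. By definition, $A(G)$ is reversible exactly when, for every color $c$, the relation $R_c$ is a partial one-to-one map. That means two things: $R_c$ is functional ($|N_c(u)| \le 1$ for all $u$), and it is injective (each $v$ has at most one $u$ with $u \arc c v$). Because $R_c$ is symmetric, injectivity is the same statement as functionality applied at $v$. So reversibility reduces to: for every vertex $u$ and every color $c$, $u$ has at most one $c$-neighbour.

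Second, I would translate \emph{properly edge-colored}. This means no two distinct edges sharing an endpoint carry the same color, which again says each vertex has at most one incident edge of each color. The two conditions then coincide, and both implications follow directly.

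The only real care point is the convention on multi-edges and loops. If $G$ has two parallel $c$-edges between $u$ and $v$, the automaton sees a single transition, so $A(G)$ is reversible while $G$ is arguably not properly colored. I would handle this either by noting that parallel edges of the same color are redundant for $L(G)$ and identifying them, or by reading ``properly edge-colored'' as a condition on the neighbour sets $N_c(u)$. A loop $u \arc c u$ gives the partial map $u \mapsto u$, which is fine as long as it is the only $c$-edge at $u$; this agrees with the natural reading of properness for loops. The paper also allows edges labelled by a set of colors. That case needs no special treatment, since each color is checked separately on the multigraph whose $c$-edges are those whose label contains $c$.
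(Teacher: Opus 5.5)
Your proposal is correct and follows essentially the paper's argument: both conditions reduce to the same local statement that each vertex has at most one $c$-neighbour for every colour $c$, and your observation that symmetry of the undirected transition relation makes injectivity follow from functionality matches the paper's remark that in- and out-degree in colour $c$ are both $0$ or both $1$. Your caveat about parallel edges of the same colour is a fair point that the paper's one-line proof passes over, though the paper's formalism $E(G) \subset V(G) \times V(G) \times C$ already excludes such duplicates.
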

	\begin{proof}
		Having two edges of colour $c$ sharing a vertex is forbidden in both notions. Conversely, a proper edge colouring will ensure that, for a given colour, the out- and in-degree of every vertex are either both 0 or both 1.
	\end{proof}
	
	\begin{theorem}\label{thm:reversible-decomposition}
		If $G$ is an undirected edge-colored graph, with the maximum degrees in colour $c$ being $\Delta_c$, then $L(G)$ is the direct image of a reversible language $R$ under a letter-to-letter morphism. Furthermore, $R$ uses an alphabet of size at most $\sum_c \Delta_c$.
	\end{theorem}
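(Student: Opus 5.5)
The plan is to refine the colouring of $G$ into a proper edge-colouring, colour by colour, and then invoke \Cref{prop:reversible-proper}. Fix a colour $c$ and let $G_c$ be the subgraph of $c$-coloured edges. Its maximum degree is $\Delta_c$. The aim is to split the colour $c$ into $\Delta_c$ new colours $c_1,\dots,c_{\Delta_c}$ so that each vertex has at most one incident edge of each $c_i$.

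A general multigraph need not be $\Delta_c$-edge-colourable (Vizing/Shannon bounds exceed $\Delta_c$), so colouring $G$ itself will not work. Instead I will replace $G$ by a graph with the same language. The plan is to pass to the bipartite double cover $H$ of $G$: vertices $V(G)\times\{0,1\}$, and an edge $(u,0)\arc{c}(v,1)$ and $(v,0)\arc{c}(u,1)$ for each edge $u\arc c v$. Loops become single edges between $(u,0)$ and $(u,1)$.

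By item 5 of \Cref{prop:basic-prop}, $L(H)=L(G)$, since $H$ is a cover. The covering map is a local isomorphism, so degrees in each colour are preserved and $\Delta_c(H)=\Delta_c(G)$. I must check this loop convention for the local-bijection requirement of \Cref{def:double-cover}. A loop contributes one to the neighbourhood count, and the cover edge likewise contributes one, so this is consistent with the bisimulation argument: a $c$-step from $u$ to $u$ lifts to a step between the two copies.

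Now $H_c$ is a bipartite multigraph of maximum degree $\Delta_c$. By König's edge-colouring theorem it admits a proper edge-colouring with $\Delta_c$ colours $c_1,\dots,c_{\Delta_c}$. Doing this for every $c$ gives a graph $H'$ over an alphabet of size $\sum_c\Delta_c$. By \Cref{prop:reversible-proper}, $H'$ is properly edge-coloured and hence a reversible automaton, with every state initial and final. Set $R=L(H')$.

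Let $\pi$ be the letter-to-letter morphism $c_i\mapsto c$. Then $\pi(R)=L(H)=L(G)$. Any run of $H'$ on $w$ is a run of $H$ on $\pi(w)$. Conversely, any run of $H$ on a word $v$ uses specific edges, and reading off their refined colours gives a word $w$ with $\pi(w)=v$ accepted by $H'$.

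I expect the main obstacle to be the non-bipartite case, which the double cover is meant to resolve, including the treatment of loops and multi-edges. A second point to check is whether "reversible language" allows multiple initial and final states. If it requires a unique initial state, I would use \Cref{prop:basic-prop} item 3 to argue on each connected component, or simply adopt the definition of reversible automaton given above, which does not restrict initial states.
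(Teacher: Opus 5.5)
Your proposal is correct and follows essentially the paper's own proof. You refine each colour class of a bipartite graph into $\Delta_c$ colours via König's line-colouring theorem, pass to the double cover $G \times K_2$ (same language by item 5 of \Cref{prop:basic-prop}), and conclude with \Cref{prop:reversible-proper} and the projection $c_i \mapsto c$. The paper treats bipartite $G$ directly and uses the cover only for non-bipartite $G$, whereas you always pass to the cover; this is a cosmetic difference, and your handling of loops and of the initial-state convention is consistent with the paper's definitions.
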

	
	\begin{proof}
		Consider the graph $G$ with the colour partition $\cal P$ of its edges. We will refine $\cal P$ into $\cal P'$ in a way that the degree of each vertex in $\cal P'$ is at most 1. Thus $\cal P'$ will correspond to a proper colouring of the underlying graph of $G$ and therefore to a reversible language. The associated letter-to-letter morphism is the one mapping every colour of $\cal P'$ to its original colour in $\cal P$.
		
		Now, we show how to get $\cal P'$ from $\cal P$. Consider a colour $c$ and $G_c$, the subgraph induced by the edges of colour $c$. By definition, $G_c$ has maximum degree $\Delta_c$. Using Vizing's theorem, there exists a $\Delta_c$ or $\Delta_c + 1$ proper edge-colouring of $G_c$. We call the newly coloured graph $G_c'$. Furthermore, if $G$ is bipartite, then $G_c$ is also bipartite and, by König's Line colouring theorem, we have a $\Delta_c$ proper edge colouring. 
		
		Suppose $G$ is bipartite. Then, putting all $G_c'$s together and using distinct colours every time, we get a graph $G'$ with $\sum_c \Delta_c$ colours that is properly edge-colored. Therefore, $L(G')$ is reversible.
		
		If $G$ is not bipartite, then consider the double cover of $G$, denoted $G \times K_2$, whose vertices are the $\set{(u,\eps) \mid u \in V(G), \epsilon \in \set{0,1}}$, and verifying $(u,\eps) \stackrel{c}\adj (v,\mu)$ if $u \stackrel{c} \adj  v$ and $\eps \neq \mu$. The double cover of $G$ is a cover of $G$ in the sense of \Cref{def:double-cover}, therefore $L(G) = L(G \times K_2)$ but $G \times K_2$ is bipartite, therefore the previous argument applies.
		%Using \Cref{prop:reversible-proper}, it is enough to refine the coloring of $G$ into a graph 
	\end{proof}
	
	This kind of decomposition is useful because it provides an unambiguous language for describing paths in a graph (or more precisely, paths in the language of the graph).
	The \Cref{fig:Fano_star} shows an 8-edge-colored graph $G$, whose maximum degree in color \emph{a} is 7. From \Cref{thm:reversible-decomposition}, we would expect that a decomposition of $L(G)$ into reversible languages to use an alphabet of size $14$. However, there exists a graph, $H$, with the same language as $G$ (when \emph{a,a',a''} are interpreted as \emph{a}), that admits a decomposition into reversible languages using only an alphabet of size $10$. $H$ is constructed based on the line-point incidence graph of the Fano plane.
	
	% The picture with a rainbow 7-star on the left and a Fano plane incidence graph on the right. 
	\begin{figure}[ht]
		\centering
		\scalebox{1}{\begin{tikzpicture}
	\begin{pgfonlayer}{nodelayer}
		\node [style=white vertex] (0) at (-7, 2.25) {};
		\node [style=white vertex] (1) at (-7, 0) {};
		\node [style=white vertex] (2) at (-8.75, 1.5) {};
		\node [style=white vertex] (3) at (-5.25, 1.5) {};
		\node [style=white vertex] (5) at (-4.75, -0.5) {};
		\node [style=white vertex] (7) at (-6, -2) {};
		\node [style=white vertex] (8) at (-9, -0.5) {};
		\node [style=white vertex] (9) at (-8, -2) {};
		\node [style=white vertex] (10) at (2.06, 2.74) {};
		\node [style=white vertex] (16) at (5.5, 4.4) {};
		\node [style=white vertex] (13) at (8.94, 2.74) {};
		\node [style=white vertex] (15) at (9.79, -0.98) {};
		\node [style=white vertex] (14) at (7.41, -3.97) {};
		\node [style=white vertex] (12) at (3.59, -3.97) {};
		\node [style=white vertex] (11) at (1.21, -0.98) {};
		\node [style=white vertex] (17) at (3.55, 0.45) {};
		\node [style=white vertex] (18) at (4.63, 1.8) {};
		\node [style=white vertex] (20) at (6.37, 1.8) {};
		\node [style=white vertex] (19) at (7.45, 0.45) {};
		\node [style=white vertex] (22) at (7.06, -1.25) {};
		\node [style=white vertex] (23) at (5.5, -2) {};
		\node [style=white vertex] (21) at (3.94, -1.25) {};
		\node [style=none] (24) at (-7, 3.75) {c};
		\node [style=none] (25) at (-4, 2.75) {d};
		\node [style=none] (26) at (-3.25, -1) {e};
		\node [style=none] (27) at (-5, -3.5) {f};
		\node [style=none] (28) at (-10, 2.75) {b};
		\node [style=none] (30) at (-9, -3.5) {g};
		\node [style=none] (31) at (-10.5, -1) {h};
		\node [style=none] (33) at (-7.5, 1.25) {a};
		\node [style=none] (35) at (-6, -0.75) {a};
		\node [style=none] (36) at (-6.5, 1.25) {a};
		\node [style=none] (38) at (-8, -0.75) {a};
		\node [style=none] (39) at (-7, -1) {a};
		\node [style=none] (40) at (-8.25, 0.5) {a};
		\node [style=none] (41) at (-5.75, 0.5) {a};
		\node [style=none] (42) at (0.65, 3.88) {b};
		\node [style=none] (43) at (5.5, 6.2) {c};
		\node [style=none] (44) at (10.35, 3.88) {d};
		\node [style=none] (45) at (11.55, -1.39) {e};
		\node [style=none] (46) at (8.19, -5.59) {f};
		\node [style=none] (47) at (2.81, -5.59) {g};
		\node [style=none] (48) at (-0.55, -1.39) {h};
		\node [style=none] (49) at (5.25, 2.5) {a};
		\node [style=none] (50) at (4.75, 3.25) {a'};
		\node [style=none] (51) at (6.5, 3.25) {a''};
		\node [style=none] (52) at (-12, 0) {$G$};
	\end{pgfonlayer}
	\begin{pgfonlayer}{edgelayer}
		\draw [thick] (1) to (0);
		\draw [thick] (1) to (3);
		\draw [thick] (1) to (5);
		\draw [thick] (1) to (7);
		\draw [thick] (1) to (9);
		\draw [thick] (8) to (1);
		\draw [thick] (1) to (2);
		\draw [style={blue_edge}] (10) to (17);
		\draw [style={blue_edge}] (13) to (20);
		\draw [style={blue_edge}] (12) to (23);
		\draw [style={blue_edge}] (22) to (14);
		\draw [style={blue_edge}] (15) to (19);
		\draw [style={blue_edge}] (21) to (11);
		\draw [style=edge red, bend right=15, looseness=1.25] (10) to (19);
		\draw [style=edge red, bend right=15, looseness=1.25] (11) to (20);
		\draw [style=edge red, bend left=15, looseness=1.25] (21) to (15);
		\draw [style=edge red, bend right=15, looseness=1.25] (16) to (22);
		\draw [style=edge red, bend left=15, looseness=1.25] (23) to (13);
		\draw [style=edge red, bend left=15, looseness=1.25] (17) to (14);
		\draw [style=edge red, bend right=15, looseness=1.25] (12) to (18);
		\draw [style=green edge] (11) to (17);
		\draw [style=green edge] (10) to (18);
		\draw [style=green edge] (19) to (13);
		\draw [style=green edge] (21) to (12);
		\draw [style=green edge] (20) to (16);
		\draw [style=green edge] (15) to (22);
		\draw [style=green edge] (14) to (23);
		\draw [style={blue_edge}] (18) to (16);
		\draw [in=135, out=45, loop] (0) to ();
		\draw [in=90, out=0, loop] (3) to ();
		\draw [in=-60, out=30, loop] (5) to ();
		\draw [in=-105, out=-15, loop] (7) to ();
		\draw [in=-75, out=-165, loop] (9) to ();
		\draw [in=150, out=-120, loop] (8) to ();
		\draw [in=-180, out=90, loop] (2) to ();
		\draw [in=90, out=180, loop] (10) to ();
		\draw [in=135, out=45, loop] (16) to ();
		\draw [in=90, out=0, loop] (13) to ();
		\draw [in=30, out=-60, loop] (15) to ();
		\draw [in=-15, out=-105, loop] (14) to ();
		\draw [in=-75, out=-165, loop] (12) to ();
		\draw [in=150, out=-120, loop] (11) to ();
	\end{pgfonlayer}
\end{tikzpicture}}
		\caption{A graph with $\Delta_a=7, |\Sigma| = 8$ with a reversible decomposition of the language using 10 colours instead of 14.}
		\label{fig:Fano_star}
	\end{figure}

\subsection{Remanent language}
	Most of the paper deals with \emph{decorated} graphs. In this section, we discuss how we can get information about an \emph{plain} graph (monochromatic and undirected) by studying all of its possible orientations.
	\begin{definition}
		Given a plain graph $G$, the \emph{remanent language} of $G$ is $\rem (G) = \bigcap_{\vec G} L(\vec G)$ where $\vec G$ spans all possible orientations of $G$.
	\end{definition}
	Recall from \Cref{prop:directed-cycle-necessary} that only acyclic orientations have $L(\vec G) \neq \Sigma^*$, therefore, it is enough to consider acyclic orientations only.
	The remanent language of $G$ is a graph-based language, therefore, there is an oriented graph, $\remgraph (G)$, whose language is the remanent language of $G$.
	
	\begin{figure}[ht]
		\centering
		\scalebox{1}{\begin{tikzpicture}
	\begin{pgfonlayer}{nodelayer}
		\node [style=black empty] (0) at (-8.5, 2) {};
		\node [style=black empty] (1) at (-6.75, 3.25) {};
		\node [style=black empty] (2) at (-5, 2) {};
		\node [style=black empty] (3) at (-7.75, 0) {};
		\node [style=black empty] (4) at (-5.75, 0) {};
		\node [style=black empty] (5) at (-3.75, 2) {};
		\node [style=black empty] (6) at (-2, 3.25) {};
		\node [style=black empty] (7) at (-0.25, 2) {};
		\node [style=black empty] (8) at (-3, 0) {};
		\node [style=black empty] (9) at (-1, 0) {};
		\node [style=black empty] (10) at (-13.5, 2) {};
		\node [style=black empty] (11) at (-11.75, 3.25) {};
		\node [style=black empty] (12) at (-10, 2) {};
		\node [style=black empty] (13) at (-12.75, 0) {};
		\node [style=black empty] (14) at (-10.75, 0) {};
		\node [style=small black empty] (15) at (2, 0) {};
		\node [style=small black empty] (16) at (2.5, 1) {};
		\node [style=small black empty] (17) at (3, 2) {};
		\node [style=small black empty] (18) at (3.5, 1) {};
		\node [style=small black empty] (19) at (4.5, 1) {};
		\node [style=small black empty] (20) at (5, 2) {};
		\node [style=small black empty] (21) at (5.5, 3) {};
		\node [style=small black empty] (22) at (6, 2) {};
		\node [style=small black empty] (23) at (6.5, 1) {};
		\node [style=small black empty] (24) at (4, 2) {};
		\node [style=small black empty] (26) at (7, 2) {};
		\node [style=small black empty] (27) at (7.5, 3) {};
		\node [style=small black empty] (28) at (8, 2) {};
		\node [style=small black empty] (29) at (8.5, 3) {};
		\node [style=small black empty] (30) at (9, 2) {};
		\node [style=small black empty] (31) at (9.5, 3) {};
		\node [style=small black empty] (32) at (10, 4) {};
		\node [style=small black empty] (33) at (10.5, 3) {};
		\node [style=small black empty] (34) at (11, 2) {};
		\node [style=small black empty] (36) at (11.5, 3) {};
		\node [style=small black empty] (37) at (12, 4) {};
		\node [style=small black empty] (38) at (12.5, 3) {};
		\node [style=small black empty] (39) at (13.5, 3) {};
		\node [style=small black empty] (40) at (13, 4) {};
		\node [style=small black empty] (42) at (14, 4) {};
		\node [style=small black empty] (43) at (14.5, 5) {};
		\node [style=small black empty] (44) at (15, 4) {};
	\end{pgfonlayer}
	\begin{pgfonlayer}{edgelayer}
		\draw [style=Arc] (0) to (1);
		\draw [style=Arc] (1) to (2);
		\draw [style=Arc] (0) to (3);
		\draw [style=Arc] (3) to (4);
		\draw [style=Arc] (4) to (2);
		\draw [style=Arc] (5) to (6);
		\draw [style=Arc] (5) to (8);
		\draw [style=Arc] (9) to (7);
		\draw [style=Arc] (6) to (7);
		\draw [style=Arc] (9) to (8);
		\draw [style=Arc] (10) to (11);
		\draw [style=Arc] (10) to (13);
		\draw [style=Arc] (13) to (14);
		\draw [style=Arc] (14) to (12);
		\draw [style=Arc] (12) to (11);
		\draw [style=Arc] (15) to (16);
		\draw [style=Arc] (16) to (17);
		\draw [style=Arc] (18) to (17);
		\draw [style=Arc] (19) to (20);
		\draw [style=Arc] (20) to (21);
		\draw [style=Arc] (22) to (21);
		\draw [style=Arc] (23) to (22);
		\draw [style=Arc] (26) to (27);
		\draw [style=Arc] (28) to (27);
		\draw [style=Arc] (23) to (26);
		\draw [style=Arc] (30) to (31);
		\draw [style=Arc] (31) to (32);
		\draw [style=Arc] (33) to (32);
		\draw [style=Arc] (34) to (33);
		\draw [style=Arc] (28) to (29);
		\draw [style=Arc] (30) to (29);
		\draw [style=Arc] (18) to (24);
		\draw [style=Arc] (19) to (24);
		\draw [style=Arc] (36) to (37);
		\draw [style=Arc] (38) to (37);
		\draw [style=Arc] (38) to (40);
		\draw [style=Arc] (39) to (40);
		\draw [style=Arc] (34) to (36);
		\draw [style=Arc] (42) to (43);
		\draw [style=Arc] (44) to (43);
		\draw [style=Arc] (39) to (42);
		\draw [style=Arc, in=-90, out=0, looseness=0.75] (15) to (44);
	\end{pgfonlayer}
\end{tikzpicture}}
		\caption{The acyclic orientations of $C_5$ (left) and $\remgraph (C_5)$ (right) }
		\label{fig:rem_ex}
	\end{figure}
	
	The main motivation for remanent languages is the following property.
	
	\begin{proposition}\label{prop:rem-duality}
	If $(P, \cal G_P)$ is a duality pair (as in \Cref{thm:duality-single-tree}) between an oriented path and a graph, with $U(\cal G_P)$ is the underlying graph of $\cal G_P$ :
	$$P \in \rem(G) \iff G \not \ra U(\cal G_P)$$
	\end{proposition}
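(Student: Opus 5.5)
The statement is a chain of three equivalences, and I will prove it by establishing each one in turn. Write $P$ for the oriented path. By item 2 of \Cref{prop:basic-prop}, the condition $P \in \rem(G)$ says that $P \ra \vec G$ for every orientation $\vec G$ of $G$. By the duality $(P,\cal G_P)$, this holds if and only if no orientation $\vec G$ maps to $\cal G_P$. So it remains to prove
$$\left(\exists \vec G,\ \vec G \ra \cal G_P\right) \iff G \ra U(\cal G_P),$$
where the left side ranges over orientations of the plain graph $G$ and $U(\cal G_P)$ is the underlying plain graph. Negating both sides of this equivalence gives the proposition.

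For the forward direction, assume $\phi : \vec G \ra \cal G_P$ for some orientation $\vec G$. Every edge $u \adj v$ of $G$ appears in $\vec G$ as an arc $u \ra v$ or $v \ra u$. Since $\phi$ is a homomorphism, that arc is sent to an arc between $\phi(u)$ and $\phi(v)$ in $\cal G_P$. Forgetting orientations, this gives $\phi(u) \adj \phi(v)$ in $U(\cal G_P)$. Hence the same vertex map $\phi$ is a homomorphism $G \ra U(\cal G_P)$.

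For the backward direction, assume $\psi : G \ra U(\cal G_P)$. For each edge $uv$ of $G$, the pair $\psi(u)\psi(v)$ is an edge of $U(\cal G_P)$, so it comes from at least one arc of $\cal G_P$. If that arc is $\psi(u) \ra \psi(v)$, orient the edge as $u \ra v$; otherwise orient it as $v \ra u$. When both arcs are present, either choice works. With this orientation $\vec G$, the map $\psi$ sends every arc to an arc, so $\psi : \vec G \ra \cal G_P$ is a homomorphism.

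One case needs care: $U(\cal G_P)$ might contain loops. If $\psi(u)=\psi(v)$ for an edge $uv$, the chosen orientation is simply sent onto the loop, and the argument still goes through. This is consistent with the convention that loops and multi-edges are allowed. I do not expect a real obstacle in this proof. The only subtle point is to check that "every orientation" in $\rem(G)$ ranges over the same family of orientations used in the backward construction, and it does, since all orientations of $G$ are allowed. If a restriction to acyclic orientations were imposed, the remark after the definition of $\rem$ shows that this does not change the language.
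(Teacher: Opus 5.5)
Your proof is correct and follows the same route as the paper. Both apply the duality to each orientation $\vec G$, intersect over all orientations, and then use that some orientation of $G$ maps to $\cal G_P$ if and only if $G \ra U(\cal G_P)$. The paper only asserts that last equivalence, whereas you check both directions: you forget orientations one way, and in the other you orient each edge along a chosen arc of $\cal G_P$.
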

	\begin{proof}
	For $\vec G$ an orientation of $G$, the definition of a duality gives us: $P \in L(\vec G) \iff {\vec G \not \ra \cal G_P}$. By intersecting over all possible $\vec G$, we get that $P \in \rem(G)$ if and only if no orientation of $G$ can map to $\cal G_P$, which is equivalent to the fact that $G$ does not map to $U(\cal G_P)$.
	\end{proof}
	Using the duality between the directed path $\vec P_k = \tra^{k-1}$ and the transitive tournament $TT_k$, we get:
	\begin{corollary}
		For a plain graph $G$ and integer $k$
		$$\vec P_k \in \rem(G) \iff \chi(G) \geq k$$
	\end{corollary}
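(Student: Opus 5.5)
The plan is to apply \Cref{prop:rem-duality} with $P=\vec P_k$, the directed path with $k$ vertices, after identifying its dual. So the first step is to establish that $(\vec P_k, TT_{k-1})$ is a duality pair. Here $TT_{k-1}$ denotes the transitive tournament on $k-1$ vertices, so that the statement reads $\vec P_k \in \rem(G) \iff \chi(G) \geq k$. Concretely, for any oriented graph $\vec G$ we need
$$\vec P_k \not\ra \vec G \iff \vec G \ra TT_{k-1}.$$
This is the classical Gallai--Roy type duality.

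For the direction from right to left, a homomorphism $\vec G \ra TT_{k-1}$ strictly increases a linear order of $k-1$ elements along every arc. The image of a directed path on $k$ vertices would then need $k$ distinct values, which is impossible.

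For the direction from left to right, suppose $\vec G$ contains no homomorphic image of $\vec P_k$, that is, no directed walk with $k-1$ arcs. By \Cref{prop:directed-cycle-necessary}, $\vec G$ has no directed cycle. Define $\ell(u)$ as the number of vertices of a longest directed path ending at $u$. Then $1 \le \ell(u)\le k-1$, and every arc $u\to v$ satisfies $\ell(u)<\ell(v)$. Hence $\ell$ is a homomorphism to $TT_{k-1}$, with vertices $1,\dots,k-1$ and arcs $i\to j$ for $i<j$.

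Applying \Cref{prop:rem-duality}, $\vec P_k\in\rem(G)\iff G\not\ra U(TT_{k-1})=K_{k-1}$, and $G\ra K_{k-1}$ is exactly $\chi(G)\le k-1$. This gives the equivalence with $\chi(G)\ge k$.

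The main obstacle is being careful with the statement of \Cref{prop:rem-duality}. Its proof uses that $G\ra U(\cal G_P)$ holds if and only if some orientation of $G$ maps to $\cal G_P$. This is true because we can orient each edge of $G$ according to the arc of $\cal G_P$ between the images of its endpoints. This requires that $U(\cal G_P)$ has no loops that could absorb edges. That holds for $TT_{k-1}$, and since $TT_{k-1}$ is a tournament, every edge of $K_{k-1}$ is realized by exactly one arc.

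Another point needing care is the indexing convention: $\vec P_k=\tra^{k-1}$ has $k$ vertices, and we must pair it with $TT_{k-1}$. The corollary's phrase "duality between $\vec P_k$ and $TT_k$" should be read with this off-by-one in mind.
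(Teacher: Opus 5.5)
Your proof is correct and follows the paper's route. You apply \Cref{prop:rem-duality} to the directed-path/transitive-tournament duality, and you additionally prove that duality via the longest-path labelling. You are right that the dual of $\vec P_k$ (which has $k$ vertices and $k-1$ arcs) is $TT_{k-1}$, not the $TT_k$ named in the paper's one-line justification: $\vec P_k \rightarrow TT_k$, so that pair is not a duality, and using $TT_k$ would give $\chi(G)\geq k+1$.

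One small overcaution: the equivalence ``$G\rightarrow U(\mathcal G_P)$ iff some orientation of $G$ maps to $\mathcal G_P$'' needs no loop-freeness hypothesis. A loop arc $x\rightarrow x$ accepts an edge oriented either way, and a dual $\mathcal G_P$ can never have a loop anyway, since $P$ would map to it.
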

	
	\Cref{prop:rem-duality} raises the question of which undirected graphs correspond to some $U(\cal G_P)$.
	The question for tree families was studied by Hell, Nešetřil and Zhu in \cite{HNZ96}. 
	And in \cite{SC21}, it is proved that some orientation of the odd cycle $C_{2k+1}$ was in duality pair with the oriented path $\tra (\tra \tla)^{k-1}\tra\tra$.
	
	The height of an oriented path or cycle, denoted $h(\bullet)$, is the (absolute) difference between the number of forward and backward edges. 	
	
	\begin{proposition}\label{prop:big-height-rem}
		There are oriented paths of arbitrarily large height in $\rem(C_{2k+1})$.
	\end{proposition}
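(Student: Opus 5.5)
The plan is to reduce the statement to the existence of a single closed walk of nonzero height in one finite oriented graph, and then build that walk.

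\textbf{Step 1: reduce to one finite graph.} By the remark after the definition of $\rem$, only the finitely many acyclic orientations $\vec G_1,\dots,\vec G_m$ of $C_{2k+1}$ matter. By item 4 of \Cref{prop:basic-prop},
$$\rem(C_{2k+1}) = L\Big(\prod_{i} \vec G_i\Big) =: L(\Pi).$$
So it suffices to find paths of arbitrarily large height mapping to the finite oriented graph $\Pi$. As in the proof of \Cref{prop:directed-cycle-necessary}, a pigeonhole argument shows this happens if and only if $\Pi$ contains a closed walk $W$ of nonzero height. Going around $W$ $n$ times gives a walk of height $n\,|h(W)|$, and any walk is accepted as a path by \Cref{prop:basic-prop}.

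\textbf{Step 2: heights in a single orientation.} In each $\vec G_i$, traversing the cycle once gives a closed walk of height $h_i=\#\text{forward}-\#\text{backward}$. This is a sum of $2k+1$ terms $\pm1$, hence odd and in particular nonzero. More usefully, I would unwrap the cycle. A path $P$ maps to $\vec G_i$ if and only if it maps to the periodic oriented line $Z_i$, the universal cover of $\vec G_i$, which is a cover in the sense of \Cref{def:double-cover}. Item 5 of \Cref{prop:basic-prop} justifies this. Each $Z_i$ has a well-defined height function, and its height grows linearly, with slope $h_i/(2k+1)\neq 0$ per step along the line.

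\textbf{Step 3: build a common walk (main obstacle).} The difficulty is that the winding word is different in each orientation, and a closed walk in $\Pi$ needs one common word $w$ that is simultaneously a closed walk, up to translation by a deck transformation, in every factor. After possibly reversing the direction of traversal, we may assume every $h_i>0$.

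My first attempt is to use the structure of acyclic orientations of an odd cycle. Since every $\vec G_i$ is acyclic and not bipartite, each one contains a directed path of length $2$. Moreover, its unwrapping $Z_i$ consists of maximal directed runs separated by sources and sinks, and the upward runs outweigh the downward ones within each period.

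I would then look for a universal zigzag word $w = \tra^{a}(\tla^{b}\tra^{a})^N$, or more generally a word built by iterating a fixed pattern. The requirement is that it can be read in each $Z_i$ by a walk that, after every block, ends strictly further along the line, with a net gain in height. Reading in $Z_i$ is nondeterministic, so the walk may retreat inside a block; I would choose $a$ and $b$ as functions of $k$ so that a greedy reading always succeeds. If this greedy construction gets stuck, the fallback is an inductive argument. Given a path $P_H$ of height $H$ that maps to every $Z_i$ with its endpoint at a position where an upward step is available, I would append a short gadget of positive height that is readable from every such position.

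Once a common word $w$ of positive height is readable in every $Z_i$, it projects to a walk in $\Pi$ with nonzero height. Iterating it, together with Step 1, finishes the proof. I expect essentially all the work to lie in this common-word construction of Step 3.
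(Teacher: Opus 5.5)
\textbf{There is a genuine gap in Step 3.} Your Steps 1 and 2 are a correct reduction: $\rem(C_{2k+1})=L(\Pi)$, and it suffices to produce, for each $n$, one word of height $n$ accepted by every $\vec G_i$ (equivalently, by every unwrapped line $Z_i$). But you never carry out Step 3, which you yourself say carries essentially all the work, and it is exactly the nontrivial content of the proposition.

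The concrete candidate you propose cannot work. Take $k\ge 2$ and the orientation $\vec C=\tra\tra(\tla\tra)^{k-1}\tla$ of $C_{2k+1}$, i.e.\ $v_0\to v_1\to v_2\leftarrow v_3\to v_4\leftarrow\cdots$, which has height $1$.
\begin{enumerate}
\item Its only directed walk of length $2$ is $v_0v_1v_2$, and $v_2$ is a sink, so $\tra^a$ is readable only for $a\le 2$.
\item $\tra\tra\tla\tra\tra\notin L(\vec C)$: after $v_0v_1v_2$, a $\tla$-step leads to $v_1$ or $v_3$, and neither starts a directed $2$-walk.
\end{enumerate}
Hence every word $\tra^{a}(\tla^{b}\tra^{a})^N$ in $L(\vec C)$, and so in $\rem(C_{2k+1})$, has height at most $2$, whatever $a,b$ you choose. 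Your fallback does not help either. A gadget readable simultaneously from the current positions in all the $Z_i$ is precisely a height-gaining walk in $\Pi$ from the corresponding vertex, so asking for one just restates what has to be proved.

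\textbf{The missing ingredient.} You need a lemma guaranteeing that finitely many unbalanced oriented graphs share a walk of nonzero height. The paper imports it from the Poljak--R\"odl fact $f(3)=3$ for oriented graphs. The product of two oriented odd cycles contains an oriented odd cycle, so iterating gives an odd cycle $C_0$ in $\prod\vec C$. Its height is odd, hence nonzero, and winding around $C_0$ finishes.

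If you prefer to keep your height-based route, you can replace Step 3 by a parity argument.
\begin{enumerate}
\item Fix $n$. Orient each $Z_i$ so that levels tend to $+\infty$ forwards. Let $a_i$ be the last vertex at level $\le 0$ (it has level $0$), and $b_i$ the first vertex after it at level $n$.
\item Then $P_i=Z_i[a_i,b_i]$ has all levels in $[0,n]$, with $a_i$ its only vertex at level $0$ and $b_i$ its only vertex at level $n$.
\item Let $\Theta$ be the part of $\prod_i P_i$ induced on tuples whose coordinates all have the same level. A tuple $(x_i)$ has degree $\prod_i u(x_i)+\prod_i d(x_i)$ in $\Theta$, where $u,d$ count out- and in-neighbours.
\item For a tuple at a level strictly between $0$ and $n$, every coordinate has $u+d=2$. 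The degree is then $2$ if all coordinates have $u=d=1$. Otherwise one product vanishes and the other is $0$ or even, so the degree is always even.
\item The tuples $(a_i)$ and $(b_i)$ have degree $1$, so by the handshake lemma they lie in the same component of $\Theta$.
\item A walk from $(a_i)$ to $(b_i)$ in $\Theta$ reads a word of height $n$ accepted by every $Z_i$. Hence it is accepted by every $\vec G_i$ and lies in $\rem(C_{2k+1})$.
\end{enumerate}
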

	\begin{proof}
		We start by recalling the following fact :
		\begin{quote}
			If $C$ and $C'$ are two oriented odd cycles, then $C \times C'$ contains an oriented odd cycle.
		\end{quote}
		This is a consequence of $f(3)=3$ where $f$ is the Poljak–Rödl function for oriented graphs \cite[Thm 3.3]{SR81} %\Cyril{ On the Arc-Chromatic Number of a Digraph, Thm 3.3}.
		 Iterating the result, we get that there is an odd cycle $C_0$ in the product $\prod \vec C$ where $\vec C$ ranges over all orientations of $C_{2k+1}$.
		Since $C_0$ is odd, it must have odd, and thus non-zero, height. Let $P_0$ be an oriented path that winds around $C_0$ an arbitrary number of times. Then $P_0$ has arbitrarily large height and satisfies $P_0 \ra C_0 \ra \prod \vec C$. Thus, $P_0 \in \rem(C_{2k+1})$.
	\end{proof}
	
	 We say that $\rem(G)$ is trivial if $\rem(G) \subset L(P)$ for some oriented path $P$.
	
	\begin{theorem}
		$\rem(G)$ is trivial if and only if $\rem(G) = L(\vec K_2)$ if and only if $G$ is bipartite.
	\end{theorem}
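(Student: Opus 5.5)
The plan is to prove the cycle of implications: $G$ bipartite $\Ra$ $\rem(G) = L(\vec K_2)$ $\Ra$ $\rem(G)$ trivial $\Ra$ $G$ bipartite. Throughout I assume $G$ has at least one edge; otherwise every orientation is edgeless, the remanent language only contains the empty word, and the statement degenerates. The middle implication is immediate, since $\vec K_2$ is itself an oriented path.

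\emph{Bipartite implies $\rem(G)=L(\vec K_2)$.} First, every orientation $\vec G$ contains an arc, so $\vec K_2 \ra \vec G$. Composing homomorphisms with \Cref{prop:basic-prop} (item 2) then gives $L(\vec K_2)\subset L(\vec G)$, and hence $L(\vec K_2)\subset \rem(G)$. Conversely, let $(A,B)$ be a bipartition of $G$ and orient every edge from $A$ to $B$. This orientation $\vec G_0$ maps to $\vec K_2$ by sending $A$ to the tail and $B$ to the head. Therefore $\rem(G)\subset L(\vec G_0)\subset L(\vec K_2)$.

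\emph{Trivial implies bipartite.} I argue by contraposition. Suppose $G$ is not bipartite, so it contains an odd cycle $C_{2k+1}$ as a subgraph. Any orientation $\vec G$ restricts to an orientation $\vec C$ of this cycle, and $\vec C\subset \vec G$ gives $L(\vec C)\subset L(\vec G)$. Intersecting over all orientations yields $\rem(C_{2k+1})\subset \rem(G)$. By \Cref{prop:big-height-rem}, $\rem(G)$ therefore contains oriented paths of arbitrarily large height.

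It remains to show that for a fixed oriented path $P$, all paths in $L(P)$ have height at most $|E(P)|$. This is the only step needing a small argument, and I expect it to be the main (though mild) obstacle. Define a level function $\ell : V(P)\ra \mathbb Z$ by setting $\ell(v)=\ell(u)+1$ whenever $u\ra v$. Such a function exists because $P$ is a tree, and its values lie in an interval of length at most $|E(P)|$. If $Q\ra P$ via $\phi$, then $\ell\circ\phi$ is a level function on $Q$ as well. Hence $h(Q)=|\ell(\phi(\text{end}))-\ell(\phi(\text{start}))|\le |E(P)|$. Consequently $\rem(G)\not\subset L(P)$ for every oriented path $P$, so $\rem(G)$ is not trivial, which completes the cycle.
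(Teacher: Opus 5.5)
Your proof is correct and follows the paper's own argument. For bipartite $G$, an orientation mapping to $\vec K_2$ squeezes $\rem(G)$ between $L(\vec K_2)$ and itself; for non-bipartite $G$, you combine \Cref{prop:big-height-rem} with the fact that $L(P)$ contains only paths of bounded height. You also make explicit two steps the paper leaves implicit: the inclusion $\rem(C_{2k+1})\subset\rem(G)$, obtained by restricting orientations to the odd cycle, and a level-function argument that rigorously justifies the paper's loosely stated claim that homomorphisms preserve height.
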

	\begin{proof}
		If $G$ contains no arc, then the result is trivial, so we assume $G$ to contain an arc, in particular $L(\vec K_2) \subset \rem(G)$.
		
		If $G$ is bipartite, then $G \ra K_2$, and there exists an orientation $\vec G$ of $G$ such that $\vec G \ra \vec K_2$. Then $\rem(G) \subset L(\vec G) \subset L(\vec K_2) \subset \rem(G)$. We have that $\rem(G)$ is trivial.
		
		If $G$ is non-bipartite, then it contains an odd cycle, so by \Cref{prop:big-height-rem}, $\rem(G)$ contains oriented paths of arbitrarily large height. But, observe that homomorphisms preserve height, in the sense, that if $\phi : P \ra P'$ for two oriented paths $P$ and $P'$, then $h(P) = h(\phi(P))$. So $L(P)$ contains only oriented paths of bounded height. Thus $\rem(G)$ cannot be trivial.
	\end{proof}

	\paragraph{Acknowledgements:} I would like to thank Marie Fortin and Daniela Petrisan   for their guidance and exchanges. I would also like to thank Killian Barbé, Samuel Coulomb and Luc Passemard for their interest and the discussions we had together.	

%\clearpage
\bibliographystyle{plain}
\bibliography{biblio}
	
\end{document}